\numberwithin{equation}{section}
\def\algbackskip{\hskip-\ALG@thistlm}
\def\thanks#1{\protected@xdef\@thanks{\@thanks
        \protect\footnotetext{#1}}}
\definecolor{dgreen}{RGB}{0,150,0}
\newcommand{\bluebf}[1]{\textcolor{ceruleanblue}{{\bf{#1}}}}
\let\Oldsection\section{}
\renewcommand{\section}{\FloatBarrier\Oldsection}
\let\Oldsubsection\subsection{}
\renewcommand{\subsection}{\FloatBarrier\Oldsubsection}
\let\Oldsubsubsection\subsubsection{}
\renewcommand{\subsubsection}{\FloatBarrier\Oldsubsubsection}
\newtheorem{theorem}{Theorem}[section]
\newtheorem{corollary}{Corollary}[theorem]
\newtheorem{lemma}[theorem]{Lemma}
\theoremstyle{definition}
\newtheorem{definition}[theorem]{Definition}
\theoremstyle{remark}
\newtheorem{remark}{Remark}[section]
\theoremstyle{definition}
\newtheorem{example}{Example}[section]
\newcommand{\sprod}[2]{\left\langle #1, #2 \right\rangle}
\newcommand{\cX}{\mathcal{X}}
\newcommand{\cH}{\mathcal{H}}
\newcommand{\bR}{\mathbb{R}}
\newcommand{\bfX}{\mathbf{X}}
\newcommand{\bfY}{\mathbf{Y}}
\newcommand{\dD}{\, \mathrm{d}}
\newcommand{\cS}{\mathcal{S}}
\definecolor{steelblue}{rgb}{0.27, 0.51, 0.71}
\definecolor{mediumelectricblue}{rgb}{0.01, 0.31, 0.59}
\definecolor{halayaube}{rgb}{0.4, 0.22, 0.33}
\definecolor{eggplant}{rgb}{0.38, 0.25, 0.32}
\definecolor{darkraspberry}{rgb}{0.53, 0.15, 0.34}
 \definecolor{bluepigment}{rgb}{0.2, 0.2, 0.6}
 \definecolor{spirodiscoball}{rgb}{0.06, 0.75, 0.99}
 \definecolor{vividburgundy}{rgb}{0.62, 0.11, 0.21}
 \definecolor{mediumtealblue}{rgb}{0.0, 0.33, 0.71}
 \definecolor{mediumvioletred}{rgb}{0.78, 0.08, 0.52}
 \definecolor{egyptianblue}{rgb}{0.06, 0.2, 0.65}
 \definecolor{dodgerblue}{rgb}{0.12, 0.56, 1.0}
 \definecolor{ceruleanblue}{rgb}{0.16, 0.32, 0.75}
 \definecolor{darktangerine}{rgb}{1.0, 0.66, 0.07}
 \definecolor{charcoal}{rgb}{0.21, 0.27, 0.31}
 \definecolor{persianindigo}{rgb}{0.2, 0.07, 0.48}
 \definecolor{orangepeel}{rgb}{1.0, 0.62, 0.0}
 \definecolor{awesome}{rgb}{1.0, 0.13, 0.32}
 \definecolor{green(ncs)}{rgb}{0.0, 0.62, 0.42}
\definecolor{to-do}{rgb}{1.0, 0.13, 0.32} %awesome
\definecolor{comment}{rgb}{0.01, 0.28, 1.0} %blue(ryb)
 \definecolor{nico}{rgb}{1.0, 0.62, 0.0} %orange
\definecolor{owen}{rgb}{0.01, 0.28, 1.0} %blue(ryb)
\definecolor{old}{rgb}{0.43, 0.5, 0.5} %grey(ryb)
 \definecolor{blanka}{rgb}{0.0, 0.62, 0.42}
\DeclareMathAlphabet{\mathpzc}{OT1}{pzc}{m}{it}
\newcommand{\Ex}{\mathbb{E}}
\newcommand{\N}{\mathbb{N}}
\renewcommand{\P}{\mathbb{P}}
\newcommand{\R}{\mathbb{R}}
\newcommand{\bX}{\mathbb{X}}
\newcommand{\F}{\mathcal{F}}
\renewcommand{\H}{\mathcal{H}}
\newcommand{\I}{\mathcal{I}}
\newcommand{\J}{\mathcal{J}}
\newcommand{\K}{\mathcal{K}}
\newcommand{\X}{\mathcal{X}}
\newcommand{\V}{\mathcal{V}}
\newcommand{\W}{\mathcal{W}}
\newcommand{\kappasig}{\kappa_{\textup{sig}}}
\newcommand{\ExP}{\Ex_\P}
\DeclareMathOperator*{\argmax}{argmax}
\DeclareMathOperator*{\argmin}{argmin}
\let\norm\relax
\DeclarePairedDelimiter{\norm}{\lVert}{\rVert}
\DeclarePairedDelimiterX{\inp}[2]{\langle}{\rangle}{#1, #2}
\definecolor{gray75}{gray}{0.75}
\titleformat{\chapter}[hang]{\Huge\bfseries}{\thechapter\hspace{20pt}\textcolor{gray75}{|}\hspace{20pt}}{0pt}{\Huge\bfseries}
\titlespacing*{\chapter}      {0pt}{0pt}{25pt}
\def\ps@myheadings{%
    \let\@oddfoot\@empty\let\@evenfoot\@empty
    \def\@evenhead{\thepage\hfil\slshape\leftmark}%
    \def\@oddhead{{\slshape\rightmark}\hfil\thepage}%
    \let\@mkboth\@gobbletwo
    \let\sectionmark\@gobble
    \let\subsectionmark\@gobble
    }
  \renewcommand\maketitle{\begin{titlepage}%
  \let\footnotesize\small
  \let\footnoterule\relax
  \let \footnote \thanks
  \null\vfil
  \vskip 60\p@
  \begin{center}%
    {\LARGE \@title \par}%
    \vskip 3em%
    {\large
     \lineskip .75em%
      \begin{tabular}[t]{c}%
        \@author
      \end{tabular}\par}%
      \vskip 1.5em%
    {\large \@date \par}%       % Set date in \large size.
  \end{center}\par
  \@thanks
  \vfil\null
  \end{titlepage}%
  \setcounter{footnote}{0}%
}
\renewcommand\maketitle{\par
  \begingroup
    \renewcommand\thefootnote{\@fnsymbol\c@footnote}%
    \def\@makefnmark{\rlap{\@textsuperscript{\normalfont\@thefnmark}}}%
    \long\def\@makefntext##1{\parindent 1em\noindent
            \hb@xt@1.8em{%
                \hss\@textsuperscript{\normalfont\@thefnmark}}##1}%
    \if@twocolumn
      \ifnum \col@number=\@ne
        \@maketitle
      \else
        \twocolumn[\@maketitle]%
      \fi
    \else
      \newpage
      \global\@topnum\z@   % Prevents figures from going at top of page.
      \@maketitle
    \fi
    \thispagestyle{plain}\@thanks
  \endgroup
  \setcounter{footnote}{0}%
}
\def\keywordname{{\bfseries \emph{Keywords}}}%
\def\keywords#1{\par\addvspace\medskipamount{\rightskip=0pt plus1cm
\def\and{\ifhmode\unskip\nobreak\fi\ $\cdot$
}\noindent\keywordname\enspace\ignorespaces#1\par}}
\date{today}
\begin{document}

\pagenumbering{roman}
\setcounter{page}{1}

% \author[1]{\Large Owen Futter}
\author[1]{\textbf{Owen Futter}}
\author[1]{\textbf{Nicola Mu\c{c}a Cirone}}
\author[2]{\textbf{Blanka Horvath}}
% \author[1]{\textbf{Cristopher Salvi}}

\affil[1]{\normalsize Imperial College London, Department of Mathematics}
\affil[2]{\normalsize University of Oxford, Mathematical Institute and Oxford Man Institute}

\title{\vspace{-0.8cm} \endgraf\rule{\textwidth}{.1pt} \\ \tiny \text{ } \\ \vspace{0.3cm} \textbf{\LARGE Kernel Learning for Mean-Variance Trading Strategies} \vspace{0.5cm} \\ \endgraf\rule{\textwidth}{.1pt}} % WITH LINES

\date{ }

\pagenumbering{arabic}
\setcounter{page}{1}

\maketitle

\vspace{-0.3cm}
\begin{abstract}
    In this article, we develop a kernel-based framework for constructing dynamic, path-dependent trading strategies under a mean-variance optimisation criterion. Building on the theoretical results of \cite{Cirone2025RoughHedging}, we parameterise trading strategies as functions in a reproducing kernel Hilbert space (RKHS), enabling a flexible and non-Markovian approach to optimal portfolio problems. 
    We compare this with the signature-based framework of \cite{Futter2023SignatureSignalsb} and demonstrate that both significantly outperform classical Markovian methods when the asset dynamics or predictive signals exhibit temporal dependencies for both synthetic and market-data examples.
    Using kernels in this context provides significant modelling flexibility, as the choice of feature embedding can range from randomised signatures to the final layers of neural network architectures. Crucially, our framework retains closed-form solutions and provides an alternative to gradient-based optimisation.
\end{abstract}
{
  \hypersetup{linkcolor=mediumtealblue}
  \tableofcontents
}
% \clearpage

\section{Introduction} \label{sec:intro}

\subsection{Background and Motivation} \label{sec:motivation}

Trading strategy construction and portfolio choice is a fundamental problem in quantitative finance, where traders and researchers aim to balance maximising PnL with the associated constraints that they face such as the information they are in possession of, volatility, liquidity and the associated costs of trading. In this work, we are concerned with finding optimal \textit{dynamic}, \textit{path-dependent} trading strategies in which the past trajectory of information is incorporated into the decision making as new information filters in. In particular, we solve an optimal portfolio choice problem where the \textit{inventory (position)} is given as the control variate and derive a solution to the mean-variance criterion. 

Path-dependencies are ubiquitous in finance. They are present both in the underlying asset time series, either arising due to path-dependent volatility \cite{Guyon2014Path-DependentVolatility, Guyon2022VolatilityPath-Dependent} or due to price discovery (stochastic drift) as market participants react to diverse sources of information that arrives at differing speeds \cite{Toth2011AnomalousMarkets, Cont2014TheEvents, Ohara2015High}. Path-dependencies are also present in predictive signals, since they are often constructed using temporal interactions of input variables over time, meaning their alpha decay is not necessarily Markovian and dependent on how the information interacts with the market over time. We focus on an objective function which is dependent on the \textit{terminal time} PnL and variance under general conditions of the underlying asset process. The mean-variance criterion was originally posed by Markowitz \cite{Markowitz1952PortfolioSelection} and further studied in a vast body of literature, such as \cite{Li2000OptimalFormulation, Zhou2000Continuous-timeFramework, Lim2002Mean-VarianceMarket, Lim2004QuadraticMarket, Shen2015MeanvarianceCoefficients, AbiJaber2021MarkowitzModels, Han2021MeanVarianceModel} and the references therein. The inclusion of the terminal variance term forces the problem to become time-inconsistent when the underlying drift and volatility are stochastic processes, making the problem incompatible with classical dynamic programming techniques. A breakthrough was made when \cite{Lim2002Mean-VarianceMarket} extended the previous works of \cite{Li2000OptimalFormulation, Zhou2000Continuous-timeFramework} to account for stochastic drift and volatility in the underlying asset process, leading to an optimal trading strategy in terms of the solutions of a linear and non-linear BSDE. While this solution is in closed form, numerical schemes and BSDE solvers (such as parameterising the solution as a neural network) are required. Alternatively, one could assume a specific model for the underlying asset, for example in \cite{AbiJaber2021MarkowitzModels}, the volatility process is characterised as a Volterra process which is inherently path-dependent (making the problem non-Markovian), leading to explicit solutions. In this work, we propose an alternate, \textit{data-driven}, approach. We show that the presence of temporal structure in either the underlying asset or in predictive signals leads to Markovian solutions becoming sub-optimal and that by incorporating the memory of the market state, performance is improved.

In order to incorporate such path-dependencies, the control variable can be parameterised as a function of the past trajectory of information, which may include the asset price process, predictive signals and other market state variables such as order flow or measures of volatility. In this work, we extend the theoretical foundations of \cite{Cirone2025RoughHedging} in which a hedging strategy is parameterised as a function in a reproducing kernel Hilbert space (RKHS), providing a very general and flexible setting. Alternative formulations of these problems have been studied in \cite{Kalsi2020OptimalSignatures, Cartea2022Double-ExecutionSignatures, Futter2023SignatureSignalsb} where the control variable was parameterised as a linear functional on the signature. The authors in \cite{Akyildirim2023RandomizedSelection} also solve path-dependent portfolio optimisation problems in a data-driven way using \textit{Randomised Signatures} as a reservoir, while \cite{Cuchiero2023SignatureTheory} also use the signature to approximate solutions in \textit{Stochastic Portfolio Theory}.

The kernel framework that we apply in this work aims to be more general in the sense that a wide array of feature maps can be incorporated, including the signature transform. We also discuss how kernels provide more flexibility in dealing with a larger number of assets and exogenous variable inputs due to a greater computational efficiency arising from the \textit{kernel trick} 
, which is the ability to compute inner products in an infinite-dimensional feature space without explicitly computing the feature map.
Kernel methods have emerged as a powerful framework in machine learning and finance in order to model complex, nonlinear temporal relationships. The use of implicit feature maps allow to embed input trajectories into high-dimensional spaces without the need for explicit transformations. This approach underpins several widely used techniques, such as Support Vector Machines (SVMs) and Gaussian Processes (GPs) \cite{schölkopf2002learning, DeVito2004SomeMethods}.
Kernel methods are data-efficient and flexible and this work provides a powerful end-to-end framework for financial optimisation problems.

In general, we are concerned with optimal trading problems of the form 
\begin{align} \label{eq:1-general_objective}
    \argmax_{\phi \in \mathfrak{F} } \Ex^{\P} \left[ J\left( V_T^\phi \right) \right]
\end{align}
where the control variable $\phi$ is a progressively measurable, path-dependent function which parameterises the strategy position or trading speed, $V_T$ is the terminal PnL of the trading strategy, and $J(\cdot)$ is the chosen objective function.
It is then natural to ask, how must one find an optimal function $\phi$, and specifically how to choose a class of functions $\mathfrak{F}$ for which to search over. In this work, we utilise \textit{kernel methods} and the results of \cite{Cirone2025RoughHedging} to parameterise $\phi$ as a function in a RKHS, for which we will make concrete in Section~\ref{sec:2-kernel_trading}. That is to say, we let the control variable to be parameterised as
\begin{align} \label{eq:1-strategy_parameterisation}
    \phi(\psi(x)) = \langle \phi, K(\psi(x),\cdot)\rangle_{\H_K} \ \ \in \R^d
\end{align}
where $x$ represents the input stream of data and $\psi$ is a \textit{feature embedding}. $\psi$ maps the trajectories of $x$ into a new feature space, transforming our market state (filtration). The feature trajectories are then in-turn transformed into the trading strategy position or trading speed by the control function $\phi$. This formulation allows us to cast the optimisation in terms of $\phi$, in which we can optimise over all functions $\phi \in \cH_K$. This parameterisation is very general and flexible, since the choice of feature embedding $\psi$ can be as simple or as complex as one desires, depending on the problem at hand. We will discuss the form of the kernel $K$ and provide further examples in Section~\ref{sec:2-kernel_trading}.

In the specific case when the chosen kernel $K$ is the \textit{Signature Kernel} \cite{Kiraly2019KernelsData, Salvi2020ThePDE, Cass2021GeneralKernels}, our kernel based framework is in fact comparable to previous results presented in \cite{Futter2023SignatureSignalsb}.
A key advantage in this work is that signature kernels also enable us to utilise a \textit{kernel trick}, and consequently one need not compute the explicit signature (or even truncate it), but can instead solve a Goursat PDE, as outlined in \cite{Salvi2020ThePDE}.
We provide an extensive comparison in Section~\ref{sec:4-comparison_sig_trading} from both a strategy performance and a practical and computational perspective, with a summary presented in Section~\ref{subsec:5.3-comparison}. We find that there are distinct trade-offs computationally, as the signature computation and PDE solvers scale differently depending on (i) the dimension of the inputted feature trajectories, (ii) the trajectory length, and (iii) the number of trajectories used in the fitting procedure. We also note that from a performance perspective, the comparison is very much dependent on the problem at hand. The kernel framework of this paper aligns closely with
the broader machine learning literature, providing greater flexibility and generalisability. However, with this flexibility comes greater modelling complexity: although solutions are obtained in closed form, additional hyperparameters must be tuned (some of which must be computed prior to kernel evaluation), potentially leading to a bottleneck in fitting times\footnote{for further details see \cite{Alden2025SignatureTests}, as well as in Section~\ref{sec:5-implementation}.}. In Section~\ref{sec:5-implementation} we outline these practical considerations 
in order to ensure robustness and optimal performance. 
We note that in lower-dimensional settings or when higher order terms of the signature do not play a decisive role, the solution in \cite{Futter2023SignatureSignalsb} is likely to be preferable due to its tractability and explainability.
However, if the problem requires greater complexity, e.g. involving many input features/signals, assets, or non-linear dependencies, then the kernel method of this work is likely to be a more appropriate choice.

\subsection{Paper Outline} \label{sec:our_approach}

In this work, we aim to provide an accessible introduction into kernel methods, specifically in optimisation problems applied to finance, along with numerical examples that are able to demonstrate the importance of path-dependencies in such problems. Our contributions are as follows:
\begin{itemize}
\itemsep-0.5em
    \item In Section~\ref{sec:2-kernel_trading} we define the necessary tools for working with kernels and specifically define the key components of kernel trading strategies, as well as expressions for the PnL and variance of a trading strategy.
    \item In Section~\ref{sec:3-mean_variance} we derive a solution to the mean-variance objective when considering the strategy position as the control variable as well as a robust solution via spectral decomposition. We also discuss the relevance of path-dependence in our setting as well as a comparison to the classical Markowitz solution.
    \item In Section~\ref{sec:6-numerical_results} we provide numerical results comparing our framework against Markovian counterparts. We firstly consider a synthetic model in which the drift of the underlying asset is path-dependent. We then construct an experiment with market data and synthetic signals. Both setups yield a consistent outperformance of the compared Markovian models.
    \item In Section~\ref{sec:4-comparison_sig_trading} we provide an extensive comparison between the mean-variance solution in \cite{Futter2023SignatureSignalsb} in which the trading straetgy is parameterised as a linear functional on the signature. We compare the strategy performance and computational aspects for when the kernel used is the \textit{signature kernel}.
    \item In Section~\ref{sec:5-implementation} we discuss the practicalities and how to compute the optimal strategy, including hyperparameter optimisation.
\end{itemize}

\section{Kernel Trading Strategies} \label{sec:2-kernel_trading}

Firstly, we remind the reader of the most fundamental definitions and results for kernels, and provide further reading on kernels and rough path theory in Appendix~\ref{sec:appx_rough_paths}. For a thorough introduction to kernel methods see \cite{schölkopf2002learning}, and for examples in machine learning and finance, see \cite{pannier2024pathdependentpdesolverbased, Cirone2025RoughHedging}. Throughout this work we assume the following notation:
 \begin{itemize}
 \itemsep-0.3em
     \item $X=(X_t)_{t \in [0,T]}$ is a non-negative stochastic process satisfying $X_0^m=1$ for $m \in \{1,\dots,d\}$, which denotes the underlying \textit{tradable} asset process, which takes values in $V=\R^d$. We assume that the price paths $X_{0,t}$ are directly given as $\alpha$-Holder rough paths, where we assume that for any $t \in [0,T]$,  $X_{0,t} \in \mathcal{C}^\alpha([0,t],\R^d) =: \Omega_t^\alpha(\R^d)$.
     \item $\Lambda^\alpha_X := \bigcup_{t\in[0,T]} \Omega_t^\alpha(\R^{d})$ defines the space of trajectories of the underlying tradable asset process $X$, such that any trajectory satisfies $X_{0,t} \in \Lambda^\alpha$ for all $t \in[0,T]$. 
     \item $\psi : \Lambda^\alpha_X\to \Lambda^\psi := \bigcup_{t\in[0,T]} C^0([0,t],\R^M)$ is defined as a \textit{feature embedding} that maps the input asset trajectories into new streams of features, such that $\psi(X_{0,t})$ takes values in $\R^M$. 
     This feature embedding can be seen as a pre-processing modelling choice in order to balance generalisability and model complexity.
     \item $(\Omega_T^\alpha, \mathcal{B}(\Omega_T^\alpha), \mathbb{F}, \P)$ is our fixed filtered probability space, where $T > 0$ denotes the finite, deterministic terminal time, $\mathcal{B}(\Omega_T^\alpha)$ is the Borel $\sigma$-algebra and $\mathbb{F} = \{ \mathcal{F}_t : t \in [0,T] \}$ be the filtration generated by streams in $\Lambda^\psi$.
    \item \( \xi^m = (\xi^m_t)_{t \in[0,T]} \) denotes the traders \textit{position} (inventory) for asset $m \in \{1,\dots,d \}$. It is the control variable in the optimisation problems in Section~\ref{sec:3-mean_variance}.
 \end{itemize}

Building on the work of \cite{Cirone2025RoughHedging}, we focus on representing the control variable of a trading strategy as a function in a RKHS as seen in Equation \eqref{eq:1-strategy_parameterisation}. We refer to \textit{Kernel Trading Strategies}, as the collective of strategies that derive their values through such parameterisations, regardless of the kernel used.

\subsection{A Background of Kernels in Optimisation Problems} \label{subsec:2.1-kernel_optimisation}

Before we discuss the specifics of kernel trading strategies, we provide some very important results from the theory of kernels and how these results will allow us to solve the optimisation problems that we aim to solve, as well as providing intuition for the optimal solutions. Generally, the kind of RKHS that we are working with are respect to \textit{operator-valued kernels}, which are subtly distinct from \textit{scalar-valued kernels}. We refer to \textit{scalar-valued kernels} $\kappa$ of the form $\kappa : \Lambda^\psi \times \Lambda^\psi \to \R$, which maps a pair of trajectories to a scalar value, representative of a similarity score of the input trajectories. In our setting, we require to work with \textit{operator-valued kernels}, which allows us to produce solutions in a multi-dimensional vector-valued setting for when we have several assets to trade. 

\begin{definition}{(Operator Valued Kernel on $\Lambda^\psi$).}
    Given the set of feature trajectories $\Lambda^\psi$, an $\mathcal{L}(\R^d)$-valued kernel is a map $K: \Lambda^\psi \times  \Lambda^\psi \to \mathcal{L}(\R^d)$ such that
    \begin{enumerate}
    \itemsep-0.4em
        \item $K$ is symmetric: $\forall X,Y \in \Lambda^\psi$ it holds $K(X,Y) = K(X,Y)^T \in \R^{d \times d}$.
        \item $K$ is positive semi definite: For every $N \in \mathbb{N}$ and $\{(X_i,v_i)\}_{i=1,\dots,N}\subseteq \Lambda^\psi \times \R^d$ the matrix with entries $\sprod{K(X_i,X_j)v_i}{v_j}_{\R^d}$ is semi-positive definite.
    \end{enumerate}
    For such a kernel $K$, it is natural to associate a \textit{reproducing kernel Hilbert space} (RKHS) $\H_K$ which is a space of functions $\Lambda^\psi \to \R^d$, characterised by the following properties
    \begin{enumerate}[label=\roman*.]
    \itemsep-0.8em
        \item For every trajectory and vector pair $X, v \in \Lambda^\psi \times \R^d$, 
        \begin{align*}
            K(X,\cdot)v \in \H_K.
        \end{align*}
        \item For every trajectory and vector pair $X, v \in \Lambda^\psi \times \R^d$ and for all functions $\phi \in \H_K$, we have the \textit{reproducing property}
        \begin{align*}
            \langle \phi(\cdot), K(X, \cdot)v \rangle_{\H_K} = \langle \phi(X), v \rangle_{\R^d}.
        \end{align*}
        \item $\H_K$ is the closure, under $\langle \cdot, \cdot \rangle_{\H_K}$ of $\textup{Span}\{K(X,\cdot)v \ | \ X, v \in \Lambda^\psi \times \R^d\}$.
    \end{enumerate}
\end{definition}

\begin{remark}
    Intuitively, this gives us a concrete structure to the class of functions that we want to optimise over. In general, when applying a trading strategy, we can think of this as applying an “operator” on the assets that are being traded. In other words, we want to obtain $\langle \phi(\psi(X)), v \rangle_{\R^d}$ which is the inner product between the $d$ asset positions $\phi(\psi(X))$ and a vector $v \in \R^d$ which may correspond to the returns (increments) of the asset. The reproducing property of (ii) above allows us to express this as an inner product in the RKHS, which will allow us to solve the optimisation more cleanly. 
\end{remark}
\begin{example} \label{ex:2-operator-valued-kernel}
    We can in fact construct an \textit{operator-valued} kernel from an existing scalar-valued kernel with respect to some feature map $\varphi$. Let $\kappa_\varphi:\Lambda^\psi \times \Lambda^\psi \to \R$ be a scalar-valued kernel, and let $A$ be a symmetric positive definite matrix $A: \R^d \to \R^d$, then we can define an operator valued kernel as
    $$
        K(X,Y) = \kappa_\varphi(X,Y)A, \quad \forall X, Y \in \Lambda^\psi.
    $$
\end{example}
\begin{remark}
    We can define such a kernel $\kappa_\varphi$ through its feature map $\varphi : \Lambda^\psi \to E$, as $\kappa_\varphi(X,Y) = \langle \varphi(X), \varphi(Y)\rangle_E$. One particularly important choice of scalar-valued kernel $\kappa_\varphi$ is when $\varphi$ is given as the \textit{Signature} transform (Definition~\ref{defn:sig_transform}) denoted $\textup{Sig}(\cdot)$ and $E = T((\R^d))$. In this case, we would have
    \begin{align*}
        \kappasig(X,Y) = \langle \textup{Sig}(X), \textup{Sig}(Y) \rangle_{T((\R^d))}.
    \end{align*}
    In order to then convert this simply to an operator valued kernel, for our purpose in recovering a vector-valued control variable, we may have
    \begin{align*}
        K^{\textup{Sig}}_A(X,Y) = \kappasig(X,Y) \text{diag}(A).
    \end{align*}
    We may have for example that $A$ is simply the $d \times d$ identity matrix.
\end{remark}

As discussed, we aim to solve optimisations of the form in Equation~\eqref{eq:1-general_objective}, for which we wish to find an optimal function $\phi^* \in \H_K$. It is then natural to ask, what form should the optimal $\phi^*$ take? We can first state a classical result from kernel methods that will help us answer this question.

\begin{theorem}[General Functional Representer Theorem for Kernel Optimisation (Theorem 2, \cite{DeVito2004SomeMethods})]  \label{thm:2-general_optimisation}
    Let $\P$ be a probability measure on a locally compact second countable space $\X$. Let $J$ be a $p$-loss function (Definition \ref{defn:p-loss})
    with respect to $\P$, $p \in [1, \infty]$. Let $\mathcal{H}_K$ be a reproducing kernel Hilbert space such that the corresponding kernel $K$ is $p$-bounded with respect to $\P$. Define $q = p/(p-1)$.
    \[
    \phi^* \in \arg\min_{ \phi \in \mathcal{H}_K} \left\{ \ExP^X \left[ J(\phi(X)) \right] + \lambda \|\phi\|_{\mathcal{H}_K}^2 \right\}
    \]
    if and only if there is $\alpha^* \in L^q_\P(\X)$ satisfying
    \begin{align} \label{eq:2-optimal_alpha}
        \alpha^*(X) \in \partial U\big(\phi^*(X)\big) \quad X \in \X \text{ a.e.},
    \end{align}
    where the solution $\phi^*$ is given of the form
    \begin{align} \label{eq:2-optimal_phi_form}
    \phi^*(\cdot) = -\frac{1}{2\lambda}  \ExP^Y \left[ \alpha^*(Y) K(Y, \cdot) \right] \ \ \in \H_K    
    \end{align}
\end{theorem}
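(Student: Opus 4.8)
The plan is to treat this as a convex variational problem and to characterise the minimiser through a first-order (stationarity) condition, exactly as one does for Tikhonov-regularised functionals. Write $\mathcal{E}(\phi) = \ExP^X[U(\phi(X))] + \lambda\|\phi\|^2_{\H_K}$ for the objective, where $U$ denotes the pointwise loss underlying $J$. First I would record that $\mathcal{E}$ is proper, convex and lower semicontinuous on $\H_K$: convexity and lower semicontinuity of $U$ come from the definition of a $p$-loss function, the regularisation term is strictly convex and coercive, and the $p$-boundedness of $K$ guarantees that $\phi \mapsto U(\phi(X))$ is integrable so that $\mathcal{E}$ is finite. Strict convexity plus coercivity give existence (and uniqueness) of a minimiser by the direct method and reduce the problem to identifying the $\phi^*$ satisfying $0 \in \partial\mathcal{E}(\phi^*)$.

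The crux is then to compute $\partial\mathcal{E}$. The regularisation term is Fréchet differentiable with gradient $2\lambda\phi^*$, so the real work is the data term $\phi \mapsto \ExP^X[U(\phi(X))]$. Here I would exploit the RKHS structure twice. First, for fixed $X$ the evaluation map $E_X : \phi \mapsto \phi(X)$ is a bounded linear operator $\H_K \to \R^d$ (boundedness is precisely what $p$-boundedness of $K$ buys us), and its adjoint is $E_X^* v = K(X,\cdot)v$, which is just a restatement of the reproducing property (ii). By the chain rule for subdifferentials, every element of the subdifferential of $\phi \mapsto U(\phi(X))$ has the form $K(X,\cdot)\alpha(X)$ with $\alpha(X) \in \partial U(\phi(X)) \subseteq \R^d$. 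Second, I would interchange the subdifferential with the expectation: invoking the Rockafellar/Ioffe--Levin theorem on subdifferentials of integral functionals, the subdifferential of $\phi \mapsto \ExP^X[U(\phi(X))]$ consists exactly of the elements $\ExP^Y[\alpha(Y)K(Y,\cdot)]$ arising from measurable selections $\alpha(\cdot) \in \partial U(\phi(\cdot))$ a.e.

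Combining these, the stationarity condition $0 \in \partial\mathcal{E}(\phi^*)$ reads: there is a measurable $\alpha^*$ with $\alpha^*(X) \in \partial U(\phi^*(X))$ a.e.\ and
\[
0 = 2\lambda\phi^* + \ExP^Y[\alpha^*(Y)K(Y,\cdot)],
\]
which rearranges immediately to the stated representation $\phi^*(\cdot) = -\tfrac{1}{2\lambda}\ExP^Y[\alpha^*(Y)K(Y,\cdot)]$. Because $\mathcal{E}$ is convex, this first-order condition is both necessary and sufficient, delivering the ``if and only if'' at once: any $\phi^*$ admitting such an $\alpha^*$ is a minimiser, and conversely. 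The conjugate exponent $q = p/(p-1)$ enters exactly here: for the representation to be meaningful one needs $\ExP^Y[\alpha^*(Y)K(Y,\cdot)]$ to converge as a Bochner integral in $\H_K$, and Hölder's inequality pairs the $L^q$ bound on the subgradient $\alpha^*$ (reflecting the $|t|^{p-1}$ growth of $\partial U$) against the $L^p$ bound on $\|K(Y,\cdot)\|$ coming from $p$-boundedness.

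I expect the main obstacle to be the interchange of subdifferentiation and expectation, together with the accompanying measurable-selection and integrability bookkeeping: one must show that a measurable selection $\alpha^*$ exists, that it genuinely lies in $L^q_\P(\X)$, and that the integral functional's subdifferential \emph{equals} the integral of the pointwise subdifferentials rather than merely containing or being contained in it. Verifying the hypotheses of the integral-subdifferential theorem — essentially a qualification/Slater-type condition ensuring finiteness and continuity of the data term at some point — is precisely where the $p$-loss and $p$-bounded assumptions do their work, and is the step I would treat most carefully.
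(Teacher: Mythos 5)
Your proposal is correct and follows essentially the same route as the proof the paper relies on: the paper gives no argument of its own, deferring entirely to \cite{DeVito2004SomeMethods}, whose proof is exactly the convex-analytic scheme you describe — view the data term through the ($p$-bounded) inclusion of $\H_K$ into $L^p_\P$, apply Rockafellar's theorem on subdifferentials of integral functionals, pull back through the adjoint to obtain $\ExP^Y\left[\alpha^*(Y)K(Y,\cdot)\right]$ with $\alpha^* \in L^q_\P$, and use convexity to make stationarity both necessary and sufficient. Your only slight imprecision is attributing boundedness of the pointwise evaluation $E_X$ to $p$-boundedness of $K$ (evaluation is bounded in any RKHS; $p$-boundedness is what makes the inclusion $\H_K \hookrightarrow L^p_\P$, the finiteness of the data term, and the adjoint mapping $L^q_\P \to \H_K$ well defined), which does not affect the argument.
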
 
\begin{proof}
    The proof is given thoroughly in \cite{DeVito2004SomeMethods}.
\end{proof}
\begin{remark}
    Here, $\alpha^* \colon \mathcal{X} \to \mathbb{R}$ is a measurable function (weight function) that can be determined by solving the optimisation problem \eqref{eq:2-optimal_alpha} with respect to the objective $U$. $K$ is the reproducing kernel that satisfies $K(X, \cdot) \in \mathcal{H}_K$ for all $X \in \mathcal{X}$. $\lambda \|\phi\|_{\mathcal{H}_K}^2$ is a regularisation term.
\end{remark}

\textbf{Notation}: We denote $\Ex^X_\P[\cdot]$ as the expectation over $X$ with respect to the probability measure $\P$, which may also be seen as $\Ex_{X\sim\P}[\cdot]$.

\begin{remark}
    Alternatively, in our setting we are instead \textit{maximising} some objective function and so the regularisation term effectively becomes negative as $-\lambda \|\phi\|_{\mathcal{H}_K}^2$, flipping the sign in \eqref{eq:2-optimal_phi_form}. This result is extremely powerful for us, since it allows us to understand functions $\phi \in \H_K$ in a more intuitive way and provides motivation to solve a more sophisticated class of objective functionals for trading. If for example, we want to evaluate $\phi$ with some “online” input trajectory $X$, we simply have the form of \eqref{eq:2-optimal_phi_form}, 
$$
\phi^*(X) = \frac{1}{2\lambda}  \ExP^Y \left[ \alpha^*(Y) K(Y, X) \right] \ \ \in \R^d
$$
which we are able to explicitly compute in practice given some reference trajectories $Y$ to take the expectation over. For further clarification, we give a specific example in Example~\ref{ex:appx_ridge_least_squares} of kernel ridge regression where the loss function is given as mean squared error. 
\end{remark}

\subsection{Inventory Formulation}

\begin{definition}{(Kernel Trading Strategy).} \label{def:2-kernel_strat}
    Let $X=(X_t)_{t \in [0,T]}$ be a $d$-dimensional tradable asset process taking trajectories in $\Lambda_X^\alpha$. Let $\psi : \Lambda_X^\alpha \to \Lambda^\psi$ be a feature embedding of the paths of $X$ (and other exogenous signals) for which $\psi(X)$ is a transformed trajectory taking values in $\R^M$. Let $K : \Lambda^\psi \times \Lambda^\psi \to \mathcal{L}(\R^d)$ be an operator-valued kernel. Let the position (inventory) process of a trading strategy be given as $\xi = (\xi)_{t \in [0,T]}$, which is an adapted, $\F_t$-predictable and integrable strategy such that $\int^T_0 \xi_s^2 ds < \infty$. We say $\xi$ is a \textit{kernel trading strategy} with respect to $K$ and $\psi$, if there exists a $\phi \in \H_K$ such that for any time $t \in [0,T]$,
    \begin{align*}
        \xi_t = \phi(\psi(X_{0,t})) = \sprod{\phi(\cdot)}{K(\psi(X_{0,t}), \cdot)}_{\H_K} \in \R^d,
    \end{align*}
    where $K(\psi(X_{0,t}), \cdot)$ is a \textit{canonical feature map} that allows to access non-linearities and path-dependencies in the trading strategy.
\end{definition}

We define the space of all \textit{kernel trading strategies} with respect to the kernel $K$ and feature embedding $\psi$ as
\[
\mathfrak{F}^{\psi}_K \coloneqq \left\{ \xi_t = \sprod{\phi(\cdot)}{K(\psi(X_{0,t}), \cdot)}_{\H_K} \forall t\in[0,T] \;\bigg|\; \int_0^T \xi_t^2 dt < \infty, \; \ \forall \text{ } X_{0,T} \in \Lambda^\alpha_X \text{ and } \phi \in \H_K \right\}.
\]

\begin{remark}
   While this definition may still seem abstract, due to the fact that we are dealing with inner products of functions and this may not seem immediately computable as such. However, this definition remains purposely general and could be made more specific by letting $\phi$ be of the form seen in \eqref{eq:2-optimal_phi_form}, which would yield
\begin{align*}
    \xi_t = \phi(\psi(X_{0,t})) = \Ex_{Y\sim\P} \left[\alpha(Y)  K\left( \psi(Y), \psi(X_{0,t}) \right) \right] \in \R^d,
\end{align*}
for some $\alpha: \Omega_T^\alpha \to \R$.
\end{remark}

\begin{remark}
    Firstly, the choice of \textit{feature embedding} $\psi : \Lambda^\alpha_X\to C^0([0,T] ; \R^M)$ is a modelling choice. Here, the embedded dimension $M$ could be greater or smaller than $d$, in order to either compress the market state space or lift it into a higher-dimensional representation. A simple example of $\psi$ would be to concatenate the asset process $X$ with exogenous predictive signals $\alpha$, e.g. $\psi(X) = (t,X,\alpha)$, enriching the filtration that the trader has access to.
    
    Secondly, the choice of kernel $K$ is also up to the user. 
    In this work we focus specifically on path-dependent kernels which are applied to time-series trajectories, such as the \textit{signature kernel} (Definition~\ref{defn:sig-kernel}). 
    Instead of computing the \textit{signature kernel} directly, one could also construct a kernel using \textit{randomised signatures} (Definition~\ref{defn:rand_signatures_comp}), for example 
    \begin{align*}
        k_{\text{r-Sig}}(x,y) = \langle \text{rSig}(x), \text{rSig}(y) \rangle_{\R^M}
    \end{align*}
    which can be seen as a more computationally cheap approximation of the signature kernel \cite{Cuchiero2021ExpressiveSignature, Cirone2023NeuralResNets}. This emphasises the flexibility of using kernels as there are many alternative ways of pre-processing and modelling choices that all fit into one very general framework and solution.
\end{remark}

When the control variable is given as the position (inventory), in order to compute the terminal PnL $V_T$ of the trading strategy (with respect to a given asset path $X$), we require an It\^o integral of the position $\xi$ against the underlying asset $X$. That is to say, we require a representation of the form 
\begin{align*}
    V_T(X) \equiv V_T = \int^T_0 \xi_t^\top dX_t = \int^T_0 \langle \xi_t , dX_t \rangle_{\R^d},
\end{align*}
in which to formulate the desired objective functionals.

\begin{theorem}[PnL of a Kernel Trading Strategy]
    Let $\xi = (\phi(\psi(X_{0,t})))_{t \in[0,T]} \in \mathfrak{F}^{\psi}_K$ be a kernel trading strategy. The corresponding PnL $V_T$ with respect to a given asset trajectory $X$ can be given as
    \begin{align} \label{eq:2-pnl_form}
        V_T(X) & = \sprod{\phi}{\Phi_X}_{\H_K} 
    \end{align}
    where 
    \begin{align} \label{eq:2-pnl_kernel_phi}
        \Phi_X(\cdot) = \int^T_0 K(\psi(X_{0,t}), \cdot) dX_t \ \in \H_K,
    \end{align}
    which represents a canonical “PnL” feature map $\Phi_X : \Lambda^\psi \to \R^d$ which sends input feature trajectories to PnLs, with respect to $X_{0,t}$. 
\end{theorem}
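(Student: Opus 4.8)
The plan is to reduce the identity to a pointwise application of the reproducing property followed by an interchange of the (rough) integral with the Hilbert-space inner product. First I would start from the definition $V_T = \int_0^T \langle \xi_t, dX_t\rangle_{\R^d}$ and observe that, for each fixed $t$, the integrand is an $\R^d$-inner product of the vector $\xi_t = \phi(\psi(X_{0,t}))$ against the increment $dX_t \in \R^d$. Treating $dX_t$ as the vector $v$ in property (ii) of the RKHS definition, the reproducing property gives, at every $t$,
\begin{align*}
\langle \phi(\psi(X_{0,t})), dX_t\rangle_{\R^d} = \langle \phi(\cdot), K(\psi(X_{0,t}),\cdot)\,dX_t\rangle_{\H_K}.
\end{align*}
Thus the scalar integrand is rewritten as an $\H_K$-inner product in which the first slot $\phi$ is constant in $t$ and only the second slot, $K(\psi(X_{0,t}),\cdot)$, depends on $t$ and is paired against the increment $dX_t$.

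Next I would integrate this identity over $[0,T]$ and pull the fixed element $\phi$ out of the integral. Since $\phi \mapsto \langle \phi, \cdot\rangle_{\H_K}$ is a bounded, hence continuous, linear functional on $\H_K$, it commutes with the integral against $dX_t$, yielding
\begin{align*}
V_T = \int_0^T \langle \phi(\cdot), K(\psi(X_{0,t}),\cdot)\,dX_t\rangle_{\H_K} = \Big\langle \phi(\cdot), \int_0^T K(\psi(X_{0,t}),\cdot)\,dX_t\Big\rangle_{\H_K} = \langle \phi, \Phi_X\rangle_{\H_K},
\end{align*}
which is exactly \eqref{eq:2-pnl_form} with $\Phi_X$ as defined in \eqref{eq:2-pnl_kernel_phi}.

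The main obstacle is justifying this interchange rigorously: showing that the $\H_K$-valued integral $\Phi_X = \int_0^T K(\psi(X_{0,t}),\cdot)\,dX_t$ genuinely converges as an element of $\H_K$, and that commuting it with $\langle\phi,\cdot\rangle_{\H_K}$ is legitimate. Because the price paths are only $\alpha$-H\"older rough paths, the integral against $dX_t$ must be read as a Young integral (when $\alpha>1/2$) or, more generally, a rough integral; for this to make sense I would need the map $t\mapsto K(\psi(X_{0,t}),\cdot)$ to be a sufficiently regular $\H_K$-valued (controlled) path, with the requisite H\"older or controlled-rough-path estimates inherited from the regularity of the feature embedding $\psi$ and of the kernel $K$. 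Once existence of the $\H_K$-valued integral is secured, the interchange in the second display follows from continuity of the bounded functional $\langle\phi,\cdot\rangle_{\H_K}$ together with the standard fact that continuous linear maps commute with Young/rough integrals, so that the pointwise reproducing identity elevates to the integrated identity. The remaining bookkeeping — that $\Phi_X\in\H_K$ and that $V_T$ is well defined under the integrability assumption $\int_0^T \xi_t^2\,dt<\infty$ imposed in Definition~\ref{def:2-kernel_strat} — is routine once this regularity is in place.
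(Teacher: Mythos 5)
Your proposal is correct and follows essentially the same route as the paper's proof: apply the reproducing property pointwise to rewrite $\langle \phi(\psi(X_{0,t})), dX_t\rangle_{\R^d}$ as $\langle \phi, K(\psi(X_{0,t}),\cdot)\,dX_t\rangle_{\H_K}$, then exchange the integral with the inner product by linearity/continuity. The only difference is that where you sketch the rough-path/Young-integral justification for the interchange and for $\Phi_X \in \H_K$, the paper delegates exactly those well-definedness issues to \cite[Section 2.1]{Cirone2025RoughHedging}.
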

\begin{proof}
    This result follows from \cite[Section 2.1]{Cirone2025RoughHedging}, where we have that the equalities below are all well defined
    \begin{align}
        V_T & = \int^T_0 \xi_t^\top dX_t \nonumber \\
        & = \int^T_0 \sprod{\phi(\psi(X_{0,t}))}{dX_t} \nonumber \\
        & = \int^T_0 \sprod{\phi}{K(\psi(X_{0,t}), \cdot)dX_t }_{\H_K} \nonumber \\
        & = \sprod{\phi}{\int^T_0 K(\psi(X_{0,t}), \cdot) \ dX_t}_{\H_K} \nonumber \\
        & = \sprod{\phi}{\Phi_X}_{\H_K} \nonumber,
    \end{align}
    which follows from the reproducing property of the RKHS and the linearity of the inner product.
\end{proof}
\vspace{-0.2cm}
\begin{remark}
    We can see that $\Phi_X$ is in fact a new \textit{canonical feature map} representing a new “integral kernel” for paths. In general, such integrals can be represented as $\Phi : \Omega^\alpha_{T} \to \cH_{K}$ as 
$$\Phi(X_{0,T}) \equiv \Phi_X := \int_0^T K(\psi(X_{0,t}),\cdot)  dX_t \in \cH_{K}$$
which introduces a new kernel $\mathcal{K}_{\Phi}$ and associated RKHS $\cH_{\Phi}$,
\begin{equation}
    \begin{aligned}
         & \mathcal{K}_{\Phi} : \Omega^\alpha_T \times \Omega^\alpha_T \to \bR 
         \\
         & \mathcal{K}_{\Phi}(X_{0,T},Y_{0,T}) = \sprod{\Phi_X}{\Phi_{Y}}_{\cH_{K}}.
    \end{aligned}
\end{equation}
This is particularly useful, since we can now represent integrals of functions of kernels as functions of integrals of kernels, which is a subtle but powerful new tool in order to solve optimisation problems of this form. It is also worth noting that the inner product in \eqref{eq:2-pnl_form} remains in $\H_K$ even though it is with respect to a new feature map $\Phi_X$, this is simply since $\Phi_X$ is in $\H_K$ and $\H_{\Phi} \subset \H_K$. 
\end{remark}
\begin{example} \label{ex:2-phi_x_example}
To further understand this new kernel structure, we can evaluate, in the space $\H_\Phi$, the integral feature map $\Phi_X$ for the \textit{current} input trajectory of the asset $X_{0,T}$,
on a given “comparison” trajectory $Y_{0,T} \in \Omega_T^\alpha$.
This would be given as the quantity
\begin{align*}
\Phi_X(Y_{0,T}) &:= \mathrm{ev^{\Phi}_{Y_{0,T}}}(\Phi_X)
:= \langle \Phi_X, K_{\Phi}(Y_{0,T}, \cdot) \rangle_{\H_K} \\
& = \langle \Phi_X, \Phi_Y \rangle_{\H_K} \quad \left( = \mathcal{K}_{\Phi}(X_{0,T},Y_{0,T}) \right) \\
& = \int^T_{t=0} \sprod{ \underbrace{\int^T_{s=0}  K \left(\psi(X_{0,s}),\psi(Y_{0,t}) \right)dX_s }_{\in \R^d}} { \underbrace{dY_t}_{\in \R^d}} \ \in \R.
\end{align*}
This new kernel will allow to quantify each past trajectories impact on the current PnL, variance and other quantities of interest in our objective functions.

Note that the evaluation of $\Phi_{X}$ in $\H_K$ takes a different form: for any $v \in \R^d$ one has
\begin{align*}
    \sprod{\Phi_X(\psi(Y_{0,T}))}{v}_{\R^d} &:= \sprod{\mathrm{ev^{K}_{\psi(Y_{0,T})}}(\Phi_X)}{v}_{\R^d} 
    \\
    &:= 
    \langle \Phi_X, K(\psi(Y_{0,T}), \cdot)v \rangle_{\H_K}
    \\
    &=  \langle \int_0^T K(\psi(X_{0,t}),\cdot)dX_t, K(\psi(Y_{0,T}), \cdot)v \rangle_{\H_K}
    \\
    &=  \sprod{\int_{t=0}^T K(\psi(X_{0,t}),\psi(Y_{0,T})) dX_t}{v}_{\R^d}
\end{align*}
so that 
\begin{equation*}
    \Phi_X(\psi(Y_{0,T})) = \mathrm{ev}^{K}_{\psi(Y_{0,T})}(\Phi_X) = \int_{t=0}^T K(\psi(X_{0,t}),\psi(Y_{0,T})) dX_t \in \R^d.
\end{equation*}

In fact we can see that the two are related by
\begin{equation*}
    \Phi_X(Y_{0,T}) 
    = \mathrm{ev^{\Phi}_{Y_{0,T}}}(\Phi_X) 
    = 
    \int^T_{t=0} \sprod{ \mathrm{ev^{K}_{\psi(Y_{0,t})}}(\Phi_X) } { dY_t}_{\R^d}
    = 
    \int^T_{t=0} \sprod{ \Phi_X(\psi(Y_{0,t}))  } { dY_t}_{\R^d}. \\
\end{equation*}
\end{example}
\vspace{0.5cm}
\begin{lemma}[Expected PnL of a Kernel Trading Strategy]  \label{lem:2-exp_pnl}
    Let $\xi = (\phi(\psi(X_{0,t})))_{t \in[0,T]} \in \mathfrak{F}^{\psi}_K$ be a kernel trading strategy. The corresponding expected PnL with respect to a probability measure $\P$ can be given as
    \begin{align}
        \ExP^X \left[V_T^\phi (X) \right] = \sprod{\phi}{\ExP^X[\Phi_X]}_{\H_K}.
    \end{align}
    In practice, the probability measure $\P$ will be the empirical measure of the observed data.
\end{lemma}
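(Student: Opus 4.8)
The plan is to reduce the statement to the pointwise (in $X$) PnL identity already established in the preceding theorem, namely $V_T^\phi(X) = \sprod{\phi}{\Phi_X}_{\H_K}$ with $\Phi_X = \int_0^T K(\psi(X_{0,t}),\cdot)\,dX_t \in \H_K$, and then to commute the expectation with the inner product. The essential observation is that $\phi$ is a \emph{fixed} element of $\H_K$, independent of the random asset trajectory $X$, so that the evaluation map $L_\phi \colon \Psi \mapsto \sprod{\phi}{\Psi}_{\H_K}$ is a bounded linear functional on $\H_K$ (by Cauchy--Schwarz, $|L_\phi(\Psi)| \le \|\phi\|_{\H_K}\,\|\Psi\|_{\H_K}$). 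The lemma is then exactly the assertion that $L_\phi$ commutes with taking the expectation of the $\H_K$-valued random element $\Phi_X$.

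First I would write $\ExP^X[V_T^\phi(X)] = \ExP^X\big[\sprod{\phi}{\Phi_X}_{\H_K}\big]$ by the PnL theorem. To pass the expectation inside the inner product I would interpret $\ExP^X[\Phi_X]$ as a \emph{Bochner integral} in the Hilbert space $\H_K$. Once $\Phi_X$ is known to be Bochner integrable, the defining property of the Bochner integral gives, for every bounded linear functional $L$, the identity $L(\ExP^X[\Phi_X]) = \ExP^X[L(\Phi_X)]$; applying this with $L = L_\phi$ yields $\ExP^X\big[\sprod{\phi}{\Phi_X}_{\H_K}\big] = \sprod{\phi}{\ExP^X[\Phi_X]}_{\H_K}$, which is the claim.

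The hard part will be the technical justification of Bochner integrability, i.e.\ establishing $\ExP^X\big[\|\Phi_X\|_{\H_K}\big] < \infty$ so that $\ExP^X[\Phi_X]$ is a well-defined element of $\H_K$. Here I would use the identity $\|\Phi_X\|_{\H_K}^2 = \mathcal{K}_{\Phi}(X,X) = \int_0^T \sprod{\int_0^T K(\psi(X_{0,s}),\psi(X_{0,t}))\,dX_s}{dX_t}$ extracted from the computation in Example~\ref{ex:2-phi_x_example}, and control it by Cauchy--Schwarz together with the integrability condition $\int_0^T \xi_t^2\,dt < \infty$ built into the definition of $\mathfrak{F}^\psi_K$ and a local-boundedness hypothesis on the operator kernel $K$ along the relevant trajectories; strong measurability of $X \mapsto \Phi_X$ follows from separability of $\H_K$ together with measurability of the kernel evaluations. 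Apart from this integrability bookkeeping the argument is immediate, and in the empirical-measure case used in practice it is entirely elementary: the expectation is then a finite average, so the interchange reduces to finite linearity of the inner product and requires no integrability hypothesis whatsoever.
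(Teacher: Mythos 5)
Your proof is correct, and it is worth noting that the paper itself offers \emph{no} proof of this lemma: it is stated bare, as an immediate consequence of the preceding PnL theorem, with the interchange of expectation and inner product taken entirely for granted (only the variance lemma that follows gets even a one-line justification). Your argument supplies exactly the justification the paper leaves implicit, and does so by the standard, correct route: fix $\phi \in \H_K$, note that $L_\phi := \sprod{\phi}{\cdot}_{\H_K}$ is a bounded linear functional, interpret $\ExP^X[\Phi_X]$ as a Bochner integral of the $\H_K$-valued random element $\Phi_X$, and invoke the defining property that bounded linear functionals commute with Bochner integration. The integrability bookkeeping you flag is the genuine mathematical content here: the lemma as stated in Section~2 carries no hypothesis guaranteeing $\ExP^X\bigl[\norm{\Phi_X}_{\H_K}\bigr] < \infty$, and the paper only imposes the second-moment condition $\ExP^{X}\bigl[\norm{\Phi_X}^2\bigr] < +\infty$ later, in Theorem~\ref{thm:3-mean_var_solution}; that condition, via Jensen or Cauchy--Schwarz, yields the first-moment bound needed for Bochner integrability (together with strong measurability from separability of $\H_K$, as you note). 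So your proof both formalizes the paper's implicit linearity argument and identifies the hypothesis under which it is actually valid. Your closing observation also matches the paper's own caveat that in practice $\P$ is the empirical measure, in which case the expectation is a finite average and the identity is pure finite-dimensional linearity, with no integrability issues at all.
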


\begin{lemma}[Variance of PnL for a Kernel Trading Strategy] \label{lem:2-var_pnl}
    Let $\xi = (\phi(\psi(X_{0,t})))_{t \in[0,T]} \in \mathfrak{F}^{\psi}_K$ be a kernel trading strategy. The variance of the PnL $V_T$ is given as
    \begin{align*}
        \textup{Var}_{X\sim \P} \left[ V_T^\phi(X) \right] = \sprod{\phi}{\ExP^X \left[\Phi_X \otimes \Phi_X \right] \phi }_{\H_K} - \sprod{\phi}{\ExP^X[\Phi_X]}_{\H_K}^2.
    \end{align*}
    This result follows simply from the fact that the variance can be given as $\textup{Var}^{\P}(V_T^\phi) = \Ex^{\P}[(V_T^\phi)^2] - (\Ex^{\P}[V_T^\phi])^2$.
    Here we used the tensor product to write $\ExP^X \left[\Phi_X \otimes \Phi_X \right] \phi := \ExP^X \left[\Phi_X \sprod{\Phi_X}{\phi}_{\H_K} \right]$.
\end{lemma}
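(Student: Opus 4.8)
The plan is to expand the variance directly from its definition $\text{Var}^\P(V_T^\phi) = \ExP^X[(V_T^\phi)^2] - (\ExP^X[V_T^\phi])^2$, and to rewrite each moment as an inner product in $\H_K$ using the PnL representation $V_T^\phi(X) = \sprod{\phi}{\Phi_X}_{\H_K}$ established in the preceding PnL theorem. The algebra is short; the only genuine work is bookkeeping the linear-functional manipulations and justifying one exchange of limits.

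First I would dispose of the mean term. By Lemma~\ref{lem:2-exp_pnl} we already have $\ExP^X[V_T^\phi(X)] = \sprod{\phi}{\ExP^X[\Phi_X]}_{\H_K}$, so squaring yields $(\ExP^X[V_T^\phi])^2 = \sprod{\phi}{\ExP^X[\Phi_X]}_{\H_K}^2$, which is exactly the second summand of the claim, with no further argument needed.

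Next, for the second moment, I would write $(V_T^\phi(X))^2 = \sprod{\phi}{\Phi_X}_{\H_K}\,\sprod{\Phi_X}{\phi}_{\H_K}$ using symmetry of the inner product, then pull the scalar $\sprod{\Phi_X}{\phi}_{\H_K}$ into the first slot by linearity to get $(V_T^\phi(X))^2 = \sprod{\phi}{\Phi_X\sprod{\Phi_X}{\phi}_{\H_K}}_{\H_K}$. Taking $\ExP^X$ and commuting the expectation with the continuous linear functional $\sprod{\phi}{\cdot}_{\H_K}$ gives $\ExP^X[(V_T^\phi)^2] = \sprod{\phi}{\ExP^X[\Phi_X\sprod{\Phi_X}{\phi}_{\H_K}]}_{\H_K} = \sprod{\phi}{\ExP^X[\Phi_X\otimes\Phi_X]\phi}_{\H_K}$, the last equality being precisely the definition of the tensor-product operator recorded in the statement. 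Subtracting the two computed terms then produces the claimed identity.

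The step I would treat most carefully — and the only one that is not pure algebra — is the interchange of expectation and inner product, which presupposes that $\Phi_X$ is Bochner-integrable as an $\H_K$-valued random element and that the operator $\ExP^X[\Phi_X\otimes\Phi_X]$ is well defined and bounded on the relevant subspace. This is underwritten by the integrability constraint $\int_0^T \xi_t^2\,dt<\infty$ built into the definition of $\mathfrak{F}^{\psi}_K$ together with the $p$-boundedness hypotheses inherited from the representer-theorem framework of Theorem~\ref{thm:2-general_optimisation}; under these, the expectation commutes with the bounded linear functional $\sprod{\phi}{\cdot}_{\H_K}$ by the standard Bochner-integral property, and the remainder is routine.
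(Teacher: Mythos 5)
Your proof is correct and takes essentially the same route as the paper: the paper's own justification is exactly the expansion $\textup{Var}^\P(V_T^\phi) = \Ex^\P[(V_T^\phi)^2] - (\Ex^\P[V_T^\phi])^2$ combined with the representation $V_T^\phi(X) = \sprod{\phi}{\Phi_X}_{\H_K}$, Lemma~\ref{lem:2-exp_pnl} for the mean term, and the stated tensor-product notation for the second moment, which is precisely what you carry out. Your additional flagging of the Bochner-integrability needed to commute expectation with the inner product is a point the paper leaves implicit (it is only made explicit later via the second-moment assumption in Theorem~\ref{thm:3-mean_var_solution}), so there is no gap.
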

\section{Path-Dependent Mean-Variance Optimisation with Kernels
} \label{sec:3-mean_variance}
In this section we derive our main result for when the control variable is given as the inventory $\xi$ of the trading strategy. In kernel methods (and machine learning in general) it is customary to include a regularisation term on $\phi$ within the objective function due to the function space being so large and complex combined with access to finite data, this ensures that the chosen function exists and is not overfit. 
Hence, going forward we will generally include a regularisation term of the form $\norm{\phi}_{\cH_{K}}$. We explicitly derive a solution to the classical mean-variance criterion, which is given as
\begin{align} \label{eq:3-mean_var_objective}
    \max_{\xi \in \mathfrak{F}^\psi_K} \left\{ \ExP^X \left[V_T^\xi(X) \right] - \frac{\eta}{2} \textup{Var}_{\P}^X\left[ V_T^\xi(X) \right] \right\}
\end{align}
where $\eta$ is a risk-aversion parameter. Equivalently, we can formulate this as
\begin{align*}
    \max_{\xi \in \mathfrak{F}^\psi_K} \left\{ \ExP^X \left[ V_T^\xi(X) - \frac{\eta}{2} \left(  V_T^\xi(X)^2 - \ExP^X \left[ V_T^\xi(X) \right]^2\right) \right]  \right\}.
\end{align*}
Now, given that we also know from Theorem~\ref{eq:2-pnl_form} that we can express the PnL as $V_T^\xi(X) = \sprod{\phi}{\Phi_X}_{\H_K} \, = \mathrm{ev}^{\Phi}_{X_{0,T}} (\phi)$, then we can, adding regularization, in fact represent this optimisation over functions $\phi \in \H_K$ as
\begin{align*}
    \max_{\phi \in \H_K}  \left\{ \ExP^X \left[ \sprod{\phi}{\Phi_X}_{\H_K} - \frac{\eta}{2} \left( \sprod{\phi}{\Phi_X}_{\H_K}^2 - \ExP^X \left[ \sprod{\phi}{\Phi_X}_{\H_K} \right]^2 \right) \right] - \lambda \|\phi\|_{\mathcal{H}_K}^2 \right\}
\end{align*}
which further reduces to 
\begin{align} \label{eq:3-RKHS_objective}
    \max_{\phi \in \H_\Phi}  \left\{ \ExP^X \left[ \sprod{\phi}{\Phi_X}_{\H_\Phi} - \frac{\eta}{2} \left( \sprod{\phi}{\Phi_X}_{\H_\Phi}^2 - \ExP^X \left[ \sprod{\phi}{\Phi_X}_{\H_\Phi} \right]^2 \right) \right] - \lambda \|\phi\|_{\mathcal{H}_\Phi}^2 \right\}.
\end{align}
There is a subtle difference here, in the sense that we shift the focus from the space $\H_K$ to $\H_\Phi$ 
as $\H_K = \H_\Phi \oplus \H_\Phi^{\top}$ and $\phi \in \H_K$ enters in the equation only via its scalar products with elements of $\H_\Phi$.
$\H_\Phi$.
The natural question to ask now is, what form should the optimal strategy $\phi^*$ take? 

\subsection{Solution}

\begin{restatable}{theorem}{mvthm}\textup{(Optimal Kernel Trading Strategy for the Mean-Variance Criterion).}  \label{thm:3-mean_var_solution}
    Let $\xi \in \mathfrak{F}^{\psi}_K$ be a kernel trading strategy with respect to a kernel $K$ and feature embedding $\psi$. Let 
    $\X \subset \Omega_T^{\alpha}$
    be a locally compact subset trajectories, such that the kernel
    $\K_\Phi : \X \times \X \to \R$
    is measurable 
    with respect to $\mathbb{P}$.
    Assume moreover the second moment condition
    \begin{equation*}
        \ExP^{X,Y}\left[ \norm{\Phi_X}^2 \right] < +\infty
    \end{equation*}
    and the map $\Omega: L^2_\P(\X) \to L^2_\P(\X)$
    given by 
    \begin{equation*}
        \alpha \mapsto \ExP^X\left[ \alpha(X) \K_\Phi(X, \cdot)\right]
    \end{equation*}
    to be \emph{coercive} and \emph{invertible}.
    % }
    Then, we have 
    \begin{align*}
        \xi^* \in \argmax_{\xi \in \mathfrak{F}^\psi_K} \left\{ \Ex^\P \left[V_T^\xi \right] - \frac{\eta}{2} \textup{Var}^\P(V_T^\xi)  - \frac{\lambda}{2} \|\xi\|^2 \right\}.
    \end{align*}
    if and only if $\xi^*_t = \phi^*(\psi(X_{0,t})) 
    \, = \mathrm{ev}^K_{\psi(X_{0,t})}
    (\phi^*) \in \R^d
    $
    where
    \begin{align*}
        \phi^* = \ExP^X \left[ \alpha^*(X) \Phi_{X}\right] \in \H_K, \quad \norm{\xi}^2 = \langle \phi, \phi \rangle_{\H_K}
    \end{align*}
    and there exists an $\alpha^* \in L^2_\P(\X)$ that satisfies
    \begin{align*}
        \left(\lambda \textup{Id} + \eta \Xi^\Phi_{\P} \right)(\alpha^*)(X) = 1
    \end{align*}
    where $\Xi^\Phi_\P : L_\P^2 \to L_\P^2$ represents a covariance operator, defined as
    \begin{align*}
        \Xi^\Phi_{\P}(\alpha)(X) = \ExP^Y[\alpha(Y) \K_{\Phi}(Y,X)] - \ExP^{Y, Z}[\alpha(Y) \K_{\Phi}(Y, Z) ]\
    \end{align*}
\end{restatable}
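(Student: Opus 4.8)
The plan is to recast the problem as an unconstrained, strictly concave quadratic maximisation over the Hilbert space $\H_\Phi$, solve it through its first-order stationarity condition, and then transport that condition to $L^2_\P(\X)$ through the synthesis operator whose Gram operator is exactly $\Omega$. Writing $m := \ExP^X[\Phi_X] \in \H_K$ for the mean embedding and $\Sigma := \ExP^X[(\Phi_X - m)\otimes(\Phi_X - m)]$ for the associated covariance operator, Lemmas~\ref{lem:2-exp_pnl} and~\ref{lem:2-var_pnl} let me rewrite the regularised version of \eqref{eq:3-mean_var_objective} as the scalar functional
\[
L(\phi) = \sprod{\phi}{m}_{\H_K} - \frac{\eta}{2}\sprod{\phi}{\Sigma\phi}_{\H_K} - \frac{\lambda}{2}\norm{\phi}_{\H_K}^2 .
\]
Since $\Sigma$ is positive semi-definite (a covariance operator) and $\lambda>0$, the functional $L$ is strictly concave, coercive and Fréchet differentiable, so a maximiser exists, is unique, and is characterised \emph{exactly} by $\nabla L(\phi^*) = m - \eta\Sigma\phi^* - \lambda\phi^* = 0$, i.e. $(\lambda\,\textup{Id} + \eta\Sigma)\phi^* = m$. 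The second-moment hypothesis $\ExP[\norm{\Phi_X}^2]<\infty$ guarantees that $m$ is a well-defined Bochner integral and $\Sigma$ a bounded trace-class self-adjoint operator, so every object is meaningful. Because any component of $\phi$ in $\H_\Phi^{\top}$ leaves all the inner products $\sprod{\phi}{\Phi_X}_{\H_K}$ — hence the PnL, its variance, and $m$ — unchanged while strictly increasing $\norm{\phi}_{\H_K}$, the maximiser must lie in $\H_\Phi$, which is precisely the reduction to \eqref{eq:3-RKHS_objective}.

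Next I would introduce the synthesis operator $T\colon L^2_\P(\X)\to\H_\Phi$, $T\alpha := \ExP^X[\alpha(X)\Phi_X]$, whose adjoint is the evaluation map $(T^*\phi)(X) = \sprod{\phi}{\Phi_X}_{\H_K} = V_T^\phi(X)$. A direct computation identifies $T^*T = \Omega$, together with $m = T\mathbf{1}$ for the constant function $\mathbf{1}\equiv 1$, and — the crux of the argument — the intertwining relation $\Sigma\,T = T\,\Xi^\Phi_\P$. To see the latter, observe that $\sprod{\Phi_X - m}{T\alpha}_{\H_K} = (\Omega\alpha)(X) - \ExP^Z[(\Omega\alpha)(Z)] = (\Xi^\Phi_\P\alpha)(X)$, so that $\Sigma T\alpha = \ExP^X[(\Xi^\Phi_\P\alpha)(X)(\Phi_X - m)] = \ExP^X[(\Xi^\Phi_\P\alpha)(X)\Phi_X] = T(\Xi^\Phi_\P\alpha)$, where the $m$-correction vanishes because $\Xi^\Phi_\P\alpha$ is centred by construction ($\ExP^X[(\Xi^\Phi_\P\alpha)(X)]=0$). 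I regard establishing this intertwining, with the clean cancellation of the mean term, as the technical heart of the proof.

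Finally I would descend the stationarity equation to $L^2_\P(\X)$. Coercivity and invertibility of $\Omega = T^*T$ force $T$ to be bounded below, hence injective with closed range; since the range of $T$ is dense in $\H_\Phi$ by definition of the RKHS, $T$ is an isomorphism of $L^2_\P(\X)$ onto $\H_\Phi$. Consequently $\phi^* = T\alpha^*$ for a unique $\alpha^*\in L^2_\P(\X)$, which is exactly the asserted representation $\phi^* = \ExP^X[\alpha^*(X)\Phi_X]$. Substituting into $(\lambda\,\textup{Id}+\eta\Sigma)\phi^* = m$ and using $m = T\mathbf{1}$ and $\Sigma T = T\Xi^\Phi_\P$ gives $T\big((\lambda\,\textup{Id}+\eta\Xi^\Phi_\P)\alpha^* - \mathbf{1}\big) = 0$, and injectivity of $T$ yields $(\lambda\,\textup{Id}+\eta\Xi^\Phi_\P)\alpha^* = 1$. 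Conversely, as $\Xi^\Phi_\P$ is positive semi-definite and $\lambda>0$, the operator $\lambda\,\textup{Id}+\eta\Xi^\Phi_\P$ is coercive and invertible, so such an $\alpha^*$ exists and is unique; reading the chain of equalities backwards shows $T\alpha^*$ solves the stationarity equation and hence, by concavity, maximises $L$. Beyond the intertwining identity, the secondary obstacle I anticipate is precisely the range argument — using coercivity of $\Omega$ to upgrade the merely dense range of $T$ to a genuine isomorphism, so that the abstract optimiser admits the exact integral representation rather than only an approximation.
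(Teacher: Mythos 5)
Your proposal is correct in substance, and it reaches the paper's linear system by a genuinely different organisation of the argument. The paper's own proof (Appendix~B.1) first invokes Proposition~4 and Lemma~3 of \cite{Cirone2025RoughHedging} to pull the \emph{entire optimisation problem} back to $L^2_\P(\X)$ through the synthesis map $\iota\alpha = \ExP^X[\alpha(X)\Phi_X]$ (your $T$), rewrites the objective there as
\begin{equation*}
\sprod{\alpha}{\Omega 1}_{L^2_{\P}} - \tfrac{1}{2}\sprod{\alpha}{\Omega \circ \left(\eta\,(\textup{Id}-1\otimes 1)\circ\Omega + \lambda\,\textup{Id}\right)\alpha}_{L^2_{\P}}
\end{equation*}
using the identity $\iota^\top\circ C_\P\circ\iota = \Omega\circ(\textup{Id}-1\otimes 1)\circ\Omega$, and only then solves the first-order condition, cancelling a factor of $\Omega$ by invertibility. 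You instead solve the stationarity equation $(\lambda\,\textup{Id}+\eta\Sigma)\phi^* = m$ directly in the RKHS and afterwards descend to $L^2_\P(\X)$ via the intertwining $\Sigma T = T\,\Xi^\Phi_\P$ together with the isomorphism property of $T$ (coercivity of $\Omega = T^*T$ gives the lower bound, density of the range gives surjectivity). The two routes rest on the same two computations --- $T^*T=\Omega$, and your intertwining relation, which composed with $T^*$ is exactly the paper's identity for $\iota^\top\circ C_\P\circ\iota$ --- but yours replaces both citations with self-contained arguments: the orthogonal-complement (representer) argument for why the maximiser lies in $\H_\Phi$, and the explicit adjoint/isomorphism computation. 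Your version is longer but independent of the external results; the paper's is shorter but leans on them.

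One claim in your converse direction is false as stated and should be repaired: $\Xi^\Phi_\P = (\textup{Id}-1\otimes 1)\circ\Omega$ is a composition of two non-commuting self-adjoint positive operators, so it is in general neither self-adjoint (its adjoint is $\Omega\circ(\textup{Id}-1\otimes 1)$) nor positive semi-definite. Concretely, writing $P := \textup{Id}-1\otimes 1$ and $\alpha = \beta + t\cdot 1$ with $\beta$ mean-zero, one has $\sprod{\alpha}{\Xi^\Phi_\P\alpha}_{L^2_\P} = \sprod{\beta}{\Omega\beta}_{L^2_\P} + t\sprod{\beta}{\Omega 1}_{L^2_\P}$, which is negative for suitable $t$ whenever $\sprod{\beta}{\Omega 1}_{L^2_\P}\neq 0$. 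So "psd $\Rightarrow$ coercive $\Rightarrow$ invertible" does not go through. The conclusion you need is nonetheless true and cheap to recover inside your own framework: in the splitting of $L^2_\P(\X)$ into mean-zero functions plus constants, $\lambda\,\textup{Id}+\eta\Xi^\Phi_\P$ is block upper-triangular with diagonal blocks $\lambda\,\textup{Id}+\eta\,P\Omega P$ restricted to mean-zero functions (coercive, since $\sprod{\beta}{P\Omega P\beta}_{L^2_\P}=\sprod{\beta}{\Omega\beta}_{L^2_\P}$ there) and the scalar $\lambda$ on constants, hence invertible. Alternatively you can simply excise the claim: existence of $\alpha^*$ already follows from your forward direction (the maximiser exists by strict concavity and coercivity of $L$, and $\alpha^* = T^{-1}\phi^*$ by the isomorphism), and uniqueness follows by applying $T$ and using injectivity of $\lambda\,\textup{Id}+\eta\Sigma$. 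For what it is worth, the paper is equally terse at this exact point --- it asserts invertibility of $\eta\,(\textup{Id}-1\otimes 1)\circ\Omega+\lambda\,\textup{Id}$ from "the assumptions on $\Omega$" without proof --- but it does not make the incorrect positivity claim.
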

\begin{proof}
    Given in Appendix~\ref{sec:appx_mean_var_proof}. 
\end{proof}

\begin{remark}
    We can see how similar this result is to that of \cite[Theorem 1]{Cirone2025RoughHedging}, where the form of $\phi^*$ is the same and the use of the new PnL feature map $\Phi_X$ is incorporated, however the solution for $\alpha^*$ is obviously different. 
\end{remark}

Now, we know that our original definition of a kernel trading strategy was defined on $K$ and not $K_\Phi$, which was instead used in the fitting procedure above. Therefore, we must explicitly compute the positions of the trading strategy $\xi_t$ at any time $t$, given an input feature trajectory $\psi(X_{0,t})$. 

\begin{corollary}[Optimal Kernel Trading Strategy Position]
    Suppose $\xi^* \in  \mathfrak{F}^{\psi}_K$ is the optimal kernel trading strategy corresponding to $\phi^* \in \H_K$ as computed in Theorem~\ref{thm:3-mean_var_solution}. Then, given a current input feature trajectory $\psi(X_{0,t})$ at time $t$, the position for the $m$-th asset $\xi_t^m$ is given as
    \begin{equation}
        \begin{aligned}
            \xi_t^m & = [\phi^*(\psi(X_{0,t}))]_m \\
             &= \sprod{\phi^*(\psi(X_{0,t}))}{e_m}_{\R^d}
            \\
            &
            =
            \sprod{\ExP^Y \left[ \alpha^*(Y) \Phi_{Y}(\psi(X_{0,t}))\right]}{e_m}_{\R^d}
            % } 
            \\
            & = \frac{1}{\lambda} \ExP^Y \left[ \alpha^*(Y) \sprod{\Phi_{Y}}{K(\psi(X_{0,t}),\cdot)e_m}_{\cH_{K}} \right]
            \\
            & = \frac{1}{\lambda} \ExP^Y \left[ \alpha^*(Y) 
            \int_{t=0}^T \sprod{K(\psi(Y_{0,s}),\cdot) dY_s}{K(\psi(X_{0,t}),\cdot)e_m}_{\cH_{K}} \right]
            \\
            & = \frac{1}{\lambda} \ExP^Y \Bigg[\alpha^*(Y) 
            \underbrace{\int_{t=0}^T \sprod{K(\psi(X_{0,t}), \psi(Y_{0,s}))e_m}{ dY_s }_{\R^d}}_{\Gamma(X_{0,t},Y) \in \R} \Bigg] \\
            & = \frac{1}{\lambda} \ExP^Y \left[ \alpha^*(Y) \Gamma(X_{0,t},Y) \right]
        \end{aligned}
    \end{equation}
    for all $m \in \{1,\dots, d\}.$
\end{corollary}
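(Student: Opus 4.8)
The plan is to unwind the single vector-valued evaluation $\phi^*(\psi(X_{0,t}))$ into a scalar integral by repeatedly invoking the reproducing property --- first in $\H_K$ and then for the inner kernel $K$ --- and to treat the expectation over the reference paths $Y$ as a Bochner integral that commutes with every linear operation involved. The routine algebra is light; the substantive content is entirely in justifying these interchanges.

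First I would write the $m$-th component as a pairing against the standard basis vector, $\xi_t^m = \sprod{\phi^*(\psi(X_{0,t}))}{e_m}_{\R^d}$, which is just the definition of extracting a coordinate. Then I would substitute the optimal function $\phi^*$ supplied by Theorem~\ref{thm:3-mean_var_solution}, $\phi^* = \ExP^Y[\alpha^*(Y)\Phi_Y]$ in its representer form (carrying the regularisation factor $1/\lambda$). Since point evaluation $\phi \mapsto \phi(\psi(X_{0,t}))$ is a bounded linear functional on the RKHS --- it is realised by pairing against $K(\psi(X_{0,t}),\cdot)e_m \in \H_K$ via property (ii) of the operator-valued kernel --- it commutes with the expectation, giving $\phi^*(\psi(X_{0,t})) = \ExP^Y[\alpha^*(Y)\,\Phi_Y(\psi(X_{0,t}))]$. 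This interchange is where the second-moment hypothesis $\ExP[\norm{\Phi_X}^2] < \infty$ from the theorem enters: it guarantees $\Phi_Y$ is Bochner integrable against the scalar weight $\alpha^* \in L^2_\P$, so the expectation and the evaluation functional may legitimately be swapped.

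Next I would apply the $\H_K$-reproducing property exactly as derived in Example~\ref{ex:2-phi_x_example}, rewriting $\sprod{\Phi_Y(\psi(X_{0,t}))}{e_m}_{\R^d} = \langle \Phi_Y, K(\psi(X_{0,t}),\cdot)e_m\rangle_{\H_K}$, and then substitute the definition $\Phi_Y = \int_0^T K(\psi(Y_{0,s}),\cdot)\,dY_s$ from \eqref{eq:2-pnl_kernel_phi}. Pulling the It\^o integral out of the $\H_K$-inner product --- the same manoeuvre already justified in the proof of the PnL theorem --- and collapsing $\langle K(\psi(Y_{0,s}),\cdot),\, K(\psi(X_{0,t}),\cdot)e_m\rangle_{\H_K} = K(\psi(X_{0,t}),\psi(Y_{0,s}))e_m$ by the reproducing property of $K$, I obtain the scalar $\Gamma(X_{0,t},Y) = \int_0^T \sprod{K(\psi(X_{0,t}),\psi(Y_{0,s}))e_m}{dY_s}_{\R^d}$, and collecting the weight yields $\xi_t^m = \tfrac{1}{\lambda}\ExP^Y[\alpha^*(Y)\,\Gamma(X_{0,t},Y)]$.

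I expect the commutation of the expectation over $Y$ with both the RKHS point-evaluation functional and the inner It\^o integral over $s$ to be the main --- and essentially the only --- obstacle. I would justify the first by boundedness of evaluation together with the second-moment condition (Bochner integrability), and the second by the interchange already established in the PnL theorem, so that a Fubini-type argument permits freely rearranging the order of $\ExP^Y$, $\int_0^T \cdot\,dY_s$, and $\sprod{\cdot}{\cdot}_{\H_K}$; the remainder is bookkeeping with the two reproducing identities.
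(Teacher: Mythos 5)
Your proposal is correct and follows essentially the same route as the paper, whose ``proof'' of this corollary is nothing more than the displayed chain of equalities itself: coordinate extraction against $e_m$, substitution of the representer form $\phi^* = \ExP^Y[\alpha^*(Y)\Phi_Y]$, the reproducing property to convert evaluation into the $\H_K$-pairing with $K(\psi(X_{0,t}),\cdot)e_m$, substitution of $\Phi_Y = \int_0^T K(\psi(Y_{0,s}),\cdot)\,dY_s$, and a final collapse via the kernel's reproducing property to obtain $\Gamma(X_{0,t},Y)$. Your added justification of the expectation/evaluation/integral interchanges (Bochner integrability from the second-moment hypothesis, plus the interchange already used in the PnL theorem) is a strengthening the paper leaves implicit, not a departure from its argument.
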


\begin{remark}
    When, we are working with the empirical measure (as is the case in practice), we have access to $N$ \textit{co-location} trajectories of $X$, which we denote $\bfX = \{X_1, \dots, X_N\}$ then $L^2_{\P}(\cX)$ is identified with $\bR^n$ via $\delta_{X_i} \mapsto e_i$.
    Then the covariance operator $\Xi^{\Phi}_{\P}$ can be identified through a combination of the Gram matrix of $\K_{\Phi}$, i.e
    \begin{align*}
        \Xi^{\Phi}_{\P}(\cdot)(\bfX) = \frac{1}{N} \K_{\Phi}(\bfX, \bfX) \left(\textup{Id}_N - \frac{1}{N} \boldsymbol{1}_N \boldsymbol{1}_N^\top \right) \ \ \in \R^{N\times N}.
    \end{align*}
    This then allows us to explicitly obtain $\alpha^*(\bfX) \in \R^N$, as
    \begin{align} \label{eq:3-mean-var_alpha_star}
        \alpha^*(\bfX) = \left( \lambda \text{Id}_N + \frac{\eta}{N} \K_{\Phi}(\bfX, \bfX)\left(\textup{Id}_N - \frac{1}{N} \boldsymbol{1}_N \boldsymbol{1}_N^\top \right) \right)^{\dag}  \boldsymbol{1}_N.
    \end{align}
     We can also define a map $\Gamma_{\P} : \Lambda^\psi \to \bR^{d \times N}$ which intuitively computes the expected future PnL for each asset by comparing the current trajectory to all \textit{co-location trajectories}. We define this map explicitly for the $m$-th asset and $i$-th sample trajectory
    \begin{align} \label{eq:3-gamma_form}
    [ \Gamma_{\P}(\psi(X_{0,t}))]_{m,i} := \int_{t=0}^T \sprod{K(\psi(X_{0,t}), \psi(Y_{0,s}^i))e_m}{dY_s^i}_{\bR^d} \ \ \in \R. 
    \end{align}
    We denote the mapping over all sample trajectories and assets as $\Gamma_\P(\psi(X_{0,t})) \in \R^{d\times N}$. Then, as above we can explicitly compute the trading strategy positions as
    \begin{equation}\label{eq:3-strat_xi_form}
    \xi_t = \phi^*(\psi(X_{0,t})) =   \underbrace{ \frac{1}{N \lambda} \Gamma_{\P}\left(\psi(X_{0,t}) \right)}_{\in \R^{d \times N}} \underbrace {\left( \lambda \text{Id}_N + \frac{\eta}{N} \K_{\Phi}(\bfX, \bfX)(\textup{Id}_N - \frac{1}{N} \boldsymbol{1}_N \boldsymbol{1}_N^\top) \right)^{\dag}  \boldsymbol{1}_N}_{\text{Optimal Weights } \alpha^* \in \R^N}
    \end{equation}
\end{remark}
where $\dag$ represents the Moore–Penrose pseudo-inverse. Here, the positions $\xi_t$ are made up of two components:
\begin{itemize}
\vspace{-0.2cm}
\itemsep-0.5em
    \item Linear weights $\alpha^* \in \R^N$ which are fitted in a training phase using a transformed version of the gram matrix of the PnL feature map $\Phi_X$.
    \item The function $\Gamma_{\mathbb{P}}(\psi(X_{0,t}))$, which dynamically evaluates the expected PnL 
    if we had traded based on the output of the kernel with all other co-located trajectories. 
\end{itemize}
\begin{figure}
    \centering
    \includegraphics[width=\linewidth]{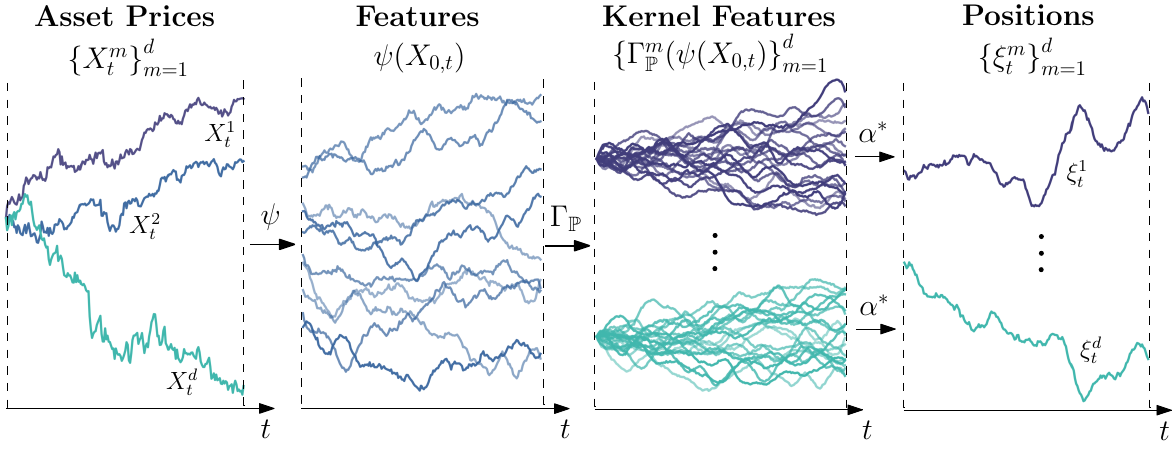}
    \caption{\centering A visual depiction of how data is passed through the Kernel Trading framework.}
    \label{fig:online_phase}
\end{figure}

\subsection{Expected PnL \& Variance}

\begin{corollary}[Expected PnL of a Kernel Trading Strategy]
    Let $\xi \in \mathfrak{F}^{\psi}_K$ be a kernel trading strategy such that 
    $\xi_t = \phi(\psi(X_{0,t}))$
    % } 
    where $\phi$ is of the form
    \begin{align*}
        \phi^* = \ExP^Y \left[ \alpha(Y) \Phi_{Y} \right]
    \end{align*}
    % }
    for some $\alpha \in 
    L^2_\P(\X)
    % }
    $. Then the terminal PnL $V_T$ of this strategy over a given input trajectory $X_{0,T}$ is given as
    \begin{align*}
        V_T^\alpha(X) & = \langle \phi, \Phi_X \rangle_{\H_K} \\
        & = \ExP^Y\left[ \alpha(Y) K_\Phi(Y, X) \right]
        \ \ \in \R.
    \end{align*}
    Now, if we have access to $N$ \textit{co-location} trajectories of  $Y$ in which we can take this expectation, denoted $\bfY = \{Y^1, \dots, Y^N\}$, then we have the empirical expectation
    \begin{align*}
        V_T^\alpha(X) & = \frac{1}{N} \sum^N_{i=1} \alpha(Y_i) \K_\Phi(Y^i, X) \\
        & = \frac{1}{N} \underbrace{\alpha^\top}_{\in \R^N} \underbrace{\K_\Phi(\bfY, X)}_{\in \R^N} \ \ \in \R
    \end{align*}
    We want to take the expectation of the PnL over a new set of $M$ sample trajectories of $X$, denoted $\bfX = \{X^1, \dots, X^N\}$. That is, we have
    \begin{align*}
        \Ex_{X\sim\bfX}[V_T^\alpha(X)] & = \Ex_{X\sim\bfX}\left[ \Ex_{Y \sim \bfY} \left[ \alpha(Y) K_\Phi(Y, X) \right] \right] \\
        & = \frac{1}{NM} \underbrace{\alpha^\top}_{\in \R^N} \underbrace{\K_{\Phi}(\bfY, \bfX)}_{\in \R^{N \times M}} \mathbf{1}_M \ \ \in \R.
    \end{align*}
    We are able to imagine that perhaps $\bfY$ is a \textit{sample} set of trajectories and $\bfX$ is the out of sample \textit{test} set of trajectories. In this case, the expectation over $\bfX$ corresponds to the expected PnL out of sample, if instead we wanted the in sample expected PnL, we can simply let $\bfX = \bfY$.
\end{corollary}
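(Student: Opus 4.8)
The plan is to reduce every identity in the statement to the PnL representation already established, namely that for any $\xi = (\phi(\psi(X_{0,t})))_{t\in[0,T]} \in \mathfrak{F}^{\psi}_K$ one has $V_T^\alpha(X) = \sprod{\phi}{\Phi_X}_{\H_K}$ from Equation~\eqref{eq:2-pnl_form}, and then to exploit linearity and continuity of the inner product to commute the expectation defining $\phi$ with the pairing against $\Phi_X$. Concretely, I would substitute $\phi = \ExP^Y[\alpha(Y)\Phi_Y]$ into $\sprod{\phi}{\Phi_X}_{\H_K}$ and move the scalar product inside the expectation to obtain $\ExP^Y\!\left[\alpha(Y)\sprod{\Phi_Y}{\Phi_X}_{\H_K}\right]$. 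Since $\sprod{\Phi_Y}{\Phi_X}_{\H_K}$ is, by the very definition of the integral kernel, equal to $\K_\Phi(Y,X)$, this yields the first claimed identity $V_T^\alpha(X) = \ExP^Y[\alpha(Y)\K_\Phi(Y,X)]$.

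The main obstacle, and the only step that is not purely formal, is justifying that $\phi = \ExP^Y[\alpha(Y)\Phi_Y]$ is a genuine element of $\H_K$ and that the expectation may legitimately be pulled through the inner product. I would argue this via Bochner integrability of the $\H_K$-valued map $Y \mapsto \alpha(Y)\Phi_Y$: using $\alpha \in L^2_\P(\X)$ together with the second moment condition $\ExP[\norm{\Phi_X}^2] < \infty$ assumed in Theorem~\ref{thm:3-mean_var_solution}, the Cauchy--Schwarz inequality gives $\ExP^Y[\,|\alpha(Y)|\,\norm{\Phi_Y}_{\H_K}] \le (\ExP[\alpha(Y)^2])^{1/2}(\ExP[\norm{\Phi_Y}^2])^{1/2} < \infty$, so the map is Bochner integrable and $\phi$ is well defined in $\H_K$. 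The bounded linear functional $\sprod{\cdot}{\Phi_X}_{\H_K}$ is continuous, hence it commutes with the Bochner integral, which is exactly the interchange used above.

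For the empirical statements I would simply specialise $\P$ to the relevant empirical measures. Taking $\P = \tfrac{1}{N}\sum_{i=1}^N \delta_{Y^i}$ on the co-location set $\bfY$ turns $\ExP^Y[\alpha(Y)\K_\Phi(Y,X)]$ into the finite sum $\tfrac{1}{N}\sum_{i=1}^N \alpha(Y^i)\K_\Phi(Y^i,X)$, which collects into $\tfrac{1}{N}\alpha^\top \K_\Phi(\bfY,X)$ with $\K_\Phi(\bfY,X)\in\R^N$ the vector of kernel evaluations. Finally, to average the PnL over a test set $\bfX = \{X^1,\dots,X^M\}$, I would apply linearity of the empirical average once more: averaging $V_T^\alpha(X)$ over $X \sim \tfrac{1}{M}\sum_{j=1}^M \delta_{X^j}$ and grouping the resulting double sum into the $N \times M$ Gram matrix $\K_\Phi(\bfY,\bfX)$ gives $\tfrac{1}{NM}\alpha^\top \K_\Phi(\bfY,\bfX)\mathbf{1}_M$. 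Each of these specialisations is routine rewriting once the Bochner-integral interchange is secured, so I expect essentially all of the substantive work to lie in that single measure-theoretic justification.
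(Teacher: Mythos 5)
Your proposal is correct and follows essentially the same route as the paper, which derives the corollary by substituting $\phi = \ExP^Y[\alpha(Y)\Phi_Y]$ into the PnL representation $V_T = \sprod{\phi}{\Phi_X}_{\H_K}$ of Equation~\eqref{eq:2-pnl_form}, pulling the expectation through the inner product to recognise $\K_\Phi(Y,X) = \sprod{\Phi_Y}{\Phi_X}_{\H_K}$, and then specialising to empirical measures for the finite-sample formulas. Your explicit Bochner-integrability justification (via Cauchy--Schwarz, $\alpha \in L^2_\P(\X)$ and the second moment condition) makes rigorous an interchange the paper leaves implicit, relying on the map $\iota$ from the proof of Theorem~\ref{thm:3-mean_var_solution}, but it is the same argument, not a different one.
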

\begin{corollary}[Variance of Kernel Trading Strategy PnL]
    Let $\xi \in \mathfrak{F}^{\psi}_K$ be a kernel trading strategy such that 
    where $\phi$ is of the form
    \begin{align*}
        \phi^* = \ExP^Y \left[ \alpha(Y) \Phi_{Y}\right]
    \end{align*}
    % }
    for some $\alpha \in L^2_\P(\X)$. We can similarly define the variance over a set of trajectories $\bfX = \{X^1, \dots, X^N\}$ as
    \begin{align*}
        \textup{Var}_{X \sim \bfX}(V_T^\alpha(X)) & = \Ex_{X \sim \bfX}[(V_T^\alpha(X))^2] - (\Ex_{X \sim \bfX}[V_T^\alpha(X)])^2 \\
        & = \Ex_{X \sim \bfX} \left[ \Ex_{Y \sim \bfY} \left[ \alpha(Y) K_\Phi(Y, X) \right]^2 \right] - \left( \Ex_{X\sim\bfX}\left[ \Ex_{Y \sim \bfY} \left[ \alpha(Y) K_\Phi(Y, X) \right] \right] \right)^2 \\
        & = \frac{1}{M}\frac{1}{N^2} \alpha^\top \K_\Phi(\bfY, \bfX) \K_\Phi(\bfY, \bfX)^\top \alpha - \frac{1}{(NM)^2} \alpha^\top \K_{\Phi}(\bfY, \bfX) \mathbf{1}_M \mathbf{1}_M^\top \K_{\Phi}(\bfY, \bfX)^\top \alpha \\
        & = \frac{1}{N^2 M} \alpha^\top \K_{\Phi}(\bfY, \bfX) \left( \textup{Id}_M - \frac{1}{M}\mathbf{1}_M \mathbf{1}_M^\top \right) \K_{\Phi}(\bfY, \bfX)^\top \alpha.
    \end{align*}
\end{corollary}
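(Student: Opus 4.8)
The plan is to reduce everything to the empirical scalar representation of the PnL from the preceding (expected-PnL) corollary and then expand the variance decomposition of Lemma~\ref{lem:2-var_pnl} term by term, with the weights $\alpha$ and the fitting set $\bfY=\{Y^1,\dots,Y^N\}$ held fixed so that the only randomness is the choice of $X$ from the test set. For the parameterisation $\phi=\ExP^Y[\alpha(Y)\Phi_Y]$ and the empirical measure on $\bfY$, the reproducing property together with the definition of $\K_\Phi$ yields the scalar PnL
\begin{align*}
    V_T^\alpha(X) = \sprod{\phi}{\Phi_X}_{\H_K} = \ExP^Y[\alpha(Y)\,\K_\Phi(Y,X)] = \tfrac{1}{N}\,\alpha^\top \K_\Phi(\bfY,X),
\end{align*}
where $\alpha=(\alpha(Y^1),\dots,\alpha(Y^N))^\top\in\R^N$ and $\K_\Phi(\bfY,X)\in\R^N$ is the column of kernel evaluations. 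I would take this single identity as the input to the whole computation.

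Next I would expand $\textup{Var}_{X\sim\bfX}(V_T^\alpha)=\Ex_{X\sim\bfX}[(V_T^\alpha)^2]-(\Ex_{X\sim\bfX}[V_T^\alpha])^2$ over the test set $\bfX=\{X^1,\dots,X^M\}$. Since $V_T^\alpha(X)$ is a scalar, its square is the quadratic form $\tfrac{1}{N^2}\alpha^\top \K_\Phi(\bfY,X)\K_\Phi(\bfY,X)^\top\alpha$; averaging the rank-one matrices $\K_\Phi(\bfY,X^j)\K_\Phi(\bfY,X^j)^\top$ over $j=1,\dots,M$ assembles exactly the Gram product $\K_\Phi(\bfY,\bfX)\K_\Phi(\bfY,\bfX)^\top$, so the second-moment term is $\tfrac{1}{N^2M}\alpha^\top \K_\Phi(\bfY,\bfX)\K_\Phi(\bfY,\bfX)^\top\alpha$. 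For the squared-mean term I would reuse the expected-PnL corollary, $\Ex_{X\sim\bfX}[V_T^\alpha]=\tfrac{1}{NM}\alpha^\top\K_\Phi(\bfY,\bfX)\mathbf{1}_M$, and square it to get $\tfrac{1}{N^2M^2}\alpha^\top\K_\Phi(\bfY,\bfX)\mathbf{1}_M\mathbf{1}_M^\top\K_\Phi(\bfY,\bfX)^\top\alpha$.

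Finally I would subtract the two and factor out the common scalar $\tfrac{1}{N^2M}$ together with the left and right factors $\alpha^\top\K_\Phi(\bfY,\bfX)$ and $\K_\Phi(\bfY,\bfX)^\top\alpha$, leaving the centering matrix $\textup{Id}_M-\tfrac{1}{M}\mathbf{1}_M\mathbf{1}_M^\top$ in the middle, which is exactly the claimed identity. The calculation is essentially routine; the only points that need care are the dimensional bookkeeping — keeping the two empirical samples distinct ($\bfY$ of size $N$, over which $\phi$ is averaged, versus $\bfX$ of size $M$, over which the variance is taken, and reading $\bfX$ as having $M$ entries so that it is compatible with $\mathbf{1}_M$) — and the observation that $\textup{Id}_M-\tfrac{1}{M}\mathbf{1}_M\mathbf{1}_M^\top$ is the orthogonal projection onto the mean-zero subspace, which certifies that the quadratic form is nonnegative and hence a bona fide variance. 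As a consistency check I would also specialise the abstract tensor-product identity $\ExP^X[\Phi_X\otimes\Phi_X]$ of Lemma~\ref{lem:2-var_pnl} directly to the empirical measure, which reproduces the same two Gram terms without passing through the scalar representation.
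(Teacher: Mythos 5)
Your proposal is correct and follows essentially the same route as the paper: the corollary's own chain of equalities is exactly your computation — substitute the empirical scalar representation $V_T^\alpha(X)=\tfrac{1}{N}\alpha^\top\K_\Phi(\bfY,X)$ from the expected-PnL corollary, expand $\Ex[(V_T^\alpha)^2]-(\Ex[V_T^\alpha])^2$ over the $M$ test trajectories, and factor out the centering matrix $\textup{Id}_M-\tfrac{1}{M}\mathbf{1}_M\mathbf{1}_M^\top$. Your added observations (the rank-one assembly of the Gram product, the positive-semidefiniteness of the centering projection, and the correct reading of $\bfX$ as having $M$ entries despite the statement's $N$-versus-$M$ typo) are sound but do not change the argument.
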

% \vspace{0.5cm}
\begin{remark}
    We can in fact see that there is a direct link between the optimal weights $\alpha^*$ computed in Theorem~\ref{thm:3-mean_var_solution} and the mean/variance in the previous corollaries. We previously computed the optimal $\alpha^*$ in \eqref{eq:3-mean-var_alpha_star}.
    Setting $\bfX = \bfY$ in the previous corrollaries, we obtain the sample expected PnL and variance as
    \begin{align*}
        \Ex_{X\sim\bfX}[V_T^\alpha(X)] & = \frac{1}{N^2} \alpha^\top \K_{\Phi}(\bfX, \bfX) \mathbf{1}_M \\
        & = \alpha^\top \mu_{\Phi} \\
        \textup{Var}_{X \sim \bfX}(V_T^\alpha(X)) & = \frac{1}{N^3} \alpha^\top \K_{\Phi}(\bfX, \bfX) \left( \textup{Id}_M - \frac{1}{M}\mathbf{1}_M \mathbf{1}_M^\top \right) \K_{\Phi}(\bfX, \bfX) \alpha \\
        & = \alpha^\top \Sigma_\Phi \alpha.
    \end{align*}
    We can then compute the form analagous to classical mean-variance,
    \begin{align*}
        (\Sigma_\Phi)^{-1}\mu_{\Phi} & = N \left( \K_{\Phi}(\bfX, \bfX) \left( \textup{Id}_M - \frac{1}{M}\mathbf{1}_M \mathbf{1}_M^\top \right) \K_{\Phi}(\bfX, \bfX) \right)^{-1} \K_{\Phi}(\bfX, \bfX) \mathbf{1}_M \\
        & = N \left( \K_{\Phi}(\bfX, \bfX) \left( \textup{Id}_M - \frac{1}{M}\mathbf{1}_M \mathbf{1}_M^\top \right) \right)^{-1} \mathbf{1}_M,
    \end{align*}
    where we obtained cancellations of the Gram matrix, since if two matrices $A$ and $B$ are symmetric and invertible, then $(ABA)^{-1}A = (AB)^{-1}$. We can clearly see that this term is very similar to the optimal weights computed in \eqref{eq:3-mean-var_alpha_star}, hence our solution retains the classical mean-variance structure of the inverse of a covariance multiplied by the expected return. 
\end{remark}

\subsection{A More Robust $\alpha$} \label{subsec:3-robust_alpha}
We previously found the solution $\alpha^*$ in \eqref{eq:3-mean-var_alpha_star} as
\begin{align} \label{eq:3-alpha_v2}
    \alpha^*(\bfX) = \left( \lambda \text{Id}_N + \frac{\eta}{N} \K_{\Phi}(\bfX, \bfX)\left(\textup{Id}_N - \frac{1}{N} \boldsymbol{1}_N \boldsymbol{1}_N^\top \right) \right)^{\dag}  \boldsymbol{1}_N.
\end{align}
However, it is common in kernel methods that the matrix that is being inverted is close to singular and not always well-conditioned, causing numerical instability even when using a pseudo-inverse. Alternatively, we are able avoid the inversion altogether, and use only dominant eignenvectors, explicitly controlled by the eigenvalues $\gamma_k$, so small values can be ignored. 
\begin{restatable}{theorem}{alphathm}\textup{(Spectral Representation of $\alpha^*$).}  \label{thm:3-spectral_alpha}
Let $\K_\Phi(\bfX, \bfX) \in \mathbb{R}^{N \times N}$ be a symmetric positive semidefinite gram matrix, and define
\begin{align*}
A := \lambda \textup{Id}_N + \frac{\eta}{N} \K_{\Phi}(\bfX, \bfX).
\end{align*}
 We define the unit-norm vector as $\mathbf{e}_N := \sqrt{N}^{-1} (1,1,\dots,1)^{\top}$, such that we have $\frac{1}{N} \boldsymbol{1}_N \boldsymbol{1}_N^\top =\mathbf{e}_N \mathbf{e}_N^\top$. Then the solution to the system (from Theorem~\ref{thm:3-mean_var_solution})
\begin{align*}
\left( \lambda \textup{Id}_N + \eta \Xi^\Phi_{\P} \right)(\alpha^*) = \boldsymbol{1}_N,
\end{align*}
where
\begin{align*}
\Xi^{\Phi}_{\P}(\cdot)(\bfX) = \frac{1}{N} \K_{\Phi}(\bfX, \bfX) \left(\textup{Id}_N - \frac{1}{N} \boldsymbol{1}_N \boldsymbol{1}_N^\top \right),
\end{align*}
is given by
\begin{align*}
\alpha^* = \frac{\sqrt{N} A^{-1} \mathbf{e}_N}{\lambda \mathbf{e}_N^{\top} A^{-1} \mathbf{e}_N},
\end{align*}
and is equivalent to that of \eqref{eq:3-alpha_v2}. Moreover, if \( A \) admits an orthogonal eigen-decomposition \( A = \sum_{k=1}^N \gamma_k \mathbf{u}_k \mathbf{u}_k^{\top} \), then
\begin{align} \label{eq:3-alpha_svd}
\alpha^* = \frac{\sqrt{N}}{\lambda} \left( \sum_{k=1}^N \frac{(\mathbf{u}_k^\top \mathbf{e}_N)^2}{\gamma_k} \right)^{-1} \sum_{k=1}^N \frac{\mathbf{u}_k^\top \mathbf{e}_N}{\gamma_k} \mathbf{u}_k.
\end{align}
\end{restatable}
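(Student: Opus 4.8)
The plan is to read the defining equation for $\alpha^*$ as a structured linear system in $\R^N$ and to exploit the rank-one nature of the centering term. First I would introduce the orthogonal projection $\Pi := \mathbf{e}_N \mathbf{e}_N^\top$ onto $\mathrm{span}(\mathbf{e}_N)$ and its complement $\textup{Id}_N - \Pi$, recording $\Pi \mathbf{e}_N = \mathbf{e}_N$, $(\textup{Id}_N-\Pi)\mathbf{e}_N = 0$ and $\boldsymbol{1}_N = \sqrt{N}\,\mathbf{e}_N$. Using $\tfrac{\eta}{N}\K_\Phi(\bfX,\bfX) = A - \lambda \textup{Id}_N$ and the covariance-operator definition from Theorem~\ref{thm:3-mean_var_solution} (in which the centering acts on the left of $\K_\Phi$, i.e. $\Xi^\Phi_\P = \tfrac{1}{N}(\textup{Id}_N-\Pi)\K_\Phi$), I would rewrite the system operator $\lambda\textup{Id}_N + \eta\,\Xi^\Phi_\P$ purely in terms of $A$ and these projections. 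Expanding $(\textup{Id}_N-\Pi)(A-\lambda\textup{Id}_N)$ and adding $\lambda\textup{Id}_N$ collapses the operator to the split form $\lambda\Pi + (\textup{Id}_N-\Pi)A$, in which the $\mathbf{e}_N$-direction is scaled by $\lambda$ and its orthogonal complement is acted on by $A$.

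With the operator in this split form, I would verify the closed form $\alpha^* = \sqrt{N}\,A^{-1}\mathbf{e}_N / (\lambda\,\mathbf{e}_N^\top A^{-1}\mathbf{e}_N)$ by direct substitution. The key cancellation is that $(\textup{Id}_N-\Pi)A$ applied to $A^{-1}\mathbf{e}_N$ gives $(\textup{Id}_N-\Pi)\mathbf{e}_N = 0$, so only the $\lambda\Pi$ term survives; since $\Pi A^{-1}\mathbf{e}_N = (\mathbf{e}_N^\top A^{-1}\mathbf{e}_N)\,\mathbf{e}_N$, dividing by the normalising scalar returns exactly $\sqrt{N}\,\mathbf{e}_N = \boldsymbol{1}_N$. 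Well-posedness holds because $A = \lambda\textup{Id}_N + \tfrac{\eta}{N}\K_\Phi \succ 0$ (as $\K_\Phi \succeq 0$ and $\lambda > 0$), so $A^{-1}$ is positive definite and $\mathbf{e}_N^\top A^{-1}\mathbf{e}_N > 0$. For equivalence with \eqref{eq:3-alpha_v2}, I would argue the split operator is injective: any $v$ in its kernel must satisfy $\Pi v = 0$ and $(\textup{Id}_N-\Pi)Av = 0$ separately (the two contributions lie in complementary subspaces), forcing $v \in \mathbf{e}_N^\perp$ and $Av \in \mathrm{span}(\mathbf{e}_N)$; since $\mathbf{e}_N^\top A^{-1}\mathbf{e}_N > 0$, the vector $A^{-1}\mathbf{e}_N$ is not orthogonal to $\mathbf{e}_N$, so $v = 0$. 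Hence the solution is unique, the Moore–Penrose pseudo-inverse in \eqref{eq:3-alpha_v2} coincides with the genuine inverse, and the two expressions for $\alpha^*$ agree.

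Finally, for the spectral representation \eqref{eq:3-alpha_svd}, I would substitute $A = \sum_{k=1}^N \gamma_k \mathbf{u}_k \mathbf{u}_k^\top$, giving $A^{-1} = \sum_{k=1}^N \gamma_k^{-1}\mathbf{u}_k\mathbf{u}_k^\top$. By orthonormality of the $\mathbf{u}_k$ I would compute $A^{-1}\mathbf{e}_N = \sum_k \gamma_k^{-1}(\mathbf{u}_k^\top\mathbf{e}_N)\,\mathbf{u}_k$ and $\mathbf{e}_N^\top A^{-1}\mathbf{e}_N = \sum_k \gamma_k^{-1}(\mathbf{u}_k^\top\mathbf{e}_N)^2$; inserting both into the closed form and factoring out $\sqrt{N}/\lambda$ reproduces \eqref{eq:3-alpha_svd} verbatim.

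I expect the main obstacle to be bookkeeping rather than conceptual: fixing the orientation of the centering so that $(\textup{Id}_N-\Pi)$ multiplies $\K_\Phi$ on the correct side, since $\K_\Phi$ does not commute with $\Pi$ and the left-centered operator $(\textup{Id}_N-\Pi)\K_\Phi$ behaves quite differently from its transpose $\K_\Phi(\textup{Id}_N-\Pi)$ when solving the system. The second delicate point is cleanly justifying injectivity and well-posedness of the split operator, which is what makes the pseudo-inverse, inverse, and closed-form solutions all coincide.
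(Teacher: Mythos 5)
Your proposal is correct, and it follows the same split-and-invert strategy as the paper's own proof: reduce the system operator to ``$\lambda$ on the span of $\mathbf{e}_N$ plus $A$ on the complement'', solve against $\mathbf{e}_N$, then expand in the eigenbasis of $A$. The genuinely valuable difference is exactly the bookkeeping point you flagged, and you resolve it the right way around while the paper does not. The paper keeps the centering on the \emph{right} of $\K_\Phi$ (as written in the theorem statement and in \eqref{eq:3-alpha_v2}), arrives at the split form $A(\textup{Id}_N-\mathbf{e}_N\mathbf{e}_N^\top)+\lambda\mathbf{e}_N\mathbf{e}_N^\top$, and then cites the ``rank-one inverse identity''
\begin{equation*}
\bigl[\lambda\mathbf{v}\mathbf{v}^\top + A(\textup{Id}_N-\mathbf{v}\mathbf{v}^\top)\bigr]^{-1}\mathbf{v} \;=\; \frac{A^{-1}\mathbf{v}}{\lambda\,\mathbf{v}^\top A^{-1}\mathbf{v}}\,,
\end{equation*}
which is false for that orientation: since $\mathbf{v}=\mathbf{e}_N$ is unit-norm, the right-projected operator satisfies $\bigl[\lambda\mathbf{v}\mathbf{v}^\top + A(\textup{Id}_N-\mathbf{v}\mathbf{v}^\top)\bigr]\mathbf{v}=\lambda\mathbf{v}$, so its inverse applied to $\boldsymbol{1}_N$ gives the trivial vector $\boldsymbol{1}_N/\lambda$, not the stated closed form. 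The identity, and hence the theorem's formula, hold precisely for the \emph{left}-projected operator $\lambda\mathbf{v}\mathbf{v}^\top + (\textup{Id}_N-\mathbf{v}\mathbf{v}^\top)A$, which is the one you construct, and it is also the operator actually produced by Theorem~\ref{thm:3-mean_var_solution}: discretising $\Xi^\Phi_\P=(\textup{Id}-1\otimes 1)\circ\Omega$ from the paper's appendix gives $\frac{1}{N}\bigl(\textup{Id}_N-\mathbf{e}_N\mathbf{e}_N^\top\bigr)\K_\Phi(\bfX,\bfX)$, i.e.\ centering on the left; the right-centered writing in the statement and in \eqref{eq:3-alpha_v2} is a transposition slip, harmless only if one pretends $\K_\Phi$ commutes with the projection. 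Your two additions beyond the paper --- verifying the closed form by direct substitution rather than invoking the identity as a black box, and the injectivity argument showing the split operator is invertible (so the pseudo-inverse in \eqref{eq:3-alpha_v2} is a genuine inverse and the solution unique) --- close precisely the gaps that let this slip go unnoticed in the paper's proof. The spectral step is identical in both arguments.
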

\begin{proof}
    Given in Appendix~\ref{sec:spectral_alpha_proof}.
\end{proof}
\begin{remark}
In practice, we may truncate the eigenvalue decomposition to the top $m<N$ terms to obtain a low-rank approximation of $\alpha^*$, which improves computational efficiency and numerical stability. This is especially useful when \(A\) is nearly low-rank or $N$ is large. It may also be useful to select an eigenvalue threshold, and only select eigenvectors corresponding to the eigenvalues that are greater than such a threshold.
\end{remark}
We find that this robust solution is often \textit{essential} in order to maximise the potential of the kernel methods. Often, even when accounting for appropriate scaling of paths (see further in Figure~\ref{fig:scaling}), the inversion becomes too unstable to be useful for small values of $\lambda$. We provide further evidence in Section~\ref{sec:5-implementation} and in Figure~\ref{fig:svd_noisiness}.

\subsection{Instantaneous Variance Constraint \& Markowitz Comparison} \label{subsec:stoch_drift_markowitz}

Previously, we found a solution to the objective in \eqref{eq:3-mean_var_objective} which penalises the \textit{terminal} variance of the PnL, which is used ubiquitously in mean-variance problems such as \cite{Basak2011DynamicSolution,Lefebvre2020Mean-variancePenalization, Jaber2021MarkowitzModels, Zhu2021OptimalObjective} as well as in the signature framework of \cite{Futter2023SignatureSignalsb}. Alternatively, instead of using the terminal variance constraint, the likes of \cite{Garleanu2013DynamicCosts, Jaber2024OptimalPropagators} have adopted a local instantaneous variance constraint, that ignores time-dependent variance either from the underlying drift/signal or from the underlying volatility of the asset. We aim to explore and make concrete how these two approaches yield different solutions, and how the choice of the objective is dependent on the traders constraints and the problem at hand. The local time variance constraint replaces the $\text{Var}(V_T)$ term in Equation \eqref{eq:3-mean_var_objective}, with a penalty
\begin{align*}
    \V^\xi_T(X) = \int^T_0 \xi_t^\top \Sigma \xi_tdt
\end{align*}
where $\Sigma \in \R^{d \times d}$ is a symmetric nonnegative definite covariance matrix. We make its motivation precise in the following example. 

\subsubsection*{Example: Stochastic Drift \& Static Volatility}

Suppose there is exploitable temporal structure in the underlying asset through a stochastic drift term $\mu_t$, that a trader wishes to profit from. Consider the $d$-dimensional asset process $X = (X^1,\dots,X^d)$, where
\begin{align*}
    dX_t = \mu_t dt + dM_t
\end{align*}
where $M$ is a continuous martingale such that
\begin{align*}
d[M^m, M^n]_t = \Sigma_{mn} dt, \quad \text{for } m, n = 1, \dots, d,
\end{align*}
and \( \Sigma\in \mathbb{R}^{d \times d} \) is a symmetric nonnegative definite covariance matrix. The mean-variance objective with local instantaneous variance is given as
\begin{align} \label{eq:3-markowitz_obj}
    J_{\text{M}}(\xi) = \Ex \left[ \int_0^T \xi_t^\top \mu_t dt \right] - \frac{\eta}{2} \Ex \left[ \int_0^T \xi_t^\top \Sigma \xi_t  dt \right] 
\end{align}
which we denote $J_{\text{M}}(\cdot)$, commonly referred to as the \textit{Markowitz} objective. The optimal strategy of the Markowitz objective in \eqref{eq:3-markowitz_obj} is famously given as
\begin{align} \label{eq:3-markowitz_optimal}
    \xi_t^{\text{M}} & = \frac{1}{\eta} \Sigma^{-1} \Ex_t [\mu_t], \ \ \ \forall t \in[0,T].
\end{align}
Meanwhile, we can reconstruct the objective in Equation \eqref{eq:3-mean_var_objective} as,
\begin{align} \label{eq:3-our_objective}
     \notag J(\xi) & = \Ex \left[ \int^T_0 \xi_t dX_t \right] - \frac{\eta}{2} \text{Var} \left[ \int^T_0 \xi_t dX_t \right] \\
    \notag & = \Ex \left[ \int^T_0 \xi_t^\top \mu_t dt\right] - \frac{\eta}{2} \text{Var}\left[ \int^T_0 \xi_t^\top \mu_t dt + \int^T_0 \xi_t^\top dM_t  \right] \\
     & =  \underbrace{\Ex \left[ \int^T_0 \xi_t^\top \mu_t dt\right] - \frac{\eta}{2} \Ex \left[ \int_0^T \xi_t^\top \Sigma \xi_t  dt \right]}_{J_{\text{M}}(\xi)} \\
     \notag & \quad \quad - \underbrace{\frac{\eta}{2} \text{Var} \left[ \int^T_0 \xi_t^\top \mu_t dt \right] - \eta \text{Cov}\left( \int^T_0 \xi_t^\top \mu_t dt,   \int^T_0 \xi_t^\top dM_t \right)}_{\text{Remainder}}, \\
\end{align}
where the remainder term corresponds to a hedging demand induced by intertemporal changes in expected return, as seen in \cite{Kim1996DynamicBehavior}.
 Both of the objectives in \eqref{eq:3-markowitz_obj} and \eqref{eq:3-our_objective} are similar, but differ by the remainder term.

In order to maximise this objective, after 
% computing
taking
the functional derivative of $J(\xi)$, we obtain a system of integral equations which are not able to be solved explicitly due to the dependence on $M_t$. The problem becomes much simpler if $\mu$ and $M$ were independent (e.g. as an inhomogeneous OU process), then the optimal strategy $\xi_t$ would satisfy the system 
\begin{align*}
    \Ex_t[\mu_t] - \eta \Sigma \xi_t -  \Ex_t[\mu_t] \int_0^T \left( \Ex_t[\xi_s^\top \mu_s] - \Ex[\xi_s^\top \mu_s] \right) ds = 0 \quad \forall t\in[0,T].
\end{align*}

This means that the optimal solution of Markowitz in \eqref{eq:3-markowitz_optimal} becomes \textit{sub-optimal} for the terminal time variance objective of \eqref{eq:3-mean_var_objective}, and the optimal solution is instead dependent on the expectation of the future drift terms $\mu_s$ for $s\in[0,T]$. This kind of result, in which $\xi_t^*$ is dependent on the expectation of the future signals decay, is in fact similar to the solutions obtained when transient market impact are included in the objective, as seen in \cite{Lehalle2019IncorporatingTrading, Jaber2025OptimalCase, Hey2023TradingEvidence, Webster2024HandbookModeling, Jaber2024OptimalPropagators}. This will provide further motivation for the outperformance that is obtained in Sections~\ref{sec:6-numerical_results} and \ref{sec:4-comparison_sig_trading}. We provide further details in Appendix~\ref{sec:appx-solution-variance}. For a solution to this problem with general stochastic drift and volatility, \cite{Lim2002Mean-VarianceMarket} obtain a strategy in terms of the solution of two BSDEs.
\begin{remark}
    It is important to note that in a pure (gross) “Sharpe ratio” sense, one cannot beat the optimal Markowitz strategy in this model above, since there is no stochastic volatility and that the variance of the increments of $X$ will be static through time. For low-frequency trading strategies, terminal variance is typically less of a priority. Instead, minimising daily volatility (and, by extension, drawdowns) is often more relevant. This suggests that a static variance constraint may be better suited to these types of strategies. On the other hand, the terminal variance penalty approach may be more appropriate for intraday traders 
    since volatility is stochastic, exhibiting intraday seasonality, which is naturally captured within our path-dependent framework. Therefore, there appears to be a trade-off: one may tolerate greater instantaneous volatility along the path in order to exploit path-dependencies that reduce terminal variance. This is conceptually similar to the trade-off introduced when optimising under transaction costs, where smoother trading increases the variance of the position process, but reduces costs, which may be the more critical constraint. Naturally, a trader will tailor their objective to include the constraints that they face in reality, one of which may be the variance of the terminal PnL.
\end{remark}
We note that the inclusion of the local time variance constraint is possible in the kernel trading framework, such that we obtain
\begin{align} \label{eq:3-kernel_inst_variance}
    \V_T^\xi(X) = \sprod{\phi}{\Omega_\P \phi}_{\H_K}
    % \int^T_0 \sprod{\phi, K(\psi(X_{0,t})}
\end{align}
where $\Omega_\P : \H_K \to \H_K$ is a linear operator defined as
\begin{align*}
    (\Omega_\P \phi)(\cdot) & = \int^T_0 K(\psi(X_{0,t},\cdot) \left\langle \phi, \Sigma K(\psi(X_{0,t},\cdot) \right\rangle_{\H_K} dt  \ \ \in \H_K.
\end{align*}
where $\Sigma$ is defined previously at the start of this example. However, since \eqref{eq:3-kernel_inst_variance} is not in the form of $J(\langle \phi, \Phi_X \rangle_{\H_K})$, we cannot directly apply the result of \cite{DeVito2004SomeMethods, Cirone2025RoughHedging} and consequently, we leave this as an avenue for future research.

\section{Numerical Results} \label{sec:6-numerical_results}

\subsection{Path-Dependent Drift} \label{subsec:path-dep-drift}

First, let us consider a one-dimensional setting in which the trader has access to a Markovian signal, but that the signals impact on the drift is non-Markovian and path-dependent. We consider the following setup
\begin{align*}
    dX_t & = \mu_t dt + \sigma_X dW_t \\
    \mu_t & = \gamma \int^t_0 G(t-s) I_s ds \\
    dI_t & = -\kappa I_t dt + \sigma_I dW_t^I.
\end{align*}
where $X$ is the underlying asset process. The stochastic drift $\mu_t$ is a convolution which embeds memory of the past trajectory of the raw signal $I$ via the convolution kernel $G$. For example, we could consider $I_t$ to characterise the order flow of a related asset, as it is shown that cross impact decays slowly according to a power-law \cite{Benzaquen2016DissectingAnalysis,Brokmann2014SlowMarkets}. In this scenario, the order flow $I_t$ would slowly impact the drift of the price, $\mu_t$, over time. That is, suppose we consider a power law kernel
\begin{align} \label{eq:power_law_kernel}
    G(t-s) = \frac{1}{(c+t-s)^\alpha}
\end{align}
where $\alpha >0$ determines the memory of the process such that a smaller $\alpha$ encodes a slower decay and a larger $\alpha$ ensures a faster decay and a higher weighting to more recent information, meanwhile $c>0$ is a small constant that ensures stability for when $t-s \approx 0$. We can see this demonstrated in Figure~\ref{fig:power_law_kernel} for when $\alpha=0.6$.
\begin{figure}[H]
    \centering
    \includegraphics[width=\linewidth]{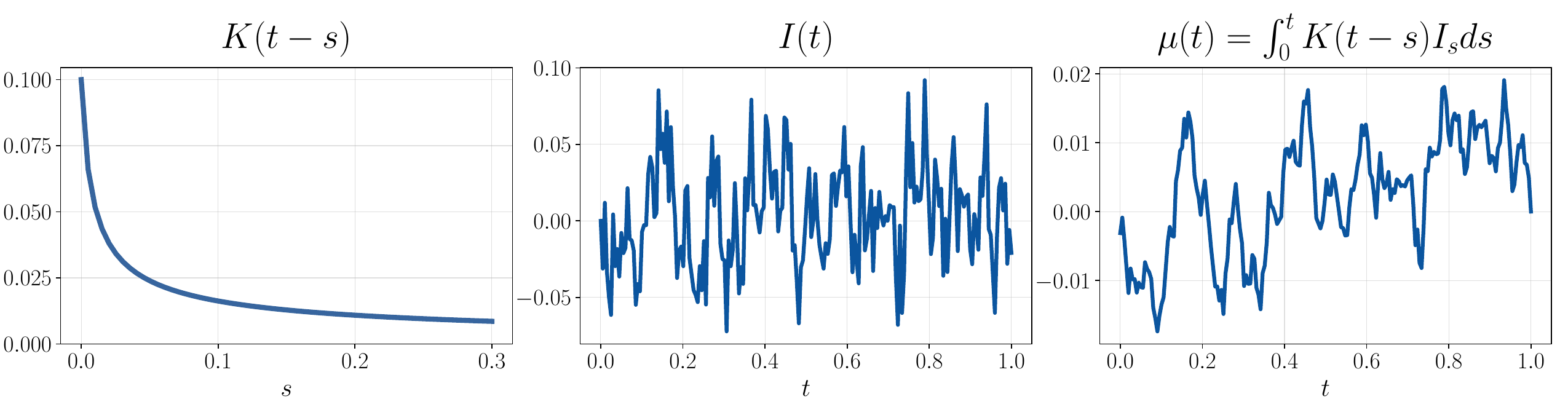}
    \caption{\centering Power law kernel $K(t-s)$ for $c=1,\alpha=0.6$ (LHS). Raw Markovian signal $I_t$ (centre). Drift term $\mu_t$ (RHS).}
    \label{fig:power_law_kernel}
\end{figure}
\vspace{-0.1cm}
Alternatively, we may also have other information, perhaps we already have a prediction $\hat{y}_t$, and that the relationship between our prediction, other factors (e.g. order flow) and the true drift is path-dependent, as seen in this example. Essentially, $I$ represents the information that we have access to, and its relationship with $\mu$ is not instantaneous. In this example, we include a scaling constant $\gamma$ for the stochastic drift $\mu_t$, so that we are able to scale the signal to noise ratio within the system. Suppose that the user wishes to fix $\text{corr}(dX_t, \mu_t dt) = \rho$. Then, for a fixed $\rho, \sigma_X, G, I$, we have
\begin{align*}
    \gamma = \frac{\sigma_X}{\sigma_\mu} \frac{\rho}{\sqrt{1-\rho^2}}
\end{align*}
where $\sigma_\mu$ is the standard deviation of $\mu$ which ensures that the correlation condition is satisfied.

We note that this example is a specific case of the setting explored in Section~\ref{subsec:stoch_drift_markowitz}, in which we have $dX_t = \mu_tdt + dM_t$. We showed that the Markowitz strategy in this setting is not optimal when we consider a terminal variance constraint and that the optimal solution depends on the temporal structure of the stochastic drift term $\mu_t$. Therefore, we wish to understand two key aspects in which the kernel method (and other path-dependent methods) can demonstrate:
\begin{enumerate}
    \item Can such methods \textit{learn} the drift $\mu_t$ while only having access to the Markovian signal $I_t$?
    \item Do these methods utilise the temporal autocorrelation structure in $I_t$ or $\mu_t$ to reduce the variance of $V_T$?
\end{enumerate}
Figure~\ref{fig:power_law_experiment} allows us to answer both of these questions.
In the left panel, we compare the objective outperformance of the Kernel method to that of a simple linear Markovian strategy $\xi_t \sim I_t$, in which we vary the speed of decay $\alpha$ in the power law kernel.
For small values of $\alpha$, this represents a long memory and slow decay, in which we expect greater outperformance of the path-dependent method relative to the Markovian strategy\footnote{This effect is also amplified, since we fix the signal to noise ratio between $\mu$ and $X$, but by increasing the memory of $\mu$, then this suppresses the signal to noise ratio of $I$ and $X$. This is due to the reflexivity $\text{corr}(I_t, dX_t) = \text{corr}(I_t, \mu_t) \times \text{corr}(\mu_t, dX_t)$ and since we keep $\text{corr}(\mu_t, dX_t) = \rho$ fixed, then as $\alpha$ decreases, $ \text{corr}(I_t, \mu_t)$ decreases and hence so does $\text{corr}(I_t, dX_t)$.}, as the kernel is able to better exploit temporal structure in the signal.
% This effect is also amplified, since we fix the signal to noise ratio between $\mu$ and $X$, but by increasing the memory of $\mu$, then this suppresses the signal to noise ratio of $I$ and $X$. This is due to the reflexivity $\text{corr}(I_t, dX_t) = \text{corr}(I_t, \mu_t) \times \text{corr}(\mu_t, dX_t)$ and since we keep $\text{corr}(\mu_t, dX_t) = \rho$ fixed, then as $\alpha$ decreases, $ \text{corr}(I_t, \mu_t)$ decreases and hence so does $\text{corr}(I_t, dX_t)$.
\begin{figure}
    \centering
    \includegraphics[width=\linewidth]{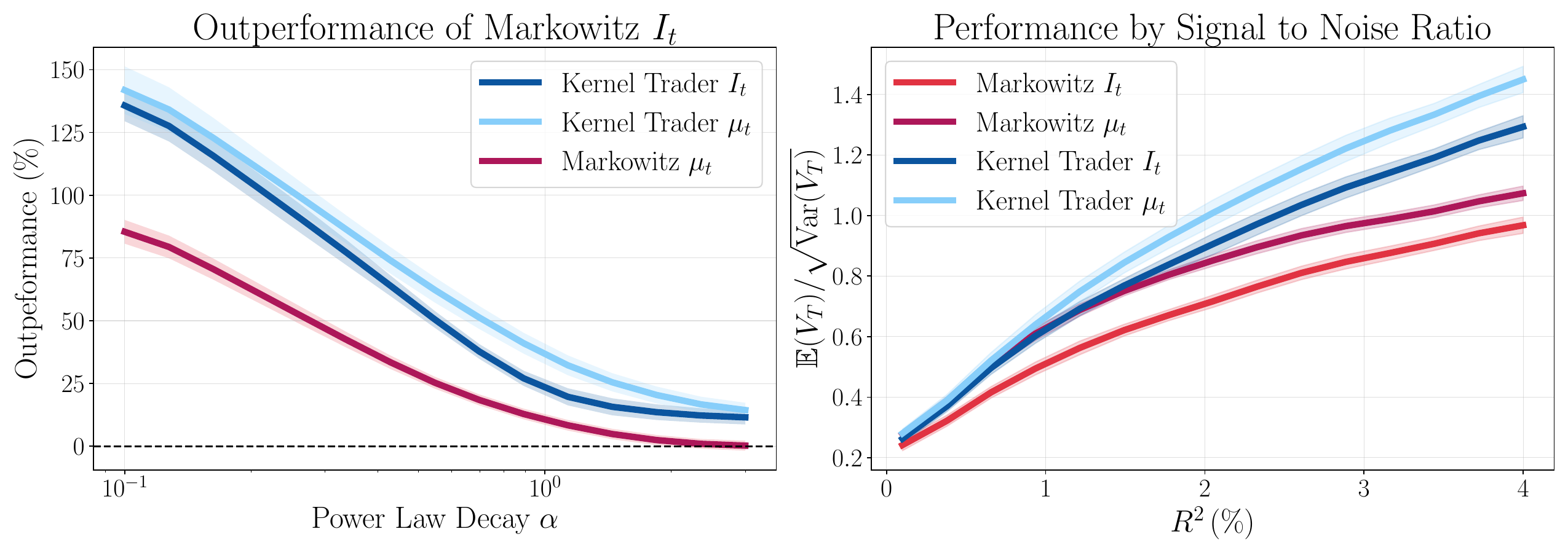}
    \caption{\centering Comparison of the Kernel method and linear Markovian strategy proportional to Markowitz for when we have access to $I$ or $\mu$. The left hand panel displays the relationship with the power law decay $\alpha$ and the right hand panel displays the relationship with the $R^2$ between $\mu$ and $dX$. All results are out of sample with a train size $N=2000$ paths.}
    \label{fig:power_law_experiment}
\end{figure}
\vspace{-0.05cm}

To ensure a fair comparison, we also include the Markowitz strategy that is linear in $\mu_t$, in the scenario where $\mu_t$ is fully observable. However we observe that the outperformance of this strategy is even greater when there is a smaller $\alpha$, suggesting that the path-dependency in $\mu_t$ is being used to a greater effect.
For larger values of $\alpha$, this means that the decay is so fast, it is essentially instantaneous, meaning that $\mu_t \approx I_t$ and therefore the outperformance of Markowitz with $\mu_t$ converges to $0\%$. To answer the first question more explicitly, we can see that the dark blue curve on the left panel is close but not exactly matching that of the light blue curve (corresponding to the kernel trader with access to $\mu_t$). This shows that the kernel method was able to learn a lot of the drift and path-dependencies of $\mu_t$, by only having access to $I_t$, but it was not able to match it entirely.

On the right hand side of Figure~\ref{fig:power_law_experiment}, we instead vary the signal to noise ratio between $\mu_t$ and $dX_t$ for a fixed $\alpha$. We note that as the relationship becomes less noisy, the outperformance becomes more pronounced and in percentage terms, this outperformance is actually fairly constant beyond a certain value.

\subsection{Market Data with Signals}
Previously in Section~\ref{subsec:path-dep-drift}, we constructed an entirely synthetic experiment, so that we could demonstrate how the method performs in a controlled environment under a variety of dynamics, which we were able to via altering the parameters of the model. We are however particularly interested in how the framework performs when using market data, since it exhibits complex, non-trivial structures such as stochastic volatility, intraday effects and regime shifts. However, when working with market data it is often easy to yield spurious results since we cannot control the experiment entirely and so small sample sizes and biases can creep in. Hence, in this example we apply a blend of both, using the market data to provide a non-trivial foundation, but we use a \textit{synthetic} predictive signal, so that we can control parameters such as the $R^2$ of the predictive signal, as well as its decay, forecast horizon, and its temporal structure. Trading algorithms are sensitive to all of the aforementioned stylised facts of a signal, and we wish to test how these impact the performance of the kernel trading framework under a variety of circumstances.
\subsubsection*{Synthetic Signals}
Suppose we are a quantitative trader which has produced a model $\P$, in order to forecast the return of the underlying asset $X$ over the horizon $[t,t+w]$, i.e the predictive signal is given as
\begin{align*}
    \hat{y}_{t,w} = \Ex^\P \left[ \int^{t+w}_t dX_s \, \bigg| \, \F_t \, \right] = \Ex^\P \left[ X_{t+w} \, | \, \F_t \right] - X_t,
\end{align*}
and we denote the true return as $y_{t,w} = X_{t+w} - X_t$. The model $\P$ may originate from a variety of features (e.g. price based, order book based, news based) and a vast array of modelling techniques (from linear regression to neural networks). Therefore, signals are generally non-linear path-dependent functions that retain some of the temporal structure of the prior modelling choices. Hence, in order to model and generate synthetic $\hat{y}_{t,w}$, we put forth the following general model
$$
\hat{y}_{t,w} = \beta_1 \eta_{t,w} + \beta_2 z_{t,w},
$$
where $\beta_1, \beta_2$ are constants, $\eta$ is a stochastic process that represents the \textit{signal} captured, i.e it is a function of the true asset returns $X_{s\in[t,t+w]}$ and $z$ is an uncorrelated \textit{residual noise} process. In practice, we observe \textit{alpha decay}, that is that the true signal decays as
\begin{align*}
    \eta_{t,w} = \int^{w}_0 K(w-s) \, dX_{t+s}
\end{align*}
for some kernel function $K: [0,w] \to \R^d$. 

In practice, unscaled predictions $\hat{y}$ will also likely exhibit the same stylised facts as the true $y$, i.e they will exhibit heavy tails from stochastic volatility and slowly decaying autocorrelation of its absolute values, as well as the same autocorrelation arising from overlapping forecast horizons. Therefore, in order to embed these stylised facts in a synthetic prediction, we must ensure they are represented within the residual noise process $z$. We therefore propose that $z_{t,w}$ be modelled as
\begin{align*}
    z_{t,w} = \int^{t+w}_t d \epsilon_s = 
    \underbrace{ \sqrt{\frac{\gamma}{\mu_V}} \int^{t+w}_t  \sigma_s dW_s^{\mathrm{SV}}}_{\substack{\text{Proportion } \gamma \text{ of Variance} \\ \text{ of Residual is SV Noise}}} + \underbrace{ \sqrt{1-\gamma} \int^{t+w}_t  dW_s^{\mathrm{add}}}_{\substack{\text{Proportion } (1-\gamma) \text{ of Variance } \\ \text{of Residual is Additive Noise}}}
\end{align*}
where $\sigma_t$ is a measure of volatility, $W^{\mathrm{add}}$ and $W^{\mathrm{SV}}$ are independent Brownian motions and $\mu_V =  \Ex[V_t] = \Ex[\sigma_t^2]$ is the expected variance, to ensure each noise component retains variance 1. We don't actually know what the \textit{true} volatility process is at any given time, but we can approximate it in two different ways, which become a better representation as the number of observations increases and time grid becomes finer. Suppose we observe $X_t = X_{t_0},\dots, X_{t_N} = X_{t+w}$, then we can approximate the instantaneous volatility either as the volatility observed over the whole period or as a proxy for instantaneous volatility using the absolute return
\begin{align*}
    1. \quad & \sigma_s \approx \sqrt{ \frac{1}{w} \sum_{t_i \in [t , t+w]} r_{t_i}^2} \quad \forall s \in [t,t+w] \\
    2. \quad & \sigma_s \approx \frac{|r_{t_k}|}{t_{k+1} - t_k} \text{ for } s \in [t_k, t_{k+1}]
\end{align*}
where $r_{t_k}$ is given as $X_{t_{k+1}} - X_{t_{k}}$ for $k = 1, \dots, N$. 
\begin{figure}[H]
    \centering
    \includegraphics[width=\linewidth]{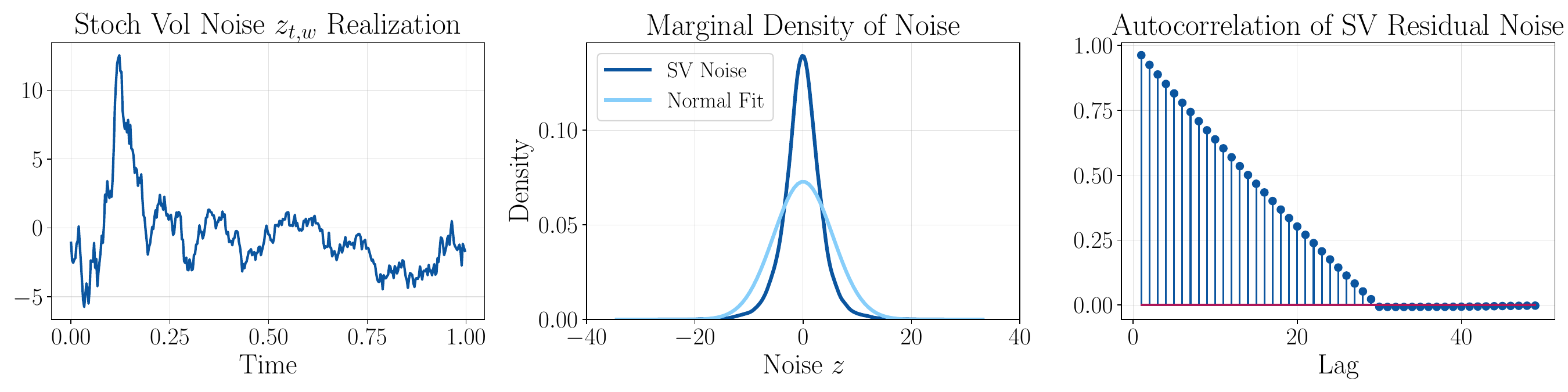}
    \caption{\centering Visualisation of the stochastic volatility noise component which captures the linear autocorrelation of the lookforward horizon $w=30$, and captures a non-normal marginal density with heavy tails. Here we have $\sigma_z = \sqrt{w} = \sqrt{30}$.}
    \label{fig:SV_noise}
\end{figure}
Therefore, this form of residual noise is able to capture heteroskedastic and heavy-tailed residuals through the parameter $\gamma \in [0,1]$, such that when $\gamma=0$, the predictions $\hat{y}$ will look normally distributed for small $R^2$ and when $\gamma=1$ they will mirror the marginal distribution of $y$.

Now, piecing the \textit{signal} and \textit{residual noise} components together, we obtain the final model 
\begin{align} \label{eq:correlated_admissible}
    \notag \hat{y}_{t,w} & = \beta_1 \eta_{t,w} + \beta_2 z_{t,w} \\
    \hat{y}_{t,w} & = \rho \sigma_y \left( \frac{\rho}{\rho_{y\eta}} \frac{1}{\sigma_\eta} \eta_{t,w} + \frac{1}{\sigma_z} \sqrt{1- \frac{\rho^2}{\rho_{y\eta}^2} } z_{t,w} \right)
\end{align}
which we define as a \textit{correlated admissible prediction}. This prediction has the following properties:
\begin{enumerate}
\vspace{-0.1cm}
\itemsep-0.65em
    \item Corr$(y, \hat{y}) = \rho$
    \item $\hat{y}$ minimises the \textit{mean squared error} between itself and the true $y$,
    $$\hat{y} = \argmin_{\eta} \text{Var}(y-\eta).$$
    \vspace{-0.5cm}
    \item The $R^2$ between $\hat{y}$ and the true $y$ is equal to $\rho^2$, i.e
    \begin{align*}
        R^2 = 1 - \frac{\mathrm{Var}(y - \hat{y})}{\mathrm{Var}(y)} = \frac{2 \mathrm{Cov}(y, \hat{y}) - \mathrm{Var}(\hat{y})}{\mathrm{Var}(y)} = \frac{\mathrm{Var}(\hat{y})}{\mathrm{Var}(y)} = \rho^2.
    \end{align*}
    \vspace{-0.5cm}
\end{enumerate}
Therefore, we now have a model that is reflective of the complexities arising in market data, but we are able to control 
\begin{itemize}
\vspace{-0.2cm}
\itemsep-0.5em
    \item The correlation and $R^2$ of the predictive signal
    \item The speed and type of alpha decay (e.g. exponential or power-law)
    \item The original prediction horizon $w$
    \item How heavy-tailed and heteroskedastic the residual errors are via $\gamma$.
\end{itemize}

\subsubsection*{The Experiment}
In this experiment we obtain\footnote{We obtained this data from \hyperlink{Databento}{www.databento.com}, \cite{Databento2025NasdaqMSFT.}.} intraday 5-minutely bars of Microsoft (MSFT) and set up a trading strategy during the regular session of trading between 09:30-16:00 EST. This results in 78 distinct trading times per day. Throughout, we train on the period of 01/01/2020 - 31/12/2023 and subsequently use 01/01/2024 - 31/03/2025 as our testing period which is the latest available at the time of writing. This results in a total of $\sim$78k trades in sample and $\sim$20k trades out of sample. A main advantage of having a synthetic signal is that we are able to choose specifically the amount of signal to noise in our system. For a low $R^2$, it is often possible to obtain spurious results due to small sample sizes in which the residual noise process dominates --- even when in the order of tens of thousands. In order to eliminate (or at least quantify) this randomness, we simulate from many random seeds in which to generate synthetic predictions (as in Section~\ref{subsec:path-dep-drift}) to obtain not only an expectation of performance but also the distribution in performance due to randomness arising from the predictions.

Throughout, we consider a 15-minute forecast horizon $w$ with an $R^2$ of $0.5\%$, the reader may believe this is too high or too low, however this was an arbitrary choice and also chosen to be small enough to demonstrate performance in low signal to noise ratio circumstances. We also assume that the decay kernel $K$ is given as a power law which is the same as seen in \eqref{eq:power_law_kernel} for a given power law decay $\alpha$.
\begin{figure}[h]
    \centering
    \includegraphics[width=\linewidth]{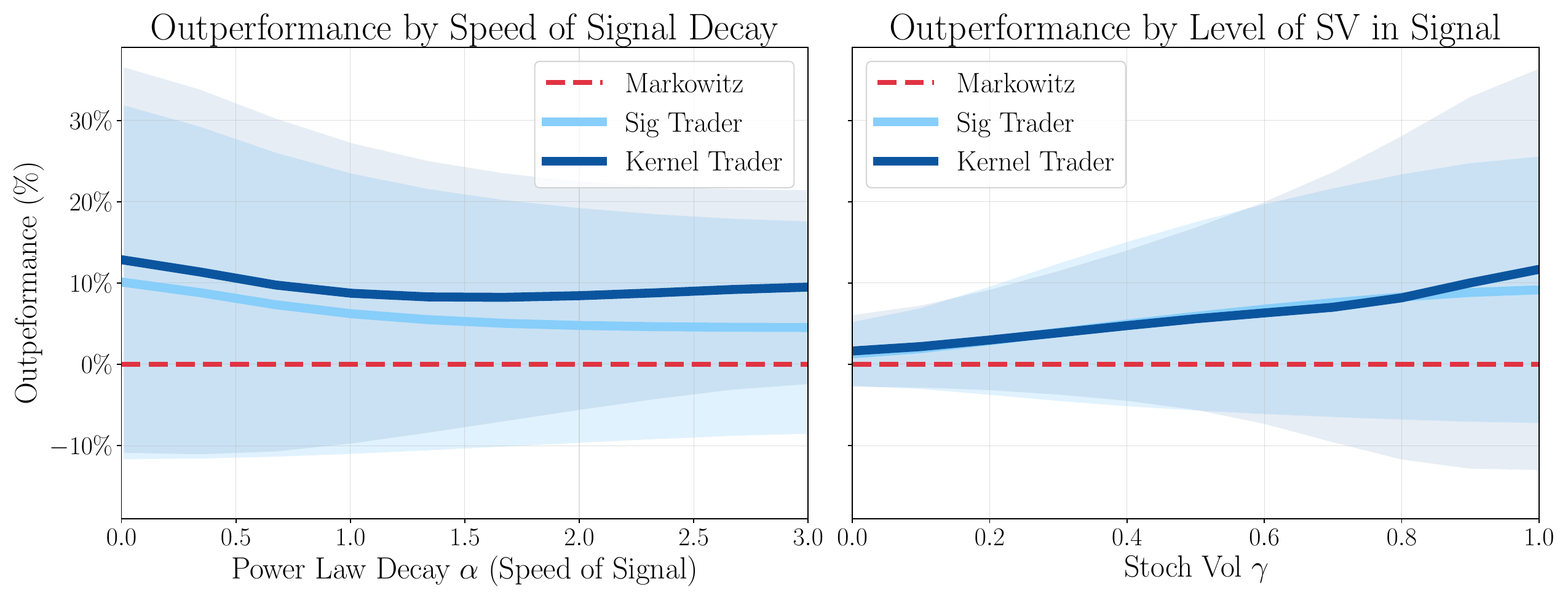}
    \caption{\centering The left hand panel demonstrates the outperformance in the objective of the Markowitz strategy for varying speeds of signal for when the decay kernel $K$ is a power law and $\gamma=1$. The right hand panel shows the outperformance when the stochastic volatility (and heavy tail) parameter $\gamma$ is varied between 0 and 1, for when the decay $\alpha=0.5$. The Sig-Trader is truncated at order $K=5$. All results are out of sample on the test of 01/01/2024 - 31/03/2025 ($\sim$20k trades).}
    \label{fig:market-data}
\end{figure}
The results in Figure~\ref{fig:market-data} demonstrate a consistent outperformance of the Markovian (Markowitz) strategy that is linear in the predictive signal $\hat{y}$.  The left-hand panel shows that this outperformance holds regardless of whether the signal decays quickly or slowly, with a marginal advantage when the signal is slower and exhibits greater path-dependency. Notably, we also observe that the uncertainty (and the discrepancy between path-dependent methods and Markovian methods) increases when the signal has longer memory, which is reflected in the wider confidence bands around the mean.

Similarly, the right-hand panel shows consistent outperformance as the synthetic signals become increasingly heavy-tailed arising from stochastic volatility. This higlights how, even with the same underlying asset and the same predictive performance in $R^2$, if two signals have different temporal structures (e.g. in the form of stochastic volatility) then this can cause very different results. Not only are the results different in expectation, but the uncertainty inherited is much greater for when $\gamma$ is larger. Finally, we observe just how similar the performances are of the Kernel Trader and the Sig-Trader (\cite{Futter2023SignatureSignalsb}), not only in their expected outperformance but also the distribution of performance for different random seeds in which to generate the synthetic signals. 
It is encouraging that both of the path-dependent methods are similar, meaning they may be close to the “true” optimum performance.
We wanted to ensure that all results are fair, realistic and comparable and so these results were ran for 30 simulations (random seeds), which provides the shaded band around each solid line. We note that numerical results are usually sensitive to the exact testing period, the length of period and also the choice of regularisation parameters that have been optimised, but by simulating over many random seeds we are able to smooth out any irregularities to obtain more accurate results. While the synthetic signals provide a neat structure in which to test in, in reality predictive signals can be more complex and dependent on other sources of information such as order flow and price movements in related assets. Therefore, if one were implementing these methods in practice, it may be more beneficial to include these factors in the feature embedding $\psi$ and this may result in an even greater outperformance.

\section{Comparison with Signature Trading \cite{Futter2023SignatureSignalsb}} \label{sec:4-comparison_sig_trading}

The objective criterion that is considered in Section~\ref{sec:3-mean_variance} is the same objective that is explored in \cite{Futter2023SignatureSignalsb} in which the authors parameterised the trading strategy as a linear functional on the signature. It is natural then for us to compare the kernel method solution derived in Section~\ref{sec:3-mean_variance}, for when the chosen kernel in our framework is the \textit{Signature Kernel}. The main questions we aim to explore are
\vspace{-0.2cm}
\begin{itemize}
\itemsep-0.5em
    \item Are kernel methods more efficient and robust with smaller datasets?
    \item Are kernel methods more computationally efficient with a large number of assets/signals?
    \item Does the signature method converge to the kernel method as we increase the order of truncation?
\end{itemize}

\subsection*{Mathematical Comparison}

The position $\xi_t^{\text{ker}}$ of a kernel trading strategy is given as
\begin{equation*}
\xi_t^{\text{ker}} = \phi^*(\psi(X_{0,t})) =   \underbrace{ \frac{1}{N \lambda} \Gamma_{\P}\left(\psi(X_{0,t}) \right)}_{\in \R^{d \times N}} \underbrace {\left( \lambda \text{Id}_N + \frac{\eta}{N} \K_{\Phi}(\bfX, \bfX)(\textup{Id}_N - \frac{1}{N} \boldsymbol{1}_N \boldsymbol{1}_N^\top) \right)^{\dag}  \boldsymbol{1}_N}_{\text{Optimal Weights } \alpha^* \in \R^N},
\end{equation*}
where $K_\Phi$ is induced by the \textit{Signature kernel}. Meanwhile, the position $\xi_t^{\text{sig}}$ of a signature trading strategy is given as
\begin{align*}
    \xi_t^{\text{sig}} = \phi^*(\psi(X_{0,t})) = \underbrace{\text{Sig}(\psi(X_{0,t}))}_{\in \R^M} \underbrace{\left( \lambda \text{Id}_{dM} \, + \, \eta \Sigma^{\text{sig}}(\bfX)  \right)^{-1} \mu^{\text{sig}}(\bfX) }_{\text{Optimal Linear Functionals } \ell^* \, \in  \, \R^{M \times d}}
\end{align*}
where $M = \lvert \W^K_{N+d+1} \rvert$ is the number of terms in the truncated signature of $\psi(X_{0,t})$ at order $K$. Both methods rely on the same feature maps to extract path-dependencies and non-linearities, namely $\psi$ and $\text{Sig}(\cdot)$. However, in the signature case, the dynamic component is universal for all assets, while there is a different set of weights (linear functions) for each asset. In contrast, the kernel method learns a single set of universal weights $\alpha^*$, since the asset dependence is embedded directly into the feature map $\Gamma_{\mathbb{P}}$.

It is essential to note that the Sig-Trader method must \textit{truncate} the signature at order $K$, since we cannot compute the full (infinite) signature. Meanwhile, the signature kernel method does not require truncation, enabling it to capture higher-order terms that may improve strategy performance.

\subsection*{Learning Markovian Dynamics} \label{sec:4-learning_markov}

Let us now consider the case in which each asset $X^m$ is given as an OU process for $m = 1, \dots, d$, i.e.,
\begin{align*}
    dX_t^m = \underbrace{\theta_m (\mu_m - X_t^m)}_{\mu_t} dt + \underbrace{\sigma_m dW_t^m}_{dM_t}, \ \ \ \forall m=1,\dots,d,
\end{align*}
where $W = (W^1, \dots, W^d)$ is a $d$-dimensional correlated Brownian motion, and $\theta_m$, $\mu_m$, $\sigma_m$ are constants.
 We note that this example is a specific case of the stochastic drift scenario introduced in Section~\ref{subsec:stoch_drift_markowitz}, where we have $\mu_t = \theta (\mu - X_t)$ and $dM_t = \sigma dW_t$. Therefore, the \textit{Markowitz} strategy is given as
\begin{align*}
    \xi_t^M \sim \Sigma^{-1} (\mu - X_t).
\end{align*}
up to some scaling constant. It is particularly relevant here to understand whether both the signature and kernel methods can learn or outperform the Markowitz solution, and how quickly they do so in practice.

\subsection{Sample Size Convergence \& Path Length}

We are interested to explore whether kernel methods can be more efficient at learning than the original feature map itself when given a smaller number of training samples.
However, this can be a delicate scenario to test and is very much a data-dependent problem, since the complexity of the relationships being learnt and the signal-to-noise ratio can vary the results. Therefore, in relation to this, the results will also depend on the regularisation used, the choice of feature map and the choice of $m<N$ eigenvectors in the robust $\alpha$ setting of Section~\ref{subsec:3-robust_alpha}. 
In Figure~\ref{fig:regularisation_kernel}, we show how the optimal $\lambda^*$ out of sample changes as we increase the sample size. Therefore, for a fair comparison in this context, when we fit using a given sample size $N$, we optimise the regularisation parameter $\lambda^*$ and record the objective in this case.

\begin{figure}[h]
\centering
  \includegraphics[width=\linewidth]{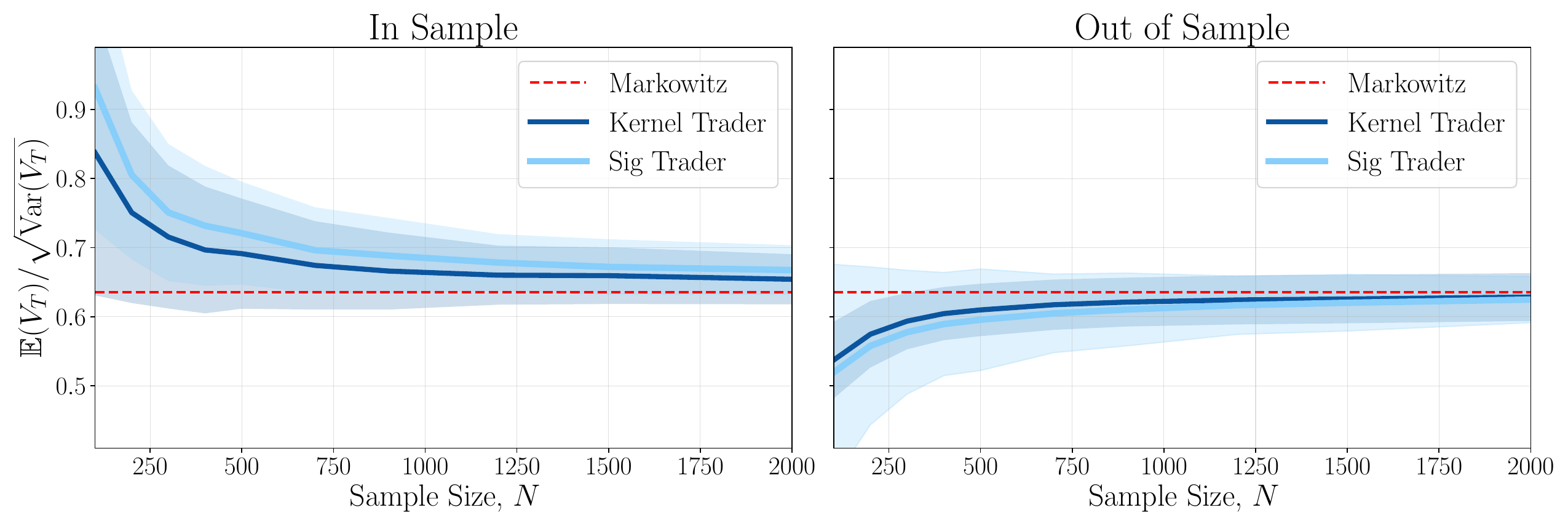}
  \caption{\centering Sample size convergence when learning from the underlying asset. The length of paths is $20$. The LHS shows the in sample performance as the number of the number of training samples increases, while the RHS shows the corresponding out of sample convergence.}
  \label{fig:sample_convergence}
\end{figure}

Figure~\ref{fig:sample_convergence} shows that the convergence in performance is relatively comparable between the two methods once regularised. We do note that the kernel method converges slightly faster, and that its out of sample variance for smaller sample sets is also lower, suggesting it to be more \textit{robust} to out of sample uncertainty.
In this scenario, both methods are capable of learning simple Markovian dynamics directly from the underlying process, without requiring a helping hand in the form of a signal or an expectation of the drift. It is also able to learn the inverse covariance matrix weightings directly. 
\begin{figure}[H]
    \centering
    \includegraphics[width=\linewidth]{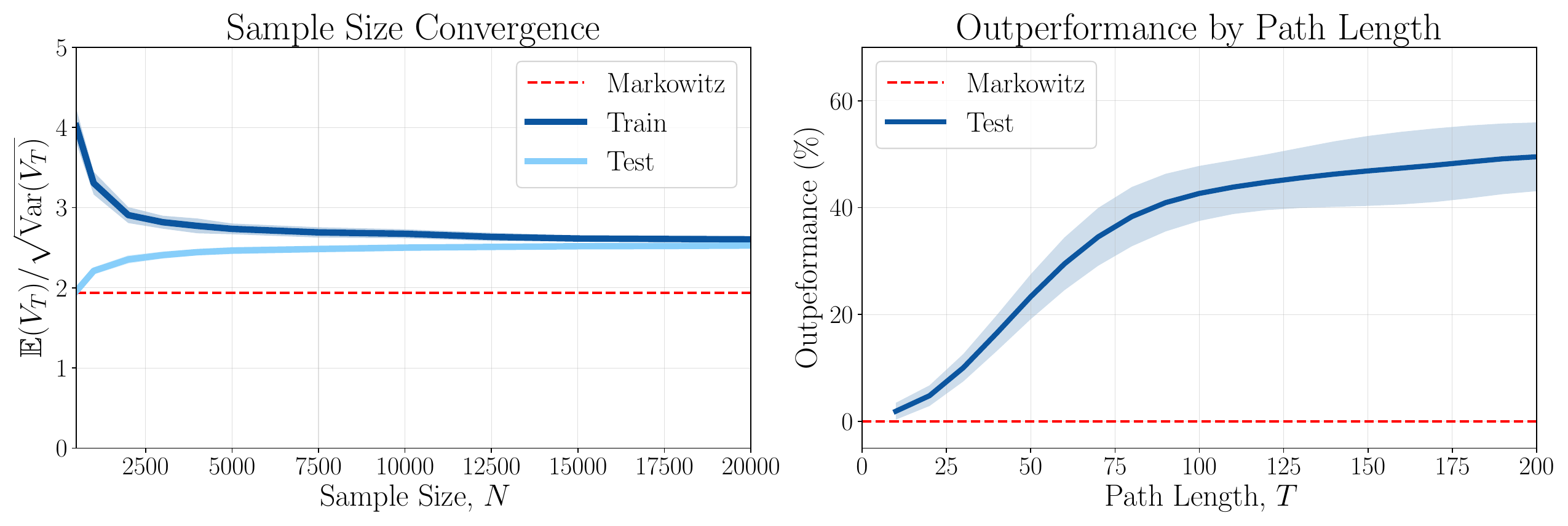}
    \caption{\centering Sample size convergence for when we have paths of length 50 (LHS). The RHS panel shows how the \textit{out of sample} outperformance of the Markowitz strategy increases as the path length increases, when trained on $N=5000$ paths.}
    \label{fig:sig_samples_pathlength}
\end{figure}
\vspace{-0.2cm}
In this example we compare in the case where the Sig-Trader is truncated at order 4 and the path length\footnote{In the experiments we opted for $T=20$, as this choice ensures that one could efficiently run many iterations for all sample sizes in a feasible time, as the signature kernel computation typically scales as $\mathcal{O}(T^2)$.\\ Truncation at order $4$ is a computationally motivated choice, while aiming to include reasonably many signature terms. Higher order terms often tend to be more influenced by noise in measurement error.} is $T=20$. The purpose of this experiment is to demonstrate the convergence for small sample sizes, however both methods do not (significantly) outperform the Markowitz strategy (out of sample) for $N=2000$ samples, while in Section~\ref{subsec:stoch_drift_markowitz} we discuss how there should be an outperformance since the \textit{Markowitz} solution becomes suboptimal. This is down to two main things, the first of which is that in fact to truly achieve the best performance possible (such as that \textit{in sample}), then one requires at least $N=2000$ samples. Secondly, the path length in this case $(T=20)$ is simply not long enough to produce a large effect. Since the true optimal solution requires learning the covariance structure of the drift along the whole path, the magnitude of this and hence the difference between Markowitz and the path-dependent methods is small. Therefore, the left hand panel of Figure~\ref{fig:sig_samples_pathlength} demonstrates for when we have paths of length $50$, there is an obvious outperformance of the Markowitz strategy.
In order to highlight this further, the right hand panel illustrates that for longer paths, there is a greater outperformance of the Markowitz strategy in $\%$ terms (out of sample). 
\begin{figure}[H]
\centering
  \includegraphics[width=\linewidth]{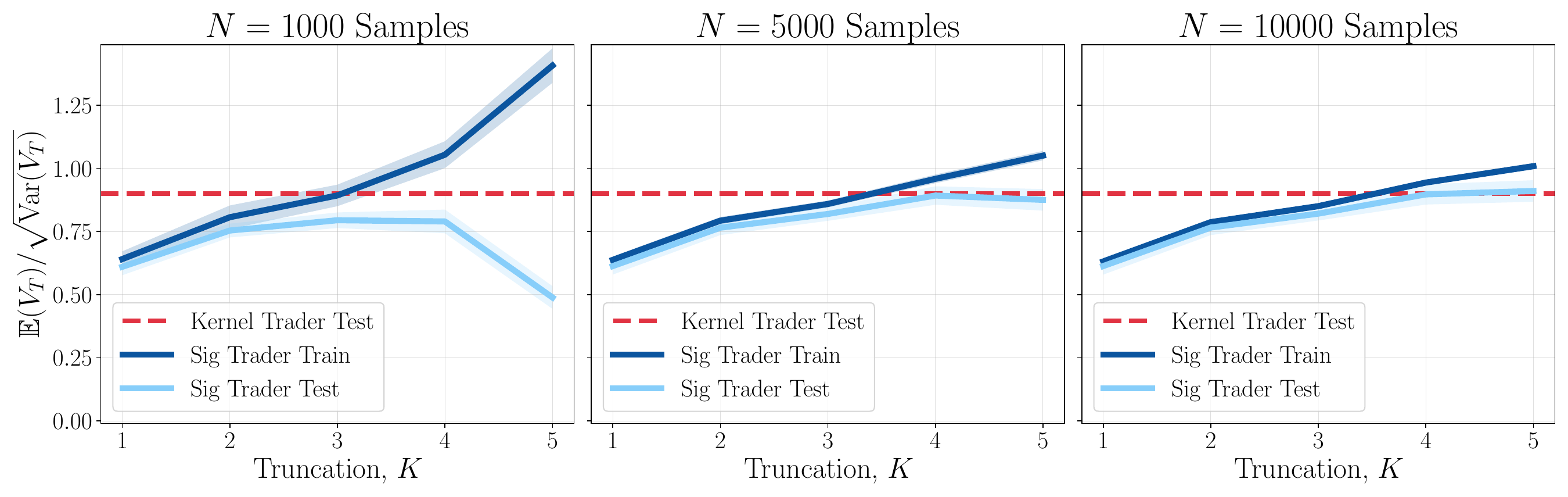}
  \caption{\centering Comparison of the Signature method (without any regularisation) for higher levels of truncation and samples. The kernel method is trained on $N=2000$ samples by comparison.}
  \label{fig:truncation}
\end{figure}
\vspace{-0.4cm}
We also note that while in the experiments above, we have similar performance between the kernel method and the signature kernel method, this is mainly due to the signature being truncated at order 5, and that this is a good enough approximation for the complexity of the relationships being learnt. This is reflected in Figure~\ref{fig:truncation} which shows how the kernel method is an upper bound for the signature method, in which the performance of the Sig-Trader converges to that of the Kernel trader as the truncation $K$ increases. In this experiment, no regularisation is used for the Sig-Trader in order to demonstrate its dependence on sample size.

\vspace{-0.2cm}
\subsection{Computational Complexity}

The main distinction between these two methods, and between kernel methods more generally, is the computational efficiency. Since we are not able to compute the entire untruncated signature in practice, we instead compute the inner product between two signatures using the \textit{kernel trick} proposed in \cite{Salvi2020ThePDE}. Therefore, we wish to explore when each method is more computationally efficient, depending on
\begin{itemize}
\vspace{-0.2cm}
\itemsep-0.5em
    \item The dimension $d$ of the input feature space $\psi(X_{0,t})$
    \item The length of the path $T$
    \item The number of training samples $N$
    \item The order of truncation $K$ in the Sig-Trader framework.
\end{itemize}
\vspace{-0.2cm}
The kernel trick method of \cite{Salvi2020ThePDE} requires solving a linear hyperbolic PDE to compute the signature kernel, which is non-linear in the path length $T$, typically scaling as $\mathcal{O}(T^2)$. Since we are computing a Gram matrix of $N \times N$ kernel evaluations, this scales quadratically in $N$. As $N$ grows, we must also invert a larger matrix, though kernel evaluation often dominates the inversion cost for large $N$. We compare results using the \href{https://github.com/crispitagorico/sigkernel}{\texttt{sigkernel}} package, with a dyadic order equal to zero.
\begin{figure}
\centering
  \includegraphics[width=\linewidth]{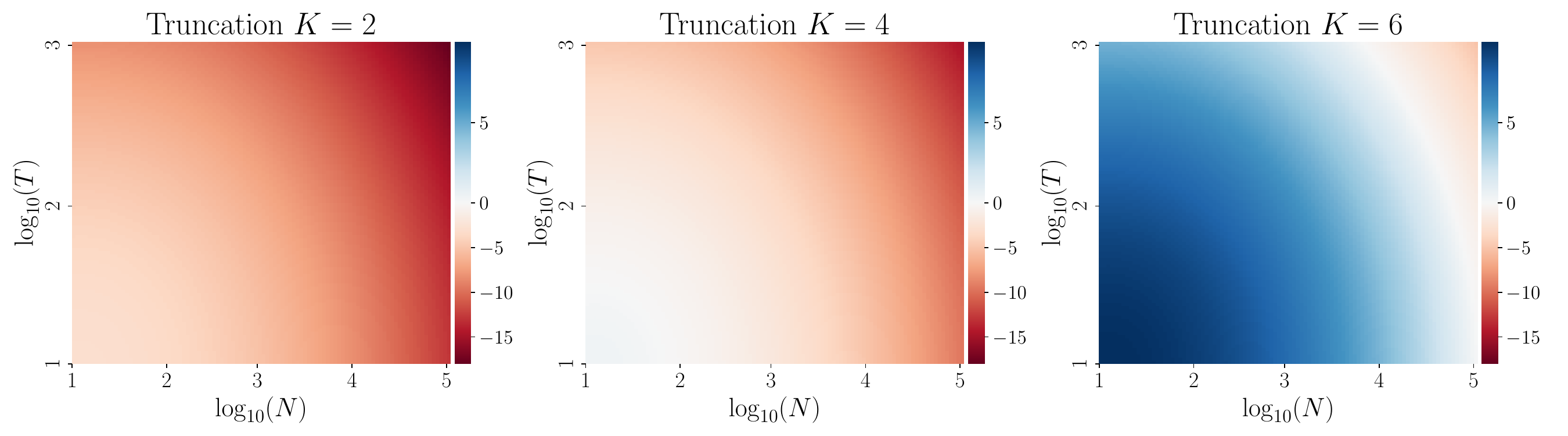}
  \caption{\centering The heatmaps above show the relative computational runtime of the Sig-Trader vs Kernel Trader. The colorbar values represent the difference in log-time of the computation.
  }
  \label{fig:computation_dim}
\end{figure}

Figure~\ref{fig:computation_dim} illustrates the difference in (log) computational runtime of the kernel trader framework and the truncated signature framework for different orders of truncation $K$.  A negative (red) value indicates that the Kernel Trader takes longer to compute, while blue indicates that it is more efficient. In this example, we fix the dimension of the inputs as $d=4$. We observe that for longer paths and more samples, the signature method is faster (red areas) for truncation $K \leq 4$, however for larger orders of truncation (e.g when $K=6$), the signature kernel becomes faster to compute across most $N,T$.
Likewise, in Figure~\ref{fig:computation_truncation}, we compare how the input dimension (i.e. number of channels) affects the computational runtime. In this example, we fix the order of truncation for the signature at $K=4$ and notice that the inflection point occurs around $d=5$ input channels. For example, when we have paths of length 100 and ~1000 samples, we observe very similar runtimes, represented by the white circular region in the middle ($d=5$) heatmap.
\begin{figure}[H]
\centering
  \includegraphics[width=\linewidth]{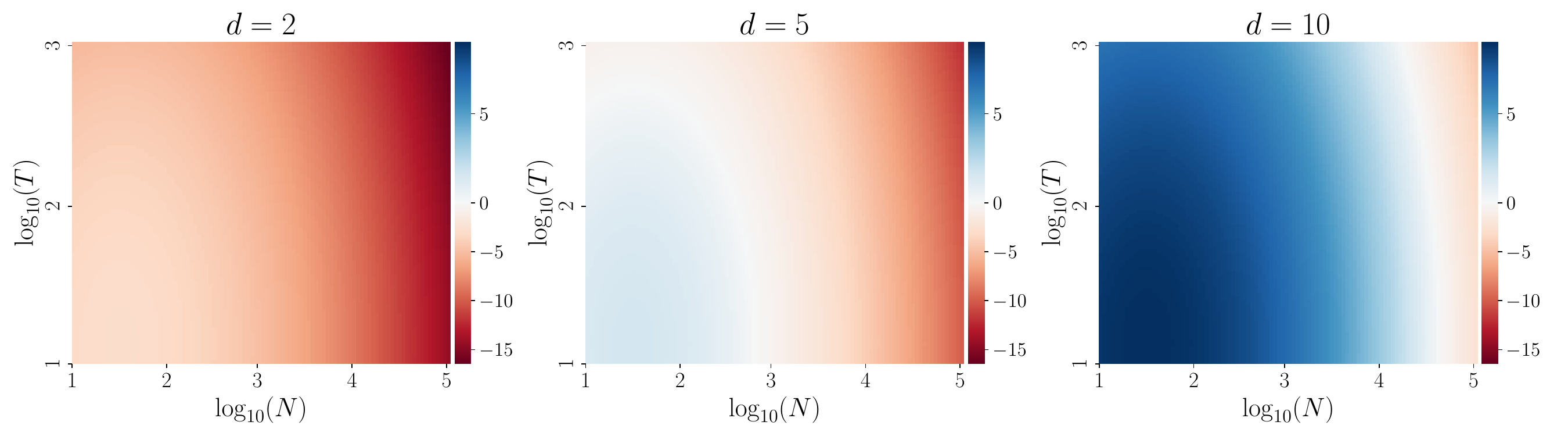}
  \caption{\centering The heatmaps show the relative computational demand of the Sig-Trader vs Kernel Trader for when the number of channels of the input paths is equal to $d=2,5,10$.}
  \label{fig:computation_truncation}
\end{figure}
These comparisons are based on runtime\footnote{All runtimes were recorded on two NVIDIA GeForce RTX 2080 Ti GPUs.} only and do not account for memory usage. The batch size for Gram matrix computation was chosen to minimise runtime while remaining within GPU memory limits. With access to more GPUs or parallel compute, the kernel method could scale more efficiently, although we did not benchmark this explicitly. We also note that kernel methods tend to use more memory due to the PDE-based computation of the signature kernel, which can limit batch size and introduce bottlenecks.

An alternative implementation, \href{https://github.com/tgcsaba/KSig}{\texttt{KSig}}, was recently released to support GPU-accelerated computation of the signature kernel~\cite{Toth2025AKernel}, and is compatible with \texttt{Scikit-Learn}. Although we did not benchmark this implementation directly, our aim was to highlight the trade-off over the parameter space $(N, T, K, d)$. There exists a region in this space where each method is more efficient, and this boundary may simply shift with alternative or optimised implementations of the signature kernel.

\subsection{Summary of Comparison} \label{subsec:5.3-comparison}

\begin{table}[H]
\centering
\renewcommand{\arraystretch}{1.4}
% \begin{tabularx}{\textwidth}{@{}l X X@{}}
\begin{tabularx}{\textwidth}{
  >{\raggedright\arraybackslash}m{3cm}  % First column: multi-line + top-aligned
  >{\raggedright\arraybackslash}X       % Second column
  >{\raggedright\arraybackslash}X       % Third column
}
\toprule
\textbf{ } & \textbf{Signature Trading \cite{Futter2023SignatureSignalsb}} & \textbf{Kernel Trading} \\
\midrule
\vspace{0.35cm}
\textbf{\makecell[l]{\textbf{Strategy}\\\textbf{Performance}} } 
& Higher orders of truncation $K$ allow for greater model complexity but require regularisation to avoid overfitting. If it outperforms the kernel method, this may reflect the kernel model not being fully optimised or regularised.
& Often higher accuracy due to richer feature space but overfitting must be mitigated with implicit regularisation through scaling, explicit regularisation through $\lambda$ and spectral decomposition via SVD. \\

\textbf{Computation} 
& Linear (or at least sub-quadratic) in path length $O(T)$ and sample size $O(N)$, but exponential in truncation order $O(d^K)$. Truncation is flexible. No Gram matrix is required, and fitting is parallelisable. More efficient for \textit{online} computation since only the signature of the current online trajectory requires to be computed.
& Quadratic in both path length and sample size $O(T^2 N^2)$ using PDE solvers. Parallelisable across batches, but memory-intensive for long paths. More efficient for high dimensional input channels $d$. Less efficient for \textit{online} computation due to kernel evaluations performed against all co-location paths. \\

\vspace{0.35cm}
\textbf{\makecell[l]{\textbf{Flexibility \&}\\\textbf{Interpretability}}} 
& Input features $\psi(X_{0,t})$ are typically constrained to lower dimensionality due to computational cost. Signature terms are more interpretable and intuitive than kernel features.
& Flexible choice of kernels (not limited to the signature kernel). Alternative feature maps (e.g. RBF) can also be used. Allows for larger and richer feature representations, but less interpretable. \\

\textbf{Practicalities} 
& Much less sensitive to path scaling. The covariance matrix is generally better conditioned than kernel Gram matrices and typically do not require spectral decomposition. Hyperparameter optimisation is simpler, with fewer parameters to tune.
& Sensitive to path scaling and must be optimised. Often requires spectral decomposition, and selection of effective rank $m < N$. More general and flexible, but introduces a larger, more complex hyperparameter space. \\

\bottomrule
\end{tabularx}
\label{tab:sig-vs-kernel}
\end{table}

In our experiments, we find that the signature method and the kernel method are broadly comparable in performance (especially at higher orders of truncation), once practicalities are accounted for and hyperparameters are well-optimised. However, this trade-off is likely to be highly data-dependent. We discuss further in Section~\ref{sec:5-implementation} how to maximise the out-of-sample performance of the kernel method, in particular how to balance expressivity, generalisation, and the risk of under/overfitting. We summarise the key differences between the two approaches in the table above.

In summary, the kernel framework provides a more general and rich modelling framework for capturing complex relationships within the data. However, this comes at the cost of greater engineering complexity and sensitivity to hyperparameter choices. For a single asset or smaller-scale setting, the signature method may offer a more efficient and interpretable alternative. For larger multi-asset problems or richer input structures, the kernel method offers greater scalability.
However, for high-frequency tasks which are sensitive to online computation and latency constraints, the Sig-Trading method is better suited. 
\vspace{-0.2cm}
\section{Implementation \& Practicalities} \label{sec:5-implementation}
\vspace{-0.1cm}
Kernel methods, when equipped with non-linear feature maps, are particularly powerful due to being a more lightweight alternative to deep learning tools, especially since the solution that we obtain is analytic and does not require gradient-based optimisation. However, while the theoretical framework of kernel methods does provide a powerful foundation, we do still require some careful modelling choices when implementing the method in practice to ensure that its potential is maximised. In particular, we are especially concerned with under/overfitting to ensure generalisability out of sample, and this is dependent on two hyperparameters $\lambda, \gamma$. Additionally, to ensure robustness of performance out of sample we discuss how one may find a “cleaner” solution using singular value decomposition (SVD). Finally, we also consider the computational aspect of the method and discuss how to make this more efficient using parallelisation and path sub-sampling.

\subsection*{Regularisation Hyperparameter $\lambda$}
We previously showed in \eqref{eq:3-mean-var_alpha_star} that the optimal weights $\alpha^*$ are given as
    \begin{align} 
        \alpha^*(\bfX) = \left( \lambda \text{Id}_N + \frac{\eta}{N} \K_{\Phi}(\bfX, \bfX)\left(\textup{Id}_N - \frac{1}{N} \boldsymbol{1}_N \boldsymbol{1}_N^\top \right) \right)^{\dag}  \boldsymbol{1}_N,
    \end{align}
where the regularisation parameter $\lambda$ must be tuned as a hyperparameter, as this is dependent on the complexity of the dynamics that are being learnt, sample size and scaling of the data. 
\vspace{-0.2cm}
\begin{figure}[H]
\centering
  \includegraphics[width=\linewidth]{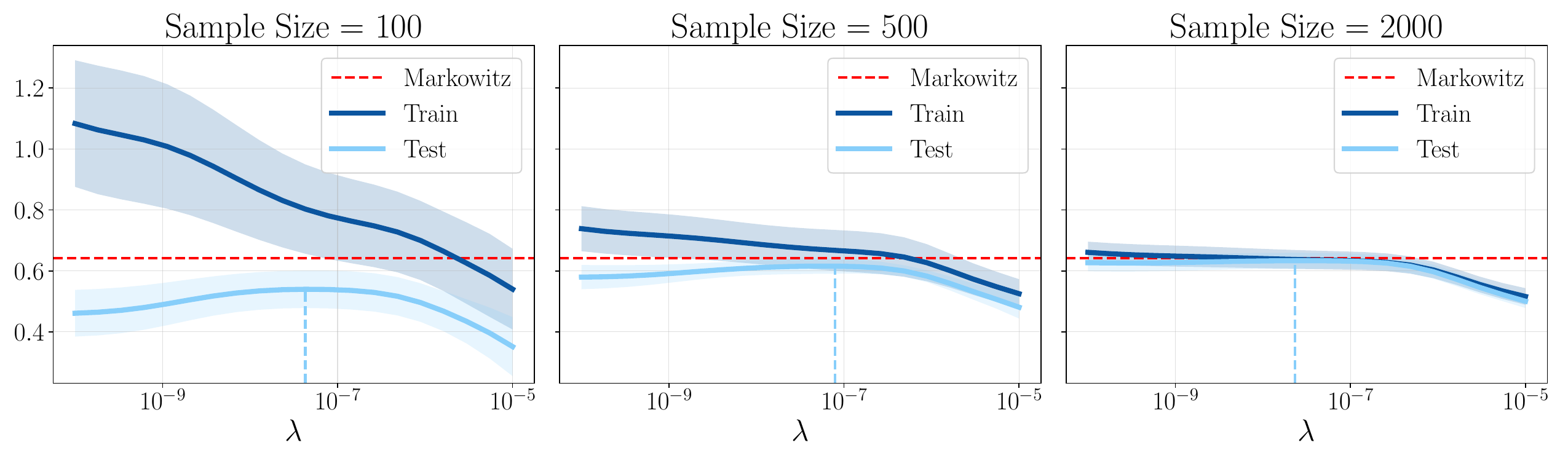}
  \caption{\centering The performance objective (both in and out of sample) for varying levels of regularisation $\lambda$. This is shown for $N=100,500,2000$ training samples from left to right.}
  \label{fig:regularisation_kernel}
\end{figure}
\vspace{-0.5cm}
If we choose a regularisation that is too small, then we will overfit, and if we choose a regularisation parameter too large then we may underfit, hence $\lambda$ must be carefully tuned to balance model complexity and generalisation.  
As illustrated in Figure~\ref{fig:regularisation_kernel}, the sensitivity to $\lambda$ is particularly pronounced for smaller datasets. In practice, $\lambda$ is typically chosen via \textit{cross-validation}, by maximising performance on a validation set using a preferred method (e.g. walk-forward). Fortunately, this hyperparameter is relatively cheap to optimise for since it does not require to re-compute the features or gram matrix. In the original formulation (non-spectral) then the just simply requires to re-invert a matrix, while in the spectral case we require to re-compute the spectral decomposition for each $\lambda$, both of which operations are relatively cheap in comparison to computing the gram matrix.

\subsection*{Path Scaling Hyperparameter $\gamma$} \label{subsec:5-scaling}

An important consideration in machine learning models (particularly those based on time series data) is how to scale the input channels. In signature methods, scaling the input path by a constant $\gamma \in \R$ leads to a homogeneous scaling of the signature terms: each term of order $k$ scales as $\gamma^k$. This follows from the algebraic structure of the signature transform, which respects the tensor algebra $T((\R^d))$. As a result, the objective in the signature trading framework is theoretically scale-invariant, up to floating point precision. However, when the input signal is scaled too small, higher-order signature terms (which scale as $\gamma^k$) may vanish due to underflow and loss of precision, effectively collapsing the representation. This phenomenon is visible on the RHS of Figure~\ref{fig:scaling}, where model performance degrades for very small $\gamma$. However, the performance remains stable outside of this regime for when we do not run into floating point errors.
\begin{figure}[H]
\centering
  \includegraphics[width=\linewidth]{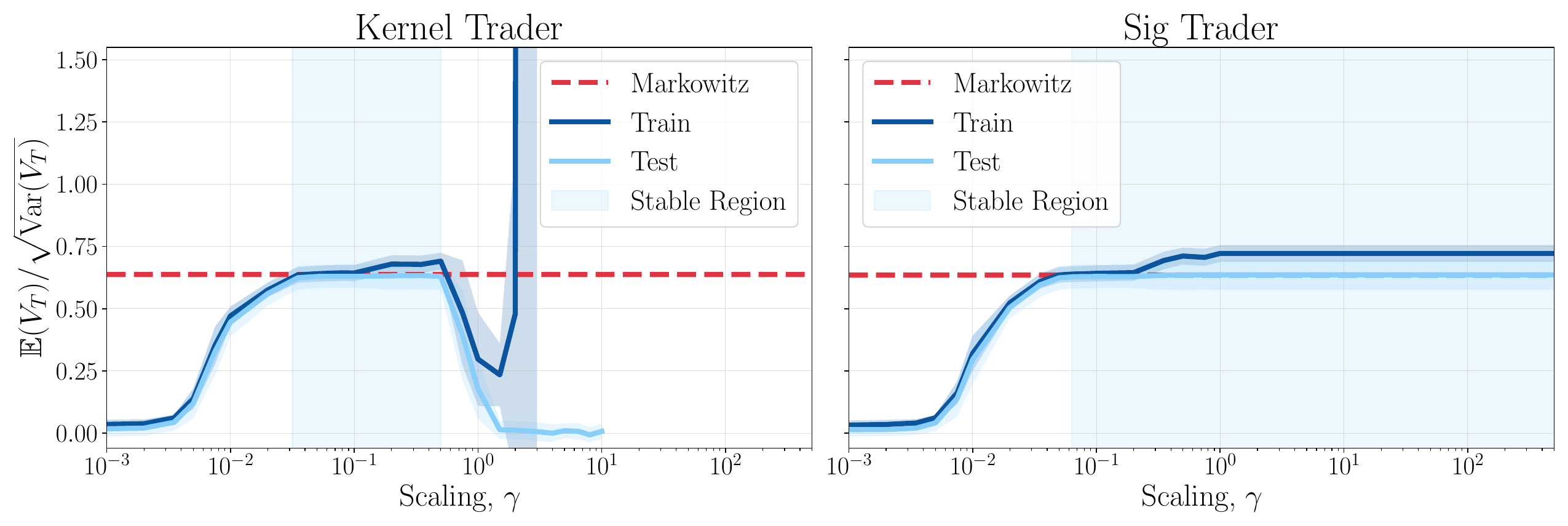}
  \caption{\centering The effect of the scaling hyperparameter $\gamma$ on in and out of sample performance. On the LHS we observe the Kernel Trader performance and the RHS displays the corresponding Sig-Trader performance.}
  \label{fig:scaling}
\end{figure}
Likewise, the scaling issue is also prevalent in kernel methods in order to extract the right amount of complexity. Especially in the signature kernel setting, if we scale the input paths $X, Y$ by a constant $\gamma \in \R$, then we obtain
\begin{align*}
    K_{\text{sig}}(\gamma X,\gamma Y) & = \langle \text{Sig}(\gamma X), \text{Sig}(\gamma Y)\rangle \\
    & = \sum_{\bluebf{w}\in \W} \text{Sig}^{\bluebf{w}}(\gamma X) \text{Sig}^{\bluebf{w}}(\gamma Y)
    \\
    & = \sum_{k=0}^\infty \sum_{\bluebf{w} \in \W_k} \gamma^{2k} \text{Sig}^{\bluebf{w}}(X) \text{Sig}^{\bluebf{w}}(Y).
\end{align*}
Therefore, we observe that higher order terms (larger $k$) are scaled disproportionately. By setting $\gamma$ to be smaller, the higher order terms \textit{vanish} due to the higher powers of $\gamma^{2k}$ and the lower order signature terms therefore dominate. In this sense $\gamma$ acts an implicit regularisation parameter since the higher order signature terms contain more nonlinear information. Meanwhile, increasing the scaling parameter tends to overfit more to higher order terms, until the model breaks down and the signature kernel does not converge anymore, which we observe in the LHS of Figure~\ref{fig:scaling}. 

Hence, there is a balance to strike and the scaling parameter $\gamma$ becomes a hyperparameter that needs to be tuned in accordance to the $L^2$ regularisation parameter $\lambda$. As discussed for $\lambda$, the scaling parameter $\gamma$ can also be tuned using cross-validation or otherwise.

\subsection*{Variance Constraint via $\eta$}
In practice, we may want a given level of variance at the terminal time, i.e that is we want to set the constraint $\text{Var}_\P^X(V_T^\xi(X)) = \Delta$. In this case, the variance is controlled through the scaling of the weights \(\alpha\) via the risk-aversion parameter \(\eta\).

We note that
\begin{align*}
    \mu_\Phi & = \K_{\Phi}(\bfX, \bfX) \mathbf{1}_M \\
    \Sigma_\Phi & = \K_{\Phi}(\bfX, \bfX) \Xi_\P^\Phi(\bfX).
\end{align*}
We require the constraint 
\begin{align*}
    \alpha^\top \Sigma_\Phi \alpha = \alpha^\top (\K_{\Phi}(\bfX, \bfX) \Xi_\P^\Phi(\bfX)) \alpha = \Delta.
\end{align*}
Since we know that
\begin{align*}
     \alpha = \left( \lambda \text{Id}_N + \frac{\eta}{N}\Xi_\P^\Phi(\bfX) \right)^{-1}  \boldsymbol{1}_N.
\end{align*}
then we obtain
\begin{align*}
    \Delta = \left( \left( \lambda \text{Id}_N + \frac{\eta}{N}\Xi_\P^\Phi(\bfX) \right)^{-1}  \boldsymbol{1}_N \right)^\top  (\K_{\Phi}(\bfX, \bfX) \Xi_\P^\Phi(\bfX)) \left( \left( \lambda \text{Id}_N + \frac{\eta}{N}\Xi_\P^\Phi(\bfX) \right)^{-1}  \boldsymbol{1}_N \right).
\end{align*}
which we now need to solve to obtain $\eta$ for a desired $\Delta$. However, it is not possible to solve analytically for \(\eta\) in closed form due to the additional $\lambda \text{Id}$ term, however without this, the solution becomes unstable. Therefore, in order to find $\eta$ for a given $\Delta$, we require a numerical optimisation via any user-chosen solver.
Interestingly, solving numerically reveals a power-law relationship between \(\lambda\) and \(\eta\), however this may depend on the structure of the specific gram matrix being used.

For example, on the RHS of Figure~\ref{fig:risk_aversion}, setting $\sqrt{\Delta} = 0.05$, by changing $\lambda$ (between $10^{-4}$ and $10^{-6}$, when solving for the optimal $\eta$, we obtain a power law relationship, that is 
\begin{align*}
    \eta^*(\Delta) = a(\Delta) \lambda^{p(\Delta)}.
\end{align*}
In the RHS of Figure~\ref{fig:risk_aversion}, we obtain $a(\Delta)=5$, $p(\Delta) = -0.83$ so that they are almost perfectly inversely proportional, however we note that of course this relationship will vary based on $K_\Phi$, meaning $a$ and $p$ will change, they are also dependent on $\Delta$.

\begin{figure}[H]
    \centering
    \includegraphics[width=0.95\linewidth]{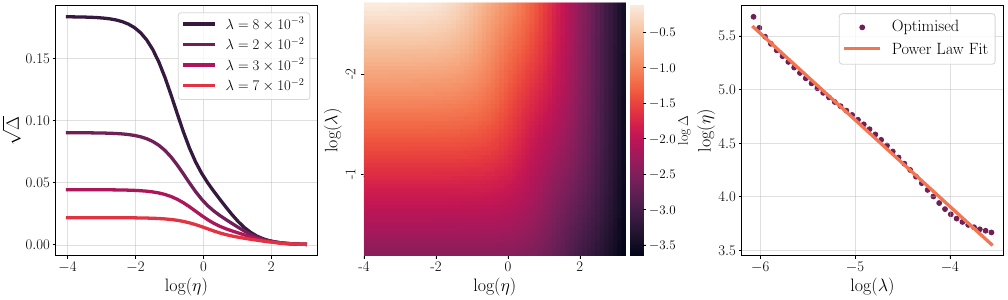}
    \caption{\centering The LHS plot shows how the relationship between the standard deviation of PnL ($\sqrt{\Delta}$) is non-linear in $\eta$. The central plot displays $\log\Delta$ as a function of $\log \lambda$ and $\log \eta$. The RHS shows for a fixed $\sqrt{\Delta}=0.05$, the relationship between $\lambda$ and the given $\eta^*(\Delta)$.}
    \label{fig:risk_aversion}
\end{figure}

\subsection*{Spectral Representation of $\alpha^*$}

In Section~\ref{subsec:3-robust_alpha}, we recast $\alpha$ in terms of its eigenvectors and eigenvalues. Previously, this required inverting a matrix that is often nearly singular, leading to instability, especially for small values of $\lambda$. As a result, attempts to regularise or optimise $\lambda^*$ via cross-validation often yielded unstable solutions. To address this instability, we truncate the summation to the top \(m < N\) eigenvalues and eigenvectors, aiming for greater numerical stability. The solution for $\alpha$ is then given as
\begin{align*}
\alpha^*_m = \frac{\sqrt{N}}{\lambda} \left( \sum_{k=1}^m \frac{(\mathbf{u}_k^\top \mathbf{e}_N)^2}{\gamma_k} \right)^{-1} \sum_{k=1}^m \frac{\mathbf{u}_k^\top \mathbf{e}_N}{\gamma_k} \mathbf{u}_k.
\end{align*}

We can see this explicitly in Figure~\ref{fig:svd_noisiness}, where for the left and centre panels, the objective ratio becomes unstable for smaller values of $\lambda$. In this example, we have simulated $N=2000$ sample co-location paths and have tested on $N=2000$ test paths. When $m=2000$, we recover the full solution of \eqref{eq:3-mean-var_alpha_star}, which is clearly unstable for small $\lambda$, however as $m$ decreases, the curves become much more stable and smooth.

\begin{figure}
    \centering
    \includegraphics[width=\linewidth]{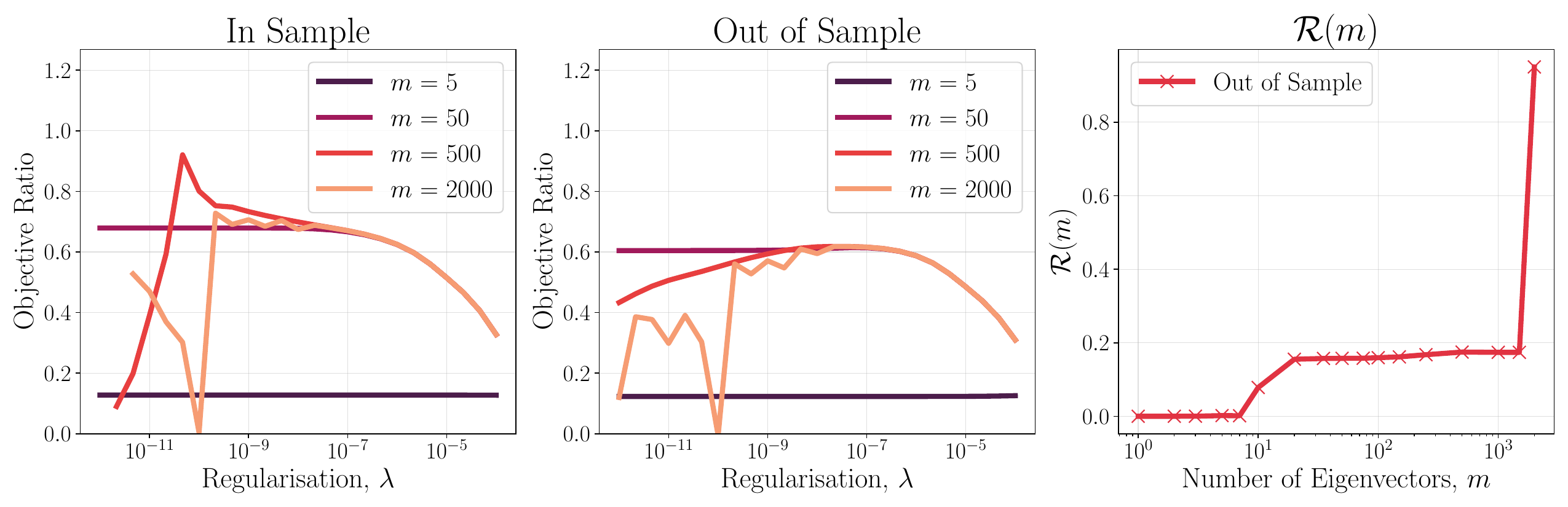}
    \caption{\centering Visualising the instability of $\alpha^*$ for different values of regularisation $\lambda$ (LHS and Centre). The RHS displays the metric $\mathcal{R}(m)$ defined above as a proxy for instability.}
    \label{fig:svd_noisiness}
\end{figure}

To demonstrate this, we can create a metric $\mathcal{R}$. Firstly, let us define the criterion ratio $J$,
\begin{align*}
    J( \alpha \, |  \, \lambda, m ) =  \frac{\alpha_{\lambda,m}^\top \mu_{\Phi}}{\sqrt{\alpha_{\lambda,m}^\top \Sigma_\Phi \, \alpha_{\lambda,m}}} 
\end{align*}
that measures the expected return for a given level of risk. This is an invariant criterion such that if we double the values of $\alpha$ (proportionally) then the ratio will stay the same. Now, we define the metric $\mathcal{R}$ as
\begin{align*}
    \mathcal{R}(m) = \left\lVert \frac{\partial J(\lambda \, | \, m)}{\partial \log \lambda} \right\rVert_{L^2} = \sqrt{\sum_{i \in \I} \frac{J(\lambda_{i+1} \, | \, m) - J(\lambda_i \, | \, m)}{\log(\lambda_{i+1}) - \log(\lambda_i)}}
\end{align*}
where $m$ is the number of eigenvectors corresponding to the top $m$ eigenvalues. This metric is a smoothness penalty for the solutions $\alpha_m$ which admit such noisy behaviour across values of $\lambda$. This is demonstrated further on the right panel of Figure~\ref{fig:svd_noisiness}, where for smaller $m$, we get less noisy results. However, we can see from the centre figure that in fact by removing too many eigenvectors (small $m$, e.g $m=5$), we also remove almost all of the signal from the system, which results in both the in and out of sample performance becoming much worse. Thus, this suggests that an optimal $m$ exists, which in this case is observed to be in the region $m\in[50,500]$.

\subsection*{Mean-Variance Implementation}

\vspace{0.5cm}
\begin{algorithm}[H]
\caption{Fitting the Optimal Kernel Trading Strategy}
\label{alg:algo1}
\hspace*{\algorithmicindent} \textbf{Input:} Market paths $\bfX = \{X^i_{t \in [0,T]}\}_{i=1}^N \subset \Lambda_X^\alpha \subset C^\alpha([0,T], \mathbb{R}^d)$ \\
\hspace*{\algorithmicindent} \hspace*{3.5em} Feature embedding $\psi: \Lambda_X^\alpha \to  \Lambda^\psi$ \\
\hspace*{\algorithmicindent} \hspace*{3.5em} Kernel $k_\varphi(X^i, X^j) := \langle \varphi(\psi(X^i)), \varphi(\psi(X^j)) \rangle_{\mathcal{H}_K}$ \\
\hspace*{\algorithmicindent} \hspace*{3.5em} Hyperparameter grids: $\Gamma$, $\Lambda$, $M$ for $\gamma$, $\lambda$, and $m$ respectively \\
\hspace*{\algorithmicindent} \hspace*{3.5em} Validation objective $\mathcal{L}(\alpha^*, \text{val})$
\begin{algorithmic}[1]
    \For{$\gamma \in \Gamma$}
        \State Scale input paths: $\tilde{X}^i \gets \gamma X^i$
        \State Compute kernel $\Phi$ gram matrix $\K_\Phi(\tilde{\bfX}, \tilde{\bfX})$ where $(\K_{\Phi})_{ij} \gets k_\varphi(\tilde{X}^i, \tilde{X}^j)$
        \For{$\lambda \in \Lambda$}
            \State Form regularized operator: $A \gets \lambda \text{Id}_N + \frac{\eta}{N} \K_\Phi(\tilde{\bfX}, \tilde{\bfX})$
            \State \parbox[t]{\dimexpr\linewidth-\algorithmicindent}{%
                Compute eigendecomposition: 
                \newline \hspace*{\algorithmicindent} \hspace*{5.5em} where  $A = U \Theta U^\top$, $U = (\mathbf{u}_1, \dots, \mathbf{u}_N)$ and $\Theta = \text{diag}(\theta_1, \dots, \theta_N)$.
            }
            \For{$m \in M$}
                \State Compute weights:
                \[
                \alpha^*_m = \frac{\sqrt{N}}{\lambda} \left( \sum_{k=1}^m \frac{(\mathbf{u}_k^\top \mathbf{e}_N)^2}{\theta_k} \right)^{-1} \sum_{k=1}^m \frac{\mathbf{u}_k^\top \mathbf{e}_N}{\theta_k} \mathbf{u}_k
                \]
                \State Evaluate performance: $\mathcal{L}(\alpha^*_m)$
                \State Store $(\gamma, \lambda, m)$ if objective improves
            \EndFor
        \EndFor
    \EndFor
    \State \Return best $(\gamma, \lambda, m)$ and corresponding $\alpha^*_m$
\end{algorithmic}
\end{algorithm}
% \vspace{0.5cm}
Throughout this work, we particularly focus on the \textit{signature kernel} and use the package \href{https://github.com/crispitagorico/sigkernel}{\texttt{sigkernel}} (\cite{Salvi2020ThePDE}), alongside \texttt{PyTorch} and \texttt{JAX} for calculating and performing functionality related to tensors and the signature kernel. \href{https://github.com/tgcsaba/KSig}{\texttt{KSig}}  \cite{Toth2025AKernelb} is another package that offers signature kernel computation. However we do stress that this work can be used with any path-dependent kernels with respect to any feature map such as DTW and alignment kernels \cite{Shimodaira2001DynamicMachine,Cuturi2007AAlignments.} or convolutional and sequential kernels \cite{Haussler1999ConvolutionStructures}. 

All that is required in order to fit the kernel trading strategy optimal weights $\alpha^*$ is to compute the \textit{“PnL feature map”} and its corresponding gram matrix $\K_\Phi$ for a given kernel through time. How one chooses to construct the gram matrix of co-location points is up to the user, and will be dependent on the complexity of the relationship being learnt. This provides a great deal of flexibility, even when working with time series and signatures. For example, we may also define a randomised signature kernel as $k_{\text{rSig}}(X,Y) = \langle \psi(X), \psi(Y)\rangle$ where $\psi(X) = \varphi(\textup{r-Sig}(X))$ for some activation function $\varphi$.
% and a random projection matrix $R \in \R^{D\times N}$.
The idea here is that instead of computing the high-dimensional object, we project the signature into a random lower-dimensional space and then apply non-linearity using the activation function. We provide the algorithm to compute an optimal trading strategy in Algorithm \ref{alg:algo1}. Likewise, the algorithm for computing the trading strategy online is shown in Algorithm~\ref{alg:algo2}, as also illustraded in Figure~\ref{fig:online_phase}.

\vspace{0.2cm}
\begin{algorithm}[ht]
\caption{Online Kernel Trading Strategy at time $t \in [0,T]$}
\label{alg:algo2}
\hspace*{\algorithmicindent} \textbf{Input:} Live asset path $X_{0,t}$ \\
\hspace*{\algorithmicindent} \hspace*{3.2em} Trained weights $\alpha^* \in \mathbb{R}^N$ from Algorithm~\ref{alg:algo1} \\
\hspace*{\algorithmicindent} \hspace*{3.2em} Historical asset paths $\bfX = \{X^i_{t \in [0,T]}\}_{i=1}^N \subset \Lambda_X^\alpha$ \\
\hspace*{\algorithmicindent} \hspace*{3.2em} Feature embedding $\psi: \Lambda_X^\alpha \to 
\Lambda^\psi$ \\
\hspace*{\algorithmicindent} \hspace*{3.2em} Kernel $K: \Lambda^\psi \times  \Lambda^\psi \to \mathcal{L}(\R^d)$

\hspace*{\algorithmicindent} \textbf{Output:} Positions $\xi_t^m$ for each asset $m \in \{1, \dots, d\}$
\begin{algorithmic}[1]
    \For{each time $t$ in trading horizon}
        \State Compute current feature path: $\psi(X_{0,t})$
        \For{each asset $m = 1, \dots, d$}
            \State Compute the \textit{PnL feature map}:
            \vspace{-0.2cm}
            $$
            \left[\Gamma_{\mathbb{P}}(\psi(X_{0,t}))\right]_{m,i} = \int_0^T K(\psi(X_{0,t}), \psi(X^i_{0,s})) e_m \, dX^i_s
            $$
        \EndFor
        \State Form $\Gamma_{\mathbb{P}}(\psi(X_{0,t})) \in \mathbb{R}^{d \times N}$
        \State Compute trading position:
        \vspace{-0.2cm}
        $$
        \xi_t = \Gamma_{\mathbb{P}}(\psi(X_{0,t})) \cdot \alpha^* \, \in \R^d
        $$
        \vspace{-0.2cm}
    \EndFor
    \State \Return positions $\xi_t = (\xi_t^1, \dots, \xi_t^d)$ for all $t$
\end{algorithmic}
\end{algorithm}
\vspace{0.2cm}

\subsection*{Parallelisation and Path Sub-sampling}

As discussed in Section~\ref{sec:4-comparison_sig_trading}, the computation of the kernel Gram matrix $\mathcal{K}_\Phi(\mathbf{X}, \mathbf{X})$ scales quadratically with the number of sample paths $N$, requiring $\mathcal{O}(N^2)$ kernel evaluations, dominating runtime during model fitting. Likewise for the online stage, this requires $O(N)$ kernel evaluations with the colocation trajectories. Due to memory constraints, the Gram matrix is often computed in batches, and this process can be easily parallelised.

An alternative strategy is to sub-sample the set of training trajectories to a much smaller subset of size $n \ll N$. Let $\bfX = \{X^i\}_{i=1}^N$ denote the full set of market trajectories. We define the index set $\mathcal{I} := \{1, \ldots, n\}$ and the complementary set $\mathcal{J} := \{n+1, \ldots, N\}$. Define the corresponding “landmark” subset of paths $\mathbf{X}_{\mathcal{I}} := \{X^i\}_{i \in \mathcal{I}}$. We can then partition the gram matrix into a block matrix as
\begin{align*}
    \K_\Phi(\bfX, \bfX) = 
    \begin{pmatrix}
        \K_\Phi(\bfX_{\I}, \bfX_{\I}) & \K_\Phi(\bfX_{\I}, \bfX_{\J}) \\
        \K_\Phi(\bfX_{\J}, \bfX_{\I}) & \K_\Phi(\bfX_{\J}, \bfX_{\J})
    \end{pmatrix}
    \in \R^{N\times N}.
\end{align*}
Then, to create a low-rank approximation of the gram matrix $\K_\Phi(\bfX, \bfX)$, we can use the Nystr\"{o}m approximation \cite{Williams2000UsingMachines, Drineas2005OnLearning}. Let
$$
C := \mathcal{K}_\Phi(\mathbf{X}, \mathbf{X}_{\mathcal{I}}) \in \mathbb{R}^{N \times n}, \quad
W := \mathcal{K}_\Phi(\mathbf{X}_{\mathcal{I}}, \mathbf{X}_{\mathcal{I}}) \in \mathbb{R}^{n \times n}.
$$
Then the low-rank approximation is given by:
\begin{align*}
\tilde{\K}_\Phi := C W^\dag C^\top \in \mathbb{R}^{N \times N},
\end{align*}
where $W^\dag$ denotes the Moore--Penrose pseudo-inverse of $W$. This approximation preserves symmetry and positive semi-definiteness, and
the approximation $\tilde{\K}$ lies in the span of the landmark columns. Its rank is at most $n$.
Therefore, instead of computing $N^2$ kernel evaulations, we are able to compute only $N \times n$ evaluations which can be much more efficient for smaller $n$.
The selection of the landmark index set $\mathcal{I}$ can be random, but performance may improve with more informed choices. For example, one may prefer higher variance paths, paths that are more “recent” or rather a mix of paths over time.
Nyström approximations are part of the Sketching literature~\cite{chen2021accumulationsprojectionsaunifiedframework, pmlr-v139-song21c}, where strategies to reduce the computational cost of Kernel methods are extensively studied.

\section{Conclusion}

In this work, we introduced a kernel-based framework for constructing dynamic, path-dependent trading strategies under a mean-variance optimisation criterion. Building on the theoretical foundations of reproducing kernel Hilbert spaces (RKHS), we demonstrated how trading strategies can be parameterised as a function in an RKHS, allowing for a flexible and general approach to modelling non-Markovian dependencies in financial data. This framework supports a wide range of modelling choices and provides a closed-form alternative to gradient-based deep learning methods.
% We also computed a more robust spectral decomposition of the optimal solution which we showed reduces uncertainty in hyperparameter optimisation and out of sample performance.
We also derived a robust spectral decomposition of the optimal solution, which reduces sensitivity to hyperparameter selection and improves out-of-sample stability.

Through both synthetic and market-data experiments, we illustrated that kernel trading strategies consistently outperform their Markovian counterparts, especially in settings where asset dynamics or predictive signals exhibit temporal structure and non-Markovianity. Furthermore, we conducted a detailed comparison with the signature trading approach of \cite{Futter2023SignatureSignalsb}, highlighting comparable performance when using the signature kernel. However, the kernel framework offers greater modelling flexibility and scalability, particularly when dealing with complex relationships or a large number of assets and signals.
% We also thoroughly discussed the implementation and practicalities of this method, with intuitive examples on path-scaling, variance constraints, choices of regularisation and computational efficiency.
We also discussed implementation and practical considerations in depth, including path scaling, variance constraints, regularisation choices, and computational efficiency.
This kernel-based approach offers a versatile framework for financial optimisation problems, and while we only studied mean-variance optimisation in this work, we lay foundations for future research including objectives with market impact and transaction costs.
% Add additional sections using \input/\include here

\clearpage

\begin{appendices}
\addtocontents{toc}{\protect\setcounter{tocdepth}{-1}}
\section{Rough Path Theory \& Kernel Preliminaries} \label{sec:appx_rough_paths}

\begin{definition}[Positive Definite (Scalar) Kernel]\label{def:positive-kernel}
A symmetric function \( K : \mathcal{X} \times \mathcal{X} \rightarrow \mathbb{R} \) is called a \emph{positive definite kernel} if for any finite set \( \mathbf{X} = \{x_1, \ldots, x_n\} \subset \mathcal{X} \) and any \( c_1, \ldots, c_n \in \mathbb{R} \), it holds that
\[
\sum_{i=1}^{n} \sum_{j=1}^{n} c_i c_j K(x_i, x_j) \geq 0.
\]
The corresponding kernel matrix \( K(\mathbf{X}, \mathbf{X}) = [K(x_i, x_j)]_{i,j=1}^n \) is symmetric and positive semi-definite.
\end{definition}

\begin{definition}[Reproducing Kernel Hilbert Space (RKHS)]\label{def:rkhs}
A Hilbert space \( \mathcal{H}_K \) of functions \( f: \mathcal{X} \rightarrow \mathbb{R} \) is called a \emph{reproducing kernel Hilbert space} if there exists a symmetric positive definite kernel \( K : \mathcal{X} \times \mathcal{X} \rightarrow \mathbb{R} \) such that:
\begin{enumerate}
    \item[(i)] For all \( x \in \mathcal{X} \), \( K(x, \cdot) \in \mathcal{H}_K \),
    \item[(ii)] For all \( f \in \mathcal{H}_K \) and all \( x \in \mathcal{X} \), the \emph{reproducing property} holds:
    \[
    f(x) = \langle f, K(x, \cdot) \rangle_{\mathcal{H}_K}.
    \]
\end{enumerate}
If we define the canonical feature map \( \psi(x) := K(x, \cdot) \), then
\[
K(x, x') = \langle \psi(x), \psi(x') \rangle_{\mathcal{H}_K}.
\]
\end{definition}

\begin{remark}
Given a map \( \phi: \mathcal{X} \rightarrow \mathcal{H} \) into a Hilbert space \( \mathcal{H} \), the associated kernel is
\[
K(x, y) := \langle \phi(x), \phi(y) \rangle_{\mathcal{H}}.
\]
This defines an RKHS where functions \( f \) take the form \( f(x) = \langle w, \phi(x) \rangle_{\mathcal{H}} \) for some \( w \in \mathcal{H} \).
\end{remark}

\begin{definition}[Operator-Valued Kernel]\label{def:operator-kernel}
Given a set \( \mathcal{Z} \) and a real Hilbert space \( \mathcal{Y} \), an \( \mathcal{L}(\mathcal{Y}) \)-valued kernel is a map
\[
K : \mathcal{Z} \times \mathcal{Z} \rightarrow \mathcal{L}(\mathcal{Y}),
\]
such that:
\begin{enumerate}
    \item For all \( z, z' \in \mathcal{Z} \), \( K(z, z') = K(z', z)^* \) (adjoint),
    \item For any finite set \( \{(z_i, y_i)\}_{i=1}^N \subset \mathcal{Z} \times \mathcal{Y} \), the matrix
    \[
    \left[\langle K(z_i, z_j) y_i, y_j \rangle_{\mathcal{Y}}\right]_{i,j=1}^N
    \]
    is positive semi-definite.
\end{enumerate}
\end{definition}

\begin{example}[Separable Operator-Valued Kernel]
Let \( \mathcal{Y} = \mathbb{R}^d \), and let \( k : \mathcal{Z} \times \mathcal{Z} \rightarrow \mathbb{R} \) be a scalar-valued positive definite kernel (e.g., Gaussian, polynomial, etc.). Define:
\[
K(z, z') := k(z, z') \cdot I_d,
\]
where \( I_d \in \mathcal{L}(\mathbb{R}^d) \) is the identity operator on \( \mathbb{R}^d \). Then \( K \) is an operator-valued kernel because:
\begin{enumerate}
    \item \( K(z, z') = K(z', z)^* \), since \( k(z, z') = k(z', z) \) and \( I_d \) is self-adjoint,
    \item For any \( \{(z_i, y_i)\}_{i=1}^N \subset \mathcal{Z} \times \mathbb{R}^d \),
    \[
    \sum_{i,j} \langle K(z_i, z_j) y_i, y_j \rangle = \sum_{i,j} k(z_i, z_j) \langle y_i, y_j \rangle,
    \]
    which defines a positive semi-definite Gram matrix as a Kronecker product: \( K_{\text{ovk}} = k(\mathbf{Z}, \mathbf{Z}) \otimes I_d \succeq 0 \).
\end{enumerate}
\end{example}

\begin{definition}[$p$-Loss] \label{defn:p-loss}
Let $\mathcal{X}$ be a locally compact second countable space and $\mathcal{Y}$ a closed subspace of $\R$.
% } 
Given \( p \in [1, +\infty) \), a function \( V : \mathbb{R} \times \mathcal{Y} \to [0,+\infty) \) is called a \emph{$p$-loss function} with respect to a probability measure \( \mu \) on \( \mathcal{X} \times \mathcal{Y} \) if:
\begin{enumerate}
    \item For all \( y \in \mathcal{Y} \), \( V(\cdot, y): \mathbb{R} \to [0,+\infty) \) is convex,
    \item \( V \) is $\mu$-measurable,
    \item There exist \( b \geq 0 \) and a measurable function \( a: \mathcal{Y} \to [0,+\infty) \) such that
    \[
    V(w,y) \leq a(y) + b|w|^\alpha \quad \forall (w,y) \in \mathbb{R} \times \mathcal{Y}, \qquad \int_{\mathcal{X} \times \mathcal{Y}} a(y) d\mu(x,y) < +\infty.
    \]
\end{enumerate}
\end{definition}

\begin{example}[Squared Error as a \( p \)-Loss]
Define the function $V: \R \times \R \to [0, +\infty)$ as
\[
V(x, y) = (x - y)^2.
\]
Then \( V \) satisfies the conditions of a \( p \)-loss (Definition~\ref{defn:p-loss}) with \( p = 2 \) with respect to the any measure $\mu$ absolutely continuous with respect to Lebesgue measure, with finite $y$-marginal second moment. Specifically:
\begin{enumerate}
    \item \( V(\cdot, y) \) is convex for all \( y \in \R \),
    \item \( V \) is continuous and hence measurable,
    \item Setting \( a(y) := 2|y|^2 \), \( b := 2 \), and \( \alpha = 2 \), we have
    \[
    (x - y)^2 \leq 2x^2 + 2y^2 = b|x|^2 + a(y),
    \]
    and the integrability condition \( \int a(y) \, d\mu(x, y) < \infty \) thanks to finiteness of the second moment.
\end{enumerate}
Thus, squared error is a valid \( p \)-loss with \( p = 2 \).
\end{example}
% }

\begin{theorem}[General RKHS Functional Minimizer \cite{DeVito2004SomeMethods}] \label{thm:rkhs-min}
Let \( \mathbb{P} \) be a probability measure on \( \mathcal{X} \times \mathcal{Y} \), and let \( \mathcal{H}_K \) be an RKHS with kernel \( K \) bounded w.r.t. \( \mathbb{P} \). For a \( p \)-loss function \( V \) and \( \lambda > 0 \), define:
\[
f^* \in \arg \min_{f \in \mathcal{H}_K} \left\{ \mathbb{E}_{(x, y) \sim \mathbb{P}}[V(y, f(x))] + \lambda \|f\|_{\mathcal{H}_K}^2 \right\}.
\]
Then \( f^* \) satisfies:
\[
f^*(x) = -\frac{1}{2\lambda} \mathbb{E}_{(x', y') \sim \mathbb{P}} \left[ \alpha^*(x', y') K(x, x') \right],
\]
where \( \alpha^*(x, y) \in \partial_y V(y, f^*(x)) \).
\end{theorem}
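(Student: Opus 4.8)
The plan is to recognize the objective as a strongly convex regularized risk on the Hilbert space $\mathcal{H}_K$ and to characterize its minimizer through a first-order (subdifferential) optimality condition, then to read off the stated representation via the reproducing property. Write $\Phi(f) := \mathbb{E}_{(x,y)\sim\mathbb{P}}[V(y,f(x))]$ for the data term and $R(f):=\lambda\|f\|_{\mathcal{H}_K}^2$ for the regularizer. Since each $V(y,\cdot)$ is convex and $R$ is strongly convex and continuous, the total objective $\Phi+R$ is proper, strictly convex, lower semicontinuous and coercive, so a unique minimizer $f^*$ exists; its characterization is equivalent to the inclusion $0\in\partial(\Phi+R)(f^*)$. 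Because $R$ is finite and continuous everywhere, the Moreau--Rockafellar sum rule gives $\partial(\Phi+R)(f^*)=\partial\Phi(f^*)+2\lambda f^*$, so the whole problem reduces to identifying $\partial\Phi(f^*)$.

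The key computation is the subdifferential of the integral functional $\Phi$. First I would use the reproducing property $f(x)=\langle f,K(x,\cdot)\rangle_{\mathcal{H}_K}$ to observe that the evaluation map $f\mapsto f(x)$ is the bounded linear functional represented by $K(x,\cdot)$, whose derivative is $K(x,\cdot)$ itself. By the chain rule for convex functions, the pointwise integrand $f\mapsto V(y,f(x))$ has subdifferential $\{\beta\,K(x,\cdot):\beta\in\partial_w V(y,w)|_{w=f(x)}\}$. Invoking the interchange theorem for subdifferentials of convex integral functionals (Rockafellar's normal-integrand theory), one obtains
\[
\partial\Phi(f^*)=\Big\{\mathbb{E}_{(x,y)}\!\big[\alpha(x,y)\,K(x,\cdot)\big]:\ \alpha\in L^q_{\mathbb{P}},\ \alpha(x,y)\in\partial_w V(y,w)|_{w=f^*(x)}\ \text{a.e.}\Big\},
\]
where $q=p/(p-1)$ is conjugate to the growth exponent of the $p$-loss.

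Combining the two ingredients, optimality $0\in\partial\Phi(f^*)+2\lambda f^*$ is equivalent to the existence of a measurable selection $\alpha^*$ with $\alpha^*(x,y)\in\partial_w V(y,w)|_{w=f^*(x)}$ and $-2\lambda f^*=\mathbb{E}_{(x,y)}[\alpha^*(x,y)K(x,\cdot)]$, i.e.
\[
f^*=-\frac{1}{2\lambda}\,\mathbb{E}_{(x',y')\sim\mathbb{P}}\big[\alpha^*(x',y')\,K(x',\cdot)\big]\in\mathcal{H}_K,
\]
and evaluating both sides at a point $x$ through the reproducing property (using symmetry $K(x',x)=K(x,x')$) yields the claimed pointwise formula.

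The hard part will be the rigorous interchange of the subdifferential with the expectation, which is exactly where all the $p$-loss hypotheses are consumed: convexity and measurability make $V$ a normal convex integrand so that $\partial\Phi$ is computed pointwise, while the growth bound $V(w,y)\le a(y)+b|w|^\alpha$ together with $p$-boundedness of $K$ (equivalently $\mathbb{E}[K(x,x)^{p/2}]<\infty$, so that $\|K(x,\cdot)\|_{\mathcal{H}_K}=\sqrt{K(x,x)}\in L^p$) guarantees, via Hölder's inequality applied to $\alpha^*\in L^q$ and $K(x,\cdot)\in L^p$, that the Bochner integral defining the right-hand side converges in $\mathcal{H}_K$. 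A secondary technical point is producing a \emph{measurable} selection $\alpha^*$ from the closed-valued multifunction $(x,y)\mapsto\partial_w V(y,w)|_{w=f^*(x)}$, which follows from a standard measurable-selection theorem. Since all these steps are carried out in detail in \cite{DeVito2004SomeMethods}, I would either reproduce that argument or appeal directly to it for the stated equivalence.
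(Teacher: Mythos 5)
Your proposal is correct and follows essentially the same route as the source this statement is taken from: the paper itself offers no proof, deferring entirely to \cite{DeVito2004SomeMethods}, whose argument is precisely the convex-analytic one you sketch (strong convexity for existence, Moreau--Rockafellar sum rule, subdifferential of the integral functional via the reproducing property and measurable selection, with the $p$-loss growth bound and $p$-boundedness of $K$ ensuring the Bochner integral converges). Your identification of the interchange of subdifferential and expectation as the technical crux, and of $\partial_y$ in the statement as really meaning the subdifferential in the function-value argument, matches the cited treatment.
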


\begin{example}[Least Squares Ridge Regression with Kernels] \label{ex:appx_ridge_least_squares}
Using the loss \( V(f(x), y) = (f(x) - y)^2 \), the minimizer \( f^* \) satisfies:
\[
f^*(x) = \sum_{i=1}^N \alpha_i K(x, x_i), \quad \text{where } \boldsymbol{\alpha} = (K + \lambda I)^{-1} \mathbf{y}.
\]
\end{example}

\subsection*{Noteworthy Special Cases}

\subsubsection*{Induced Kernel - General}

Assume to have a well defined real valued kernel $\kappa: \cS \times \cS \to \bR$. 
One can define an operator valued kernel $K : \cS \times \cS \to \cH_{\xi}$ as $K(x,y) := \kappa(x,y)Id_{\cH_{\xi}}$.

\begin{lemma}
    Under the assumptions above $\cH_{K} \simeq \cH_{\kappa} \otimes \cH_{\xi}$.
\end{lemma}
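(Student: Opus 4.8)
The plan is to exhibit an explicit isometric isomorphism by matching the two spaces on their dense spanning sets. Recall that $\cH_{K}$ is, by construction, the closure of $\mathrm{Span}\{K(x,\cdot)v : (x,v) \in \cS \times \cH_{\xi}\}$, and since $K(x,y) = \kappa(x,y)\,\mathrm{Id}_{\cH_{\xi}}$ we have $K(x,\cdot)v = \kappa(x,\cdot)v$. On the other side, the algebraic tensor product $\cH_{\kappa}\odot\cH_{\xi}$ is spanned by elementary tensors $\kappa(x,\cdot)\otimes v$, whose completion under the induced inner product is $\cH_{\kappa}\otimes\cH_{\xi}$. I would define a linear map $U$ on the spanning elements by
$$U\big(\kappa(x,\cdot)\otimes v\big) := K(x,\cdot)v = \kappa(x,\cdot)v \in \cH_{K},$$
and extend by linearity to finite sums.

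The heart of the argument is a single inner-product computation showing $U$ is isometric. Using the reproducing property of $\cH_{K}$ (property (ii) of the operator-valued RKHS) together with $K(x,y) = \kappa(x,y)\,\mathrm{Id}_{\cH_{\xi}}$, one gets
$$\sprod{K(x,\cdot)u}{K(y,\cdot)v}_{\cH_{K}} = \sprod{K(x,y)u}{v}_{\cH_{\xi}} = \kappa(x,y)\sprod{u}{v}_{\cH_{\xi}},$$
while on the tensor side the cross inner product of elementary tensors factorises and the reproducing property of the scalar RKHS $\cH_{\kappa}$ yields
$$\sprod{\kappa(x,\cdot)\otimes u}{\kappa(y,\cdot)\otimes v}_{\cH_{\kappa}\otimes\cH_{\xi}} = \sprod{\kappa(x,\cdot)}{\kappa(y,\cdot)}_{\cH_{\kappa}}\sprod{u}{v}_{\cH_{\xi}} = \kappa(x,y)\sprod{u}{v}_{\cH_{\xi}}.$$
These coincide, so $U$ preserves inner products on all finite linear combinations of spanning elements. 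In particular, if such a combination is the zero element of the tensor product it has zero norm and therefore maps to a zero-norm element of $\cH_{K}$; this is precisely what guarantees that $U$ is \emph{well defined} on $\cH_{\kappa}\odot\cH_{\xi}$, despite the many linear relations among the $\kappa(x,\cdot)$ and among the $K(x,\cdot)v$.

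Finally I would extend $U$ by continuity. Being isometric on a dense subspace, it extends uniquely to an isometry $\cH_{\kappa}\otimes\cH_{\xi}\to\cH_{K}$; its range contains the dense set $\mathrm{Span}\{K(x,\cdot)v\}$ and is closed, being the isometric image of a complete space, so $U$ is onto, i.e.\ a unitary isomorphism, which establishes $\cH_{K}\simeq\cH_{\kappa}\otimes\cH_{\xi}$. I expect the only genuinely delicate point to be the well-definedness and density bookkeeping of the previous paragraph, namely verifying that the span of elementary tensors is dense in the completed tensor product and that the inner-product identity indeed forces $U$ to respect all relations; the inner-product matching itself is routine once the two reproducing properties are invoked.
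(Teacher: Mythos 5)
Your proposal is correct and follows essentially the same route as the paper's proof: both rest on the single inner-product identity $\sprod{K(x,\cdot)u}{K(y,\cdot)v}_{\cH_K} = \kappa(x,y)\sprod{u}{v}_{\cH_\xi} = \sprod{\kappa(x,\cdot)\otimes u}{\kappa(y,\cdot)\otimes v}_{\cH_\kappa\otimes\cH_\xi}$ on the spanning sets, then extend the resulting isometry by continuity. The paper states this more tersely (it simply notes the identity and says the map ``extends to an isometry''), whereas you spell out the well-definedness, density, and surjectivity bookkeeping explicitly, which is a fair and complete elaboration of the same argument.
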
 

\begin{proof}
    Recall that $\cH_{K}$ is the closure of $\mathrm{Span}(K(x,\cdot)z ~|~ x \in \cS, z \in \cH_{\xi})$ under the scalar product 
    \begin{align*}
    \sprod{K(x,\cdot)z}{K(y,\cdot)w}_{\cH_{K}} & := \sprod{K(x,y)z}{w}_{\cH_{\xi}} 
    \\ &= \kappa(x,y)\sprod{z}{w}_{\cH_{\xi}} 
    = \sprod{\kappa(x,\cdot)}{\kappa(y,\cdot)}_{\cH_{\kappa}} \sprod{z}{w}_{\cH_{\xi}}
    \end{align*}
    so that $K(x,\cdot)z \mapsto \kappa(x,\cdot) \otimes z \in \cH_{{K}} \otimes \cH_{\xi}$ extends to an isometry.
    
\end{proof}

Under this identification for $x \in \cX$ 
one can prove that
\[
\Phi_x = \int_0^T \kappa(\psi(x)|_{[0,t]}, \cdot) \otimes d \boldsymbol{\xi}(x)_t  \in \cH_{\kappa} \otimes \cH_{\xi}
\]

\subsubsection*{Induced Kernel - Feature Map}
Of special interest is the case where $\cH_{\kappa} \subseteq \bR^N$ and $\cH_{\xi} = \bR^{d_{\xi}}$, obtained when $\kappa(x,y) = \sprod{\mathbb{F}(x)}{\mathbb{F}(y)}_{\bR^N}$ for some feature map $\mathbb{F}: \cS \to \bR^N$.
In this setting $\cH_{\kappa} \otimes \cH_{\xi} \subseteq \bR^{N \times d_{\xi}}$ is a subset of matrices endowed with Hilbert-Schmidt norm and we can write
\begin{equation}
    \Phi_x = \int_0^T 
    \underbrace{
    \mathbb{F}(\psi(x)|_{[0,t]}) 
    }_{N \times 1}
    \underbrace{
    \dD \boldsymbol{\xi}(x)_t^{\top}  
    }_{1 \times d_{\xi}}
    \in \bR^{N \times d_{\xi}}
\end{equation}
and moreover $\mathcal{K}_{\Phi}(x,y) = Tr(\Phi_x\Phi_y^{\top})$.

\vspace{10pt}
This technique is particularly relevant in the context of \emph{Neural Signature Kernels}, a family of path-space kernels that generalize signature kernels and can only be computed through feature maps.
% }

\subsection*{Signatures \& Kernels}

\begin{definition}[Signature Transform] \label{defn:sig_transform}
Let $X:[0,T] \to \R^d \in C^{1-var}([0,T];\R^d)$ be a (piecewise) smooth path. Let us define the simplex $\Delta_T = \{(s,t) : 0 \leq s \leq t \leq T \}$. The signature of $X$ between fixed time $s$ and $t$ is a map
    \begin{align*}
        \text{Sig} : \Delta_T & \to T((\R^d)) \\
        (s,t) & \mapsto \text{Sig}(X_{s,t}) := (1, \text{Sig}^1(X_{s,t}), \dots, \text{Sig}^n(X_{s,t}), \dots )
    \end{align*}
    where the $n$-th order of the signature is defined as
    \begin{align*}
        \text{Sig}^n(X_{s,t}) := \idotsint\displaylimits_{s < u_1 < \dots < u_n < t} dX_{u_1} \otimes \dots \otimes dX_{u_n} \in (\R^d)^{\otimes n}.
    \end{align*}
        The signature is a $T((\R^d))$-valued process, which can be viewed as
        \begin{align*}
        \text{Sig}(X_{0,T}) = \left( \underbrace{
        \begin{matrix}
            \text{ } \vspace{0.2cm}  \\
            \text{Sig}^{\bluebf{\emptyset}}(X_{0,T}) \\
            \text{ } \vspace{0.2cm}
        \end{matrix}}_{\textstyle =1} \text{ } , \text{ }
        \underbrace{\begin{pmatrix}
            \text{Sig}^{\bluebf{1}}(X_{0,T})   \\
            \vdots \\
            \text{Sig}^{\bluebf{d}}(X_{0,T})
        \end{pmatrix}}_{\textstyle \bX_{0,T}^1}
        \text{ } , \text{ }
        \underbrace{\begin{pmatrix}
            \text{Sig}^{\bluebf{11}}(X_{0,T}) & \dots &  \text{Sig}^{\bluebf{1d}}(X_{0,T})   \\
            \vdots & \ddots & \vdots \\
            \text{Sig}^{\bluebf{d1}}(X_{0,T}) & \dots & \text{Sig}^{\bluebf{dd}}(X_{0,T})
        \end{pmatrix}}_{\textstyle \bX_{0,T}^2} \text{ } , \text{ } \dots
        \right),
    \end{align*}
    where each term is given as
    \begin{align*}
        \text{Sig}^{\bluebf{i}_1 \dots \bluebf{i}_n}(X_{s,t}) := \idotsint\displaylimits_{s < u_1 < \dots < u_n < t} dX_{u_1}^{i_1} \otimes \dots \otimes dX_{u_n}^{i_n} \in \R.
    \end{align*}
    The signature of a path can be truncated at any finite order $N \in \N$. We denote the truncated signature up to order $N$ as
    \begin{align*}
        \text{Sig}^{\leq N}(X_{0,T}) : \Delta_T & \to T^{(N)}(\mathbb{R}^d) \\
        (s,t) & \mapsto (1, \text{Sig}^1(X_{s,t}), \dots, \text{Sig}^N(X_{s,t})).
    \end{align*}
\end{definition}

\begin{definition}[Signature Kernel] \label{defn:sig-kernel}
Given two paths $X, X' : [0,T] \to \mathbb{R}^d$, their \emph{signature kernel} is defined by
\[
K_{\mathrm{sig}}(X, X') := \left\langle \mathrm{Sig}(X),\, \mathrm{Sig}(X') \right\rangle,
\]
where the inner product is defined by summing the tensor-level inner products:
\[
\left\langle \mathrm{Sig}(X),\, \mathrm{Sig}(X') \right\rangle = \sum_{m=0}^\infty \left\langle \mathrm{Sig}^{m}(X),\, \mathrm{Sig}^{m} (X')\right\rangle_{(\mathbb{R}^d)^{\otimes m}}.
\]
\end{definition}

\begin{definition}[Randomised Signature Computation] \label{defn:rand_signatures_comp}
    Let \( A_1, \dots, A_d \in \mathbb{R}^{M \times M} \) be random matrices, \( b_1, \dots, b_d \in \mathbb{R}^{M} \) be random shifts, and \( z \in \mathbb{R}^M \) be a random initial state. Let \( \sigma: \mathbb{R}^M \to \mathbb{R}^M \) be a fixed activation function.

    The \emph{Randomised Signature} of \( X \) over \( t \in [0,T] \) is defined as the solution \( Z_t \in \mathbb{R}^M \) to the controlled differential equation:
    \[
    dZ_t = \sum_{i=1}^d \sigma(A_i Z_t + b_i) \, dX_t^i, \quad Z_0 = z \in \mathbb{R}^M.
    \]
    This construction can be viewed as a random projection of the signature, leveraging ideas from the Johnson–Lindenstrauss Lemma \cite{Vempala2004TheMethod}. For more background, we refer the reader to \cite{Cuchiero2021ExpressiveSignature, Compagnoni2022RandomizedData}.
\end{definition}

\begin{definition}[Randomised Signature Kernel] \label{defn:rand_signature_kernel}
Given two (piecewise) smooth paths \( X, Y : [0,T] \to \mathbb{R}^d \) in \( C^{1\text{-}var}([0,T]; \mathbb{R}^d) \), their \emph{Randomised Signature Kernel} is defined as the inner product of their randomized signature embeddings:
\[
K_{\text{r-Sig}}(X, Y) := \langle \text{r-Sig}(X), \text{r-Sig}(Y) \rangle_{\mathbb{R}^M}.
\]
Here, \( \text{r-Sig}(\cdot) \) may be computed 
% either via the linear random projection of truncated signatures (Definition~\ref{defn:rand_signatures}) or 
via the solution of a controlled differential equation (Definition~\ref{defn:rand_signatures_comp}).

This kernel provides a computationally efficient approximation of the full signature kernel, with reduced dimensionality and favourable generalisation properties due to the use of random projections.
\end{definition}

\section{Theorem Proofs}

\subsection{Proof of Theorem~\ref{thm:3-mean_var_solution}} \label{sec:appx_mean_var_proof}

\mvthm*

% \nico{
\begin{proof}
    We start by writing the objective as
    \begin{align*}
        & \ExP^X \left[ \sprod{\phi}{\Phi_X}_{\H_\Phi} - \frac{\eta}{2} \left( \sprod{\phi}{\Phi_X}_{\H_\Phi}^2 - \ExP^X \left[ \sprod{\phi}{\Phi_X}_{\H_\Phi} \right]^2 \right) \right] 
        \\ & =
        \sprod{\phi}{ \ExP^X \left[ \Phi_X \right] }_{\H_\Phi}  - \frac{\eta}{2} \sprod{\phi}{\left( \ExP^X \left[ \Phi_X \otimes \Phi_X \right] - \ExP^X \left[ \Phi_X \right] \otimes \ExP^X \left[ \Phi_X \right]  \right) \phi }_{\H_\Phi}
    \end{align*}
    so that, letting $C_{\mathbb{P}} := \ExP^X \left[ \Phi_X \otimes \Phi_X \right] - \ExP^X \left[ \Phi_X \right] \otimes \ExP^X \left[ \Phi_X \right]$ be the covariance operator, the optimisation becomes over
    \begin{align*} 
    \sprod{\phi}{ \ExP^X \left[ \Phi_X \right] }_{\H_\Phi}  - \frac{\eta}{2} \sprod{\phi}{ C_{\mathbb{P}} \phi }_{\H_\Phi} - \frac{\lambda}{2} \sprod{\phi}{ \phi }_{\H_\Phi}.
    \end{align*}
    
    Define then the map $\iota: L^2_{\mathbb{P}}(\mathcal{X}) \to \H_\Phi$ as
    \begin{equation*}
        \alpha \mapsto \ExP^X[\alpha(X)\Phi_X].
    \end{equation*}
    From \cite[Proposition 4]{Cirone2025RoughHedging} we see that any maximiser,  which exists since $\eta C_{\mathbb{P}} + \lambda \textup{Id}$ is self-adjoint, coercive and by moment assumption \emph{coercive}, must lie in the closure of $\H_{\mathbb{P}} := \iota L^2_{\mathbb{P}}(\mathcal{X})$. 
    In $\H_{\mathbb{P}}$ this problem becomes, noting $\iota 1 = \ExP^X[\Phi_X]$, the optimisation of
    \begin{align*} 
    \sprod{\alpha}{(\iota^\top \circ \iota) 1 }_{L^2_{\mathbb{P}}}  - \frac{\eta}{2} \sprod{\alpha}{ \left( \iota^\top \circ C_{\mathbb{P}} \circ \iota \right) \alpha }_{L^2_{\mathbb{P}}}
    - \frac{\lambda}{2} \sprod{\alpha}{(\iota^\top \circ \iota) \alpha }_{L^2_{\mathbb{P}}}.
    \end{align*}

    Consider $\iota^\top \circ \iota$, by \cite[Lemma 3]{Cirone2025RoughHedging} we see that this map is self-adjoint, that
    \begin{equation*}
        (\iota^\top \circ \iota) \alpha (X) = \Omega\alpha(X) = \ExP^Y \left[ \alpha(Y) \sprod{\Phi_X}{\Phi_Y} \right] = \ExP^Y \left[ \alpha(Y) \K_{\Phi}(X,Y) \right]
    \end{equation*}
    and also 
    \begin{equation*}
        \iota^\top \circ C_{\mathbb{P}} \circ \iota  = \Omega^{\circ 2} - \Omega1 \otimes \Omega1 = \Omega \circ (\textup{Id} - 1 \otimes 1) \circ \Omega.
    \end{equation*}  
    Hence, we can write the objective as
    \begin{align*} 
    \sprod{\alpha}{\Omega 1 }_{L^2_{\mathbb{P}}}  - \frac{1}{2} \sprod{\alpha}{ \Omega \circ \left(\eta \, (\textup{Id} - 1 \otimes 1) \circ \Omega+ \lambda \, \textup{Id}\right) \alpha }_{L^2_{\mathbb{P}}}.
    \end{align*}

    The assumptions on $\Omega$ guarantee that the solution to the optimisation is given by
    \begin{equation*}
        \alpha^* = \left[ \Omega \circ \left(\eta \, (\textup{Id} - 1 \otimes 1) \circ \Omega+ \lambda \, \textup{Id}\right)\right]^{-1} \Omega 1
        = \left(\eta \, (\textup{Id} - 1 \otimes 1) \circ \Omega + \lambda \, \textup{Id}\right)^{-1} 1
    \end{equation*}
    and the conclusion follows since 
    \[
    \Xi_{\mathbb{P}}^\Phi =  (\textup{Id} - 1 \otimes 1) \circ \Omega \,.
    \]
\end{proof}
% }

\subsection{Proof of Theorem~\ref{thm:3-spectral_alpha}} \label{sec:spectral_alpha_proof}

\alphathm*

\begin{proof}
We start by rewriting the system
\begin{align*}
    \left( \lambda \text{Id}_N + \eta \Xi^\Phi_{\P} \right)(\alpha^*) = \boldsymbol{1}_N,    
\end{align*}
and observe that \(\boldsymbol{1}_N = \sqrt{N} \mathbf{e}_N\), and that
\[
\Xi^\Phi_{\P} = \frac{1}{N} \K_{\Phi}(\bfX, \bfX)\left( \textup{Id}_N - \mathbf{e}_N \mathbf{e}_N^\top \right).
\]
We can define the unit-norm vector $\mathbf{e}_N := \sqrt{N}^{-1} (1,1,\dots,1)^{\top}$, then we have $\frac{1}{N} \boldsymbol{1}_N \boldsymbol{1}_N^\top =\mathbf{e}_N \mathbf{e}_N^\top$. Now, defining
\[
A := \lambda \textup{Id}_N + \frac{\eta}{N} \K_{\Phi}(\bfX, \bfX),
\]
we can express the LHS of the system as
\begin{align*}
\left(\lambda \textup{Id}_N + \eta \Xi^\Phi_{\P} \right) & = \lambda \textup{Id}_N + \frac{\eta}{N} \K_{\Phi}(\bfX, \bfX) \left(\textup{Id}_N - \boldsymbol{e}_N \boldsymbol{e}_N^\top \right) \\
&  = \lambda \textup{Id}_N + \frac{\eta}{N} \K_{\Phi}(\bfX, \bfX) \left(\textup{Id}_N - \boldsymbol{e}_N \boldsymbol{e}_N^\top \right) + \lambda \boldsymbol{e}_N \boldsymbol{e}_N^\top - \lambda \boldsymbol{e}_N \boldsymbol{e}_N^\top \\
& = \lambda \textup{Id}_N (\textup{Id} - \boldsymbol{e}_N \boldsymbol{e}_N^\top) + \frac{\eta}{N} \K_{\Phi}(\bfX, \bfX) \left(\textup{Id}_N - \boldsymbol{e}_N \boldsymbol{e}_N^\top \right) + \lambda \boldsymbol{e}_N \boldsymbol{e}_N^\top \\
& = \left( \lambda \textup{Id}_N + \frac{\eta}{N} \K_{\Phi}(\bfX, \bfX) \right)(\textup{Id}_N - \boldsymbol{e}_N \boldsymbol{e}_N^\top) + \lambda \boldsymbol{e}_N \boldsymbol{e}_N^\top \\
& = A(\textup{Id}_N - \boldsymbol{e}_N \boldsymbol{e}_N^\top) + \lambda \boldsymbol{e}_N \boldsymbol{e}_N^\top.
\end{align*}
Hence, we are able to express the system as 
\begin{align*}
    \left(\lambda \mathbf{e}_N \mathbf{e}_N^\top + A (\textup{Id}_N - \mathbf{e}_N \mathbf{e}_N^\top) \right) (\alpha^*) = \sqrt{N} \mathbf{e}_N.
\end{align*}
Applying the rank-one inverse identity:
\[
\left[ \lambda \mathbf{v}\mathbf{v}^{\top} + A(\textup{Id}_N - \mathbf{v}\mathbf{v}^{\top}) \right ]^{-1} \mathbf{v} = \frac{A^{-1} \mathbf{v}}{\lambda \mathbf{v}^{\top} A^{-1} \mathbf{v}},
\]
with \(\mathbf{v} = \mathbf{e}_N\), we obtain
\[
\alpha^* = \frac{ \sqrt{N} A^{-1} \mathbf{e}_N }{ \lambda \mathbf{e}_N^\top A^{-1} \mathbf{e}_N }.
\]
If \(A = \sum_{k=1}^N \gamma_k \mathbf{u}_k \mathbf{u}_k^\top\), then
\[
A^{-1} \mathbf{e}_N = \sum_{k=1}^N \frac{1}{\gamma_k} (\mathbf{u}_k^\top \mathbf{e}_N) \mathbf{u}_k,
\quad \text{and} \quad
\mathbf{e}_N^\top A^{-1} \mathbf{e}_N = \sum_{k=1}^N \frac{(\mathbf{u}_k^\top \mathbf{e}_N)^2}{\gamma_k},
\]
which yields the final expression.
\end{proof}

\section{Mean-Variance with Stochastic Drift} \label{sec:appx-solution-variance}

\subsection{The Underlying Dynamics}

In Section~\ref{subsec:stoch_drift_markowitz}, we consider an example in which we have a multi-dimensional underlying asset process that is composed of a stochastic drift component and a martingale component with static covariance structure. That is, we consider the $d$-dimensional asset process $X = (X^1,\dots,X^d)$, where
\begin{align*}
    dX_t = \mu_t dt + dM_t
\end{align*}
where $M$ is a continuous martingale such that
\begin{align*}
d[M^m, M^n]_t = \Sigma_{mn} dt, \quad \text{for } m, n = 1, \dots, d,
\end{align*}
where \( \Sigma\in \mathbb{R}^{d \times d} \) is a symmetric nonnegative definite covariance matrix. 
\subsection{The Objective}
Suppose we wish to maximise the mean-variance objective with terminal variance penalty,
\begin{align*}
   J(\xi) = \Ex\left[ \int^T_0 \xi_t dX_t \right] - \frac{\eta}{2}\text{Var}\left[ \int^T_0 \xi_t dX_t \right].
\end{align*}
Using the dynamics of $X$, we can reconstruct the objective as,
\begin{align*}
     \notag J(\xi) & = \Ex \left[ \int^T_0 \xi_t dX_t \right] - \frac{\eta}{2} \text{Var} \left[ \int^T_0 \xi_t dX_t \right] \\
    \notag & = \Ex \left[ \int^T_0 \xi_t^\top \mu_t dt\right] - \frac{\eta}{2} \text{Var} \left[ \int^T_0 \xi_t^\top \mu_t dt +\int^T_0 \xi_t^\top dM_t  \right] \\
     & =  \underbrace{\Ex \left[ \int^T_0 \xi_t^\top \mu_t dt\right]}_{(1)} - \frac{\eta}{2}\underbrace{ \Ex \left[ \int_0^T \xi_t^\top \Sigma \xi_t  dt \right]}_{(2)} - \frac{\eta}{2}\underbrace{ \text{Var} \left[ \int^T_0 \xi_t^\top \mu_t dt \right]}_{(3)} \\
     & \quad \quad - \eta\underbrace{ \text{Cov}\left(\int^T_0 \xi_t^\top \mu_t dt, \int^T_0 \xi_t^\top dM_t \right)}_{(4)}.
\end{align*}
We can observe the following:
\begin{itemize}
\vspace{-0.2cm}
\itemsep-0.5em
    \item Term (3) is zero if and only if $\mu_t$ and $\xi_t$ are deterministic, which they are not in our setting.
    \item Term (4) vanishes if $\mu_t$ is independent of the martingale component $M$. For instance if $\mu_t$ is adapted to a filtration $\F_t$ independent of $M_t$. This would be the case, for example, if $\mu_t$ were an exogenous signal (e.g. from an inhomogeneous OU process independent of $X$). However, in the standard OU process setting where $\mu_t = \kappa(\theta - X_t)$, the drift depends on $X_t$, which is itself driven by $M$, introducing correlation between $\mu_t dt$ and $dM_t$.
\end{itemize}

\subsection{The System to Solve}
Now, in order to maximise $J$ with respect to $\xi$, we can compute the Gateau derivative of $J$ as 
\begin{align*}
    \langle J'(\xi), v \rangle = \lim_{\varepsilon \to 0} \frac{J(\xi + \varepsilon v) - J(\xi)}{\varepsilon}.
\end{align*}
To obtain $J(\xi + \epsilon v)$, we first expand (1), obtaining
\begin{align*}
    \Ex \left[ \int^T_0 \xi_t^\top \mu_t dt \right] + \varepsilon \Ex \left[ \int^T_0 v_t^\top \mu_t dt \right],
\end{align*}
expanding (2), we obtain
\begin{align*}
    \Ex \left[ \int_0^T \xi_t^\top \Sigma \xi_t dt \right] + 2\varepsilon \Ex \left[ \int_0^T v_t^\top \Sigma \xi_t dt \right] + \mathcal{O}(\varepsilon^2),
\end{align*}
for (3), we have
\begin{align*}
    & \text{Var} \left(\int^T_0 \xi_t^\top \mu_t dt\right) + 2\varepsilon \text{Cov} \left( \int_0^T \xi_t^\top \mu_t dt, \int_0^T  v_s^\top \mu_s  ds \right) + \mathcal{O}(\varepsilon^2) 
\end{align*}
where the covariance term can be expressed as
\begin{align*}
    \text{Cov} \left( \int_0^T \xi_s^\top \mu_s \, ds,\; \int_0^T  v_t^\top \mu_t \, dt \right)
    & = \Ex \left[ \int_0^T \int_0^T (\xi_s^\top \mu_s)(v_t^\top \mu_t) \, dt \, ds \right]
     \\
     & \quad \, \, \, - \Ex \left[ \int_0^T \xi_s^\top \mu_s \, ds \right] \Ex\left[ \int_0^T  v_t^\top \mu_t \, dt \right] \\
    & = \Ex \left[ \int_0^T v_t^\top \left( \mu_t \int_0^T \xi_s^\top \mu_s \, ds \right) dt \right] \\
     & \quad \, \, \, - \Ex \left[ \int_0^T v_t^\top \left( \mu_t \Ex \left[ \int_0^T \xi_s^\top \mu_s \, ds \right] \right) dt \right] \\
    & = \Ex \left[ \int_0^T v_t^\top \left( \mu_t \int_0^T \left( \xi_s^\top \mu_s - \Ex[\xi_s^\top \mu_s] \right) ds \right) dt \right].
\end{align*}
using Fubini. Finally, for (4), we obtain
\begin{align*}
    \text{Cov} \left( \int_0^T \xi_t^\top \mu_t dt,\ \int_0^T \xi_t^\top dM_t \right)
	+ \varepsilon \,\text{Cov}\left( \int_0^T v_t^\top \mu_t dt,\ \int_0^T \xi_t^\top dM_t \right) \\
	+ \varepsilon\, \text{Cov} \left( \int_0^T \xi_t^\top \mu_t dt,\ \int_0^T v_t^\top dM_t \right)
	+ \mathcal{O}(\varepsilon^2)
\end{align*}
where we can rewrote the first order epsilon terms as 
\begin{align*}
    \text{Cov}\left( \int_0^T v_t^\top \mu_t dt,\ \int_0^T \xi_t^\top dM_t \right) & = \Ex \left[ \int_0^T \xi_s^\top \left( \int_0^T v_t^\top \mu_t \, dt \right) dM_s \right]  \\
     \text{Cov} \left( \int_0^T \xi_t^\top \mu_t dt,\ \int_0^T v_t^\top dM_t \right) & = \Ex \left[ \int_0^T v_s^\top \left( \int_0^T \xi_t^\top \mu_t \, dt \right) dM_s \right] 
\end{align*}

Now, collecting all first-order $\varepsilon$-terms,
\begin{align*}
    \langle J'(\xi), v \rangle & = \Ex \left[ \int^T_0 v_t^\top \left( \mu_t - \eta \Sigma \xi_t \right) dt \right] - \frac{\eta}{2}
    \Ex \left[ \int_0^T v_t^\top \left( \mu_t \int_0^T \left( \xi_s^\top \mu_s - \Ex[\xi_s^\top \mu_s] \right) ds \right) dt \right] \\
    & \quad \, \, - \eta \Ex \left[ \int_0^T \left[ v_s^\top \left( \int_0^T \xi_t^\top \mu_t \, dt \right) + \xi_s^\top \left( \int_0^T v_t^\top \mu_t \, dt \right) \right]  dM_s \right] 
\end{align*}
We require this to be equal to zero for all perturbations $v_t$, conditional on the filtration $\F_t$.
Now, since $v \in L^2([0,T]; \mathbb{R}^d)$ is an arbitrary perturbation, we are able to express this first order condition as
\begin{align*}
    \Ex \left[ \int^T_0 v_t^\top \left( \mu_t - \eta \Sigma \xi_t - \frac{\eta}{2} \mu_t \int_0^T \left( \xi_s^\top \mu_s - \Ex[\xi_s^\top \mu_s] \right) ds\right) dt \right] & = \Ex \left[ \int_0^T v_s^\top \left( \int_0^T \xi_t^\top \mu_t \, dt \right) dM_s \right] \\
    & \ \ + \Ex \left[ \int_0^T \xi_s^\top \left( \int_0^T v_t^\top \mu_t \, dt \right) dM_s \right].
\end{align*}
Unfortunately this is not straightforward to solve due to the additional martingale term on the right hand side. If, term (4) were to be equal to zero (in the case of independent $\mu, M$) then we would be required to solve for
the pointwise condition
\begin{align*}
    \Ex_t[\mu_t] - \eta \Sigma \xi_t - \frac{\eta}{2} \Ex_t[\mu_t] \int_0^T \left( \Ex_t[\xi_s^\top \mu_s] - \Ex[\xi_s^\top \mu_s] \right) ds = 0 \quad \forall t\in[0,T],
\end{align*}
which is an integral equation for $\xi_t$.
We distinguish that the conditional expectation at time $t$ is denoted using the subscript $t$ as $\Ex_t[\cdot] = \Ex[\cdot \, | \, \F_t]$, where as the conditional expectation includes no subscript.
% which is a Fredholm integral equation of the second kind for $\xi_t$. 
Clearly, we can see that if the remainder covariance term is zero, then we recover the classical solution of Markowitz in \eqref{eq:3-markowitz_optimal}, however this is only true if the drift or our trading strategy is deterministic which it is not. The optimal solution therefore in this case depends on maximising the expected return through the drift $\mu_t$, but also requires to anticipate future moves in the drift term $\mu_s, s\in[t,T]$ (i.e its decay), as well as future changes in its position, in order to reduce the variance of the terminal PnL. 

\end{appendices}
\printbibliography[title={References}]

@article{Cuturi2007AAlignments.,
    title = {{A Kernel for Time Series Based on Global Alignments.}},
    year = {2007},
    journal = {ICASSP, IEEE International Conference on Acoustics, Speech and Signal Processing - Proceedings},
    author = {Cuturi, Marco and Vert, Jean Philippe and Birkenes, Øystein and Matsui, Tomoko},
    pages = {413--416},
    volume = {2},
    isbn = {1424407281},
    doi = {10.1109/ICASSP.2007.366260},
    issn = {15206149},
    arxivId = {cs/0610033},
    keywords = {Dynamic time warping, Kernel methods, Speech recognition, Support vector machines}
}

@article{Toth2025AKernel,
    title = {{A User's Guide to {\$}{\textbackslash}texttt{\{}KSig{\}}{\$}: GPU-Accelerated Computation of the Signature Kernel}},
    year = {2025},
    author = {T{\'{o}}th, Csaba and Cruz, Danilo Jr Dela and Oberhauser, Harald},
    month = {1},
    url = {https://arxiv.org/abs/2501.07145v2},
    arxivId = {2501.07145}
}

@unpublished{Toth2025AKernelb,
    title = {{A User's Guide to KSig: GPU-Accelerated Computation of the Signature Kernel}},
    year = {2025},
    author = {T{\'{o}}th, Csaba and Cruz, Danilo Jr Dela and Oberhauser, Harald},
    month = {1},
    url = {https://arxiv.org/abs/2501.07145v2},
    arxivId = {2501.07145}
}

@article{Toth2011AnomalousMarkets,
    title = {{Anomalous Price Impact and the Critical Nature of Liquidity in Financial Markets}},
    year = {2011},
    journal = {Physical Review X},
    author = {T{\'{o}}th, B. and Lemp{\'{e}}ri{\`{e}}re, Y. and Deremble, C. and De Lataillade, J. and Kockelkoren, J. and Bouchaud, J. P.},
    number = {2},
    month = {10},
    pages = {1--11},
    volume = {1},
    publisher = {American Physical Society},
    url = {https://journals.aps.org/prx/abstract/10.1103/PhysRevX.1.021006},
    doi = {10.1103/PHYSREVX.1.021006/FIGURES/6/MEDIUM},
    issn = {21603308},
    arxivId = {1105.1694},
    keywords = {Complex Systems, Interdisciplinary Physics, Statistical Physics}
}

@article{Zhou2000Continuous-timeFramework,
    title = {{Continuous-time mean-variance portfolio selection: A stochastic LQ framework}},
    year = {2000},
    journal = {Applied Mathematics and Optimization},
    author = {Zhou, X. Y. and Li, D.},
    number = {1},
    month = {7},
    pages = {19--33},
    volume = {42},
    publisher = {Springer-Verlag New York Inc.},
    url = {https://link.springer.com/article/10.1007/s002450010003},
    doi = {10.1007/S002450010003/METRICS},
    issn = {00954616},
    keywords = {Calculus of Variations and Optimal Control, Control, Mathematical Methods in Physics, Mathematical and Computational Physics, Numerical and Computational Physics, Optimization, Simulation, Systems Theory, Theoretical}
}

@techreport{Haussler1999ConvolutionStructures,
    title = {{Convolution Kernels on Discrete Structures}},
    year = {1999},
    author = {Haussler, David},
    institution = {University of California at Santa Cruz}
}

@article{Benzaquen2016DissectingAnalysis,
    title = {{Dissecting cross-impact on stock markets: An empirical analysis}},
    year = {2016},
    journal = {Journal of Statistical Mechanics: Theory and Experiment},
    author = {Benzaquen, Michael and Mastromatteo, Iacopo and Eisler, Zoltan and Bouchaud, Jean-Philippe},
    number = {2},
    month = {9},
    volume = {2017},
    publisher = {Institute of Physics Publishing},
    url = {http://arxiv.org/abs/1609.02395 http://dx.doi.org/10.1088/1742-5468/aa53f7},
    doi = {10.1088/1742-5468/aa53f7},
    arxivId = {1609.02395v2},
    keywords = {market impact, market microstructure}
}

@article{Cartea2022Double-ExecutionSignatures,
    title = {{Double-Execution Strategies Using Path Signatures}},
    year = {2022},
    journal = {https://doi.org/10.1137/21M1456467},
    author = {Cartea, Álvaro and Arribas, Imanol Pérez and S{\'{a}}nchez-Betancourt, Leandro},
    number = {4},
    month = {11},
    pages = {1379--1417},
    volume = {13},
    publisher = {Society for Industrial and Applied Mathematics},
    url = {https://epubs.siam.org/doi/10.1137/21M1456467},
    doi = {10.1137/21M1456467},
    issn = {1945497X},
    keywords = {60G99, 91G60, 91G80, 93E20, cryptocurrency, high-frequency trading, machine learning, optimal execution, price impact, signals, signatures}
}

@article{Basak2011DynamicSolution,
    title = {{Dynamic Hedging in Incomplete Markets: A Simple Solution}},
    year = {2011},
    journal = {SSRN Electronic Journal},
    author = {Basak, Suleyman and Chabakauri, Georgy},
    month = {5},
    publisher = {Elsevier BV},
    url = {https://papers.ssrn.com/abstract=1297182},
    doi = {10.2139/SSRN.1297182},
    keywords = {Dynamic Hedging in Incomplete Markets: A Simple Solution, Georgy Chabakauri, Hedging, Poisson jumps, SSRN, Suleyman Basak, benchmarking, correlation risk, derivatives, discrete hedging, incomplete markets, minimum-variance criterion, risk management, time-consistency}
}

@techreport{Kim1996DynamicBehavior,
    title = {{Dynamic Nonmyopic Portfolio Behavior}},
    year = {1996},
    booktitle = {Source: The Review of Financial Studies},
    author = {Kim, Tong Suk and Omberg, Edward},
    number = {1},
    pages = {141--161},
    volume = {9},
    url = {http://www.jstor.orgURL:http://www.jstor.org/stable/2962368}
}

@article{Shimodaira2001DynamicMachine,
    title = {{Dynamic Time-Alignment Kernel in Support Vector Machine}},
    year = {2001},
    journal = {Advances in Neural Information Processing Systems},
    author = {Shimodaira, Hiroshi and Noma, Ken-ichi and Nakai, Mitsuru and Sagayama, Shigeki},
    volume = {14}
}

@article{Garleanu2013DynamicCosts,
    title = {{Dynamic Trading with Predictable Returns and Transaction Costs}},
    year = {2013},
    journal = {Journal of Finance},
    author = {G{\^{a}}rleanu, Nicolae and Pedersen, Lasse Heje},
    number = {6},
    month = {12},
    pages = {2309--2340},
    volume = {68},
    doi = {10.1111/jofi.12080},
    issn = {15406261}
}

@misc{Cuchiero2021ExpressiveSignature,
    title = {{Expressive Power of Randomized Signature}},
    year = {2021},
    author = {Cuchiero, Christa and Gonon, Lukas and Grigoryeva, Lyudmila and Ortega, Juan-Pablo and Teichmann, Josef},
    month = {9}
}

@article{Cass2021GeneralKernels,
    title = {{General Signature Kernels}},
    year = {2021},
    author = {Cass, Thomas and Lyons, Terry and Xu, Xingcheng},
    month = {7},
    url = {http://arxiv.org/abs/2107.00447},
    arxivId = {2107.00447}
}

@book{Webster2024HandbookModeling,
    title = {{Handbook of Price Impact Modeling}},
    year = {2024},
    booktitle = {Quantitative Finance},
    author = {Webster, Kevin T.},
    number = {2},
    month = {2},
    pages = {201--202},
    volume = {24},
    publisher = {Routledge},
    url = {https://www.tandfonline.com/doi/abs/10.1080/14697688.2023.2288868},
    isbn = {9781032328225},
    doi = {10.1080/14697688.2023.2288868},
    issn = {1469-7688}
}

@article{Ohara2015High,
    title = {{High frequency market microstructure {\$}}},
    year = {2015},
    journal = {Journal of Financial Economics},
    author = {O'hara, Maureen},
    pages = {1--14},
    url = {http://dx.doi.org/10.1016/j.jfineco.2015.01.003},
    doi = {10.1016/j.jfineco.2015.01.003},
    keywords = {Algorithmic trading, High frequency trading, Market microstructure}
}

@article{Lehalle2019IncorporatingTrading,
    title = {{Incorporating Signals into Optimal Trading}},
    year = {2019},
    journal = {Finance and Stochastics},
    author = {Lehalle, Charles-Albert and Neuman, Eyal},
    month = {4},
    pages = {275--311},
    volume = {23},
    url = {http://arxiv.org/abs/1704.00847},
    arxivId = {1704.00847}
}

@techreport{Kiraly2019KernelsData,
    title = {{Kernels for Sequentially Ordered Data}},
    year = {2019},
    booktitle = {Journal of Machine Learning Research},
    author = {Kir{\'{a}}ly, Franz J and Oberhauser, Harald},
    pages = {1--45},
    volume = {20},
    url = {http://jmlr.org/papers/v20/16-314.html.},
    keywords = {Sequential data, kernels, ordered moments, signature, signature kernels}
}

@article{Jaber2021MarkowitzModels,
    title = {{Markowitz portfolio selection for multivariate affine and quadratic volterra models}},
    year = {2021},
    journal = {SIAM Journal on Financial Mathematics},
    author = {Jaber, Eduardo Abi and Miller, Enzo and Pham, Huyen},
    number = {1},
    month = {3},
    pages = {369--409},
    volume = {12},
    publisher = {Society for Industrial and Applied Mathematics Publications},
    doi = {10.1137/20M1347449},
    issn = {1945497X},
    arxivId = {2006.13539},
    keywords = {Correlation matrices, Mean-variance portfolio theory, Multidimensional Volterra process, Non-Markovian Heston model, Riccati equations, Rough volatility, Stein-Stein model, Wishart model}
}

@article{AbiJaber2021MarkowitzModels,
    title = {{Markowitz Portfolio Selection for Multivariate Affine and Quadratic Volterra Models}},
    year = {2021},
    journal = {SIAM Journal on Financial Mathematics},
    author = {Abi Jaber, Eduardo and Miller, Enzo and Pham, Huyen},
    number = {1},
    month = {3},
    pages = {369--409},
    volume = {12},
    publisher = {Society for Industrial and Applied Mathematics},
    url = {https://epubs.siam.org/doi/10.1137/20M1347449},
    doi = {10.1137/20M1347449},
    issn = {1945497X},
    arxivId = {2006.13539},
    keywords = {60G22, 60H10, 93E20, Riccati equations, Stein--Stein model, Wishart model, correlation matrices, mean-variance portfolio theory, multidimensional Volterra process, non-Markovian Heston model, rough volatility}
}

@article{Lim2002Mean-VarianceMarket,
    title = {{Mean-Variance Portfolio Selection with Random Parameters in a Complete Market}},
    year = {2002},
    journal = {Mathematics of Operations Research},
    author = {Lim, Andrew E.B. and Zhou, Xun Yu},
    number = {1},
    month = {2},
    pages = {101--120},
    volume = {27},
    publisher = {INFORMS},
    url = {https://pubsonline.informs.org/doi/abs/10.1287/moor.27.1.101.337},
    doi = {10.1287/MOOR.27.1.101.337},
    issn = {0364765X},
    keywords = {Dynamic mean-variance portfolio selection, backward stochastic differential equation, efficient frontier, stochastic Riccati equation, stochastic linear-quadratic optimal control}
}

@article{Lefebvre2020Mean-variancePenalization,
    title = {{Mean-variance portfolio selection with tracking error penalization}},
    year = {2020},
    journal = {Mathematics},
    author = {Lefebvre, William and Loeper, Grégoire and Pham, Huyên},
    number = {11},
    month = {9},
    pages = {1--23},
    volume = {8},
    publisher = {MDPI AG},
    url = {https://arxiv.org/abs/2009.08214v2},
    doi = {10.3390/math8111915},
    issn = {22277390},
    arxivId = {2009.08214},
    keywords = {Continuous-time mean-variance problem, Parameter misspecification, Robustified allocation, Tracking error}
}

@article{Shen2015MeanvarianceCoefficients,
    title = {{Mean–variance portfolio selection in a complete market with unbounded random coefficients}},
    year = {2015},
    journal = {Automatica},
    author = {Shen, Yang},
    month = {5},
    pages = {165--175},
    volume = {55},
    publisher = {Pergamon},
    doi = {10.1016/J.AUTOMATICA.2015.03.009},
    issn = {0005-1098},
    keywords = {Backward stochastic differential equations, Exponential integrability, Matrix exponential, Mean-variance portfolio selection, Sturm-Liouville theory, Unbounded random coefficients}
}

@article{Han2021MeanVarianceModel,
    title = {{Mean–Variance Portfolio Selection Under Volterra Heston Model}},
    year = {2021},
    journal = {Applied Mathematics and Optimization},
    author = {Han, Bingyan and Wong, Hoi Ying},
    number = {1},
    month = {8},
    pages = {683--710},
    volume = {84},
    publisher = {Springer},
    url = {https://link.springer.com/article/10.1007/s00245-020-09658-3},
    doi = {10.1007/S00245-020-09658-3/FIGURES/5},
    issn = {14320606},
    arxivId = {1904.12442},
    keywords = {Mean–variance portfolio, Riccati–Volterra equations, Rough volatility, Volterra Heston model}
}

@misc{Databento2025NasdaqMSFT.,
    title = {{Nasdaq TotalView-ITCH (MSFT).}},
    year = {2025},
    author = {{Databento}}
}

@article{Cirone2023NeuralResNets,
    title = {{Neural signature kernels as infinite-width-depth-limits of controlled ResNets}},
    year = {2023},
    journal = {International Conference on Machine Learning},
    author = {Cirone, Nicola Muca and Lemercier, M. and Salvi, C.},
    doi = {10.48550/ARXIV.2303.17671}
}

@article{Drineas2005OnLearning,
    title = {{On the Nystr{\"{o}}m Method for Approximating a Gram Matrix for Improved Kernel-Based Learning}},
    year = {2005},
    journal = {Journal of Machine Learning Research},
    author = {Drineas, Petros and Mahoney, Michael W},
    pages = {2153--2175},
    volume = {6},
    keywords = {Gram matrix, Nystr{\"{o}}m method, kernel methods, randomized algorithms}
}

@article{Li2000OptimalFormulation,
    title = {{Optimal Dynamic Portfolio Selection: Multiperiod Mean-Variance Formulation}},
    year = {2000},
    journal = {Mathematical Finance},
    author = {Li, Duan and Ng, Wan Lung},
    number = {3},
    month = {7},
    pages = {387--406},
    volume = {10},
    publisher = {John Wiley {\&} Sons, Ltd},
    url = {https://onlinelibrary.wiley.com/doi/full/10.1111/1467-9965.00100 https://onlinelibrary.wiley.com/doi/abs/10.1111/1467-9965.00100 https://onlinelibrary.wiley.com/doi/10.1111/1467-9965.00100},
    doi = {10.1111/1467-9965.00100},
    issn = {1467-9965},
    keywords = {multiperiod mean, multiperiod portfolio selection, utility function, variance formulation}
}

@article{Kalsi2020OptimalSignatures,
    title = {{Optimal execution with rough path signatures}},
    year = {2020},
    journal = {SIAM Journal on Financial Mathematics},
    author = {Kalsi, Jasdeep and Lyons, Terry and Arribas, Imanol Perez},
    number = {2},
    month = {5},
    volume = {11},
    url = {http://arxiv.org/abs/1905.00728},
    arxivId = {1905.00728}
}

@article{Jaber2025OptimalCase,
    title = {{Optimal Liquidation With Signals: The General Propagator Case}},
    year = {2025},
    journal = {Mathematical Finance},
    author = {Jaber, Eduardo Abi and Neuman, Eyal},
    month = {6},
    publisher = {John Wiley {\&} Sons, Ltd},
    url = {https://onlinelibrary.wiley.com/doi/full/10.1111/mafi.12465 https://onlinelibrary.wiley.com/doi/abs/10.1111/mafi.12465 https://onlinelibrary.wiley.com/doi/10.1111/mafi.12465},
    doi = {10.1111/MAFI.12465},
    issn = {1467-9965},
    keywords = {Volterra stochastic control, optimal portfolio liquidation, predictive signals, price impact, propagator models}
}

@article{Zhu2021OptimalObjective,
    title = {{Optimal pairs trading with dynamic mean-variance objective}},
    year = {2021},
    journal = {Mathematical Methods of Operations Research},
    author = {Zhu, Dong-Mei and Gu, Jia-Wen and Yu, · Feng-Hui and Siu, Tak-Kuen and Ching, Wai-Ki},
    pages = {145--168},
    volume = {94},
    url = {https://doi.org/10.1007/s00186-021-00751-z},
    doi = {10.1007/s00186-021-00751-z},
    keywords = {Dynamic mean-variance (MV), Ornstein-Uhlenbeck (OU), Pairs trading, Time inconsistency}
}

@unpublished{Jaber2024OptimalPropagators,
    title = {{Optimal Portfolio Choice with Cross-Impact Propagators}},
    year = {2024},
    author = {Jaber, Eduardo Abi and Neuman, Eyal and Tuschmann, Sturmius},
    month = {3},
    url = {http://arxiv.org/abs/2403.10273},
    arxivId = {2403.10273}
}

@article{Guyon2014Path-DependentVolatility,
    title = {{Path-Dependent Volatility}},
    year = {2014},
    journal = {Risk},
    author = {Guyon, Julien},
    month = {9},
    publisher = {Elsevier BV},
    url = {https://papers.ssrn.com/abstract=2425048},
    doi = {10.2139/SSRN.2425048},
    keywords = {ARCH/GARCH models, Julien Guyon, Option pricing, Path-Dependent Volatility, SSRN, complete models, implied volatility dynamics, particle method, path-dependent volatility, smile calibration}
}

@techreport{Markowitz1952PortfolioSelection,
    title = {{Portfolio Selection}},
    year = {1952},
    booktitle = {The Journal of Finance},
    author = {Markowitz, Harry},
    number = {1},
    pages = {77--91},
    volume = {7}
}

@article{Lim2004QuadraticMarket,
    title = {{Quadratic Hedging and Mean-Variance Portfolio Selection with Random Parameters in an Incomplete Market}},
    year = {2004},
    journal = {Mathematics of Operations Research},
    author = {Lim, Andrew E.B.},
    number = {1},
    month = {2},
    pages = {132--161},
    volume = {29},
    publisher = {INFORMS},
    url = {https://pubsonline.informs.org/doi/abs/10.1287/moor.1030.0065},
    doi = {10.1287/MOOR.1030.0065},
    issn = {0364765X},
    keywords = {Mutual Fund Theorem, backward stochastic differential equations, efficient frontier, incomplete markets, linear-quadratic optimal control, mean-variance portfolio selection, quadratic hedging, stochastic Riccati equation}
}

@article{Compagnoni2022RandomizedData,
    title = {{Randomized Signature Layers for Signal Extraction in Time Series Data}},
    year = {2022},
    author = {Compagnoni, Enea Monzio and Biggio, Luca and Orvieto, Antonio and Hofmann, Thomas and Teichmann, Josef},
    month = {1},
    url = {http://arxiv.org/abs/2201.00384},
    arxivId = {2201.00384}
}

@unpublished{Akyildirim2023RandomizedSelection,
    title = {{Randomized Signature Methods in Optimal Portfolio Selection}},
    year = {2023},
    author = {Akyildirim, Erdinc and Gambara, Matteo and Teichmann, Josef and Zhou, Syang},
    month = {12},
    url = {https://arxiv.org/abs/2312.16448v1},
    arxivId = {2312.16448},
    keywords = {C22, Drift estimation, G11, G14, G17, Machine Learning, Path-dependent Signal JEL: C21, Portfolio Optimization, Randomized Signature, Returns forecast}
}

@unpublished{Cirone2025RoughHedging,
    title = {{Rough kernel hedging}},
    year = {2025},
    author = {Cirone, Nicola Muca and Salvi, Cristopher},
    month = {1},
    url = {https://arxiv.org/abs/2501.09683v1},
    arxivId = {2501.09683},
    keywords = {Classification (2020) 46C07 {\textperiodcentered} 60L10 {\textperiodcentered} 60L20, Hedging, Mathematics, Rough paths {\textperiodcentered}, Signature kernels {\textperiodcentered}, Subject}
}

@unpublished{Alden2025SignatureTests,
    title = {{Signature Maximum Mean Discrepancy Two-Sample Statistical Tests}},
    year = {2025},
    author = {Alden, Andrew and Horvath, Blanka and Issa, Zacharia},
    arxivId = {2506.01718v1},
    keywords = {cs.LG, math.DS, stat.ML}
}

@unpublished{Cuchiero2023SignatureTheory,
    title = {{Signature Methods in Stochastic Portfolio Theory}},
    year = {2023},
    author = {Cuchiero, Christa and M{\"{o}}ller, Janka},
    month = {10},
    url = {https://arxiv.org/abs/2310.02322v3},
    arxivId = {2310.02322}
}

@article{Futter2023SignatureSignalsb,
    title = {{Signature Trading: A Path-Dependent Extension of the Mean-Variance Framework with Exogenous Signals}},
    year = {2023},
    journal = {SSRN Electronic Journal},
    author = {Futter, Owen and Horvath, Blanka and Wiese, Magnus},
    month = {8},
    publisher = {Elsevier BV},
    url = {https://papers.ssrn.com/abstract=4541830},
    doi = {10.2139/SSRN.4541830},
    keywords = {Blanka Horvath, Data-Driven Methods, Dynamic Trading Strategies, Magnus Wiese, Mean-Variance Optimisation, Momentum Strategies, Owen Futter, Path-Dependent Signals, SSRN, Signature Methods, Signature Trading: A Path-Dependent Extension of the Mean-Variance Framework with Exogenous Signals, Statistical Arbitrage, Stochastic Filtering}
}

@article{Brokmann2014SlowMarkets,
    title = {{Slow decay of impact in equity markets}},
    year = {2014},
    journal = {Market Microstructure and Liquidity},
    author = {Brokmann, X. and Serie, E. and Kockelkoren, J. and Bouchaud, J. -P.},
    number = {02},
    month = {7},
    pages = {1550007},
    volume = {01},
    publisher = {World Scientific Pub Co Pte Lt},
    url = {https://arxiv.org/abs/1407.3390v1},
    doi = {10.1142/s2382626615500070},
    issn = {2382-6266},
    arxivId = {1407.3390}
}

@article{DeVito2004SomeMethods,
    title = {{Some Properties of Regularized Kernel Methods}},
    year = {2004},
    journal = {Journal of Machine Learning Research},
    author = {De Vito, Ernesto and Rosasco, Lorenzo and Caponnetto, Andrea and Piana, Michele and Verri, Alessandro and De Vito, Alessandro Verri},
    pages = {1363--1390},
    volume = {5},
    keywords = {convex analysis, regularization theory, representer theorem, reproducing kernel Hilbert spaces, statistical learning}
}

@article{Cont2014TheEvents,
    title = {{The Price Impact of Order Book Events}},
    year = {2014},
    journal = {Journal of Financial Econometrics},
    author = {Cont, Rama and Kukanov, Arseniy and Stoikov, Sasha and Cont, Rama and Kukanov, Arseniy and Stoikov, Sasha},
    number = {1},
    pages = {47--88},
    volume = {12},
    publisher = {Oxford University Press},
    url = {https://EconPapers.repec.org/RePEc:oup:jfinec:v:12:y:2014:i:1:p:47-88.},
    doi = {10.1093/JJFINEC/NBT003},
    issn = {1479-8409},
    arxivId = {1011.6402},
    keywords = {high frequency data, limit order book, liquidity, market microstructure, price impact}
}

@article{Vempala2004TheMethod,
    title = {{The random projection method}},
    year = {2004},
    author = {Vempala, Santosh S..},
    pages = {105},
    publisher = {American Mathematical Society},
    isbn = {9780821820186}
}

@article{Salvi2020ThePDE,
    title = {{The Signature Kernel is the solution of a Goursat PDE}},
    year = {2020},
    author = {Salvi, Cristopher and Cass, Thomas and Foster, James and Lyons, Terry and Yang, Weixin},
    month = {6},
    url = {http://arxiv.org/abs/2006.14794 http://dx.doi.org/10.1137/20M1366794},
    doi = {10.1137/20M1366794},
    arxivId = {2006.14794}
}

@unpublished{Hey2023TradingEvidence,
    title = {{Trading with concave price impact and decay - theory and evidence}},
    year = {2023},
    author = {Hey, Natascha and Mastromatteo, Iacopo and Muhle-Karbe, Johannes and Webster, Kevin},
    url = {https://ssrn.com/abstract=4625040}
}

@article{Williams2000UsingMachines,
    title = {{Using the Nystr{\"{o}}m Method to Speed Up Kernel Machines}},
    year = {2000},
    journal = {Advances in Neural Information Processing Systems},
    author = {Williams, Christopher and Seeger, Matthias},
    volume = {13}
}

@unpublished{Guyon2022VolatilityPath-Dependent,
    title = {{Volatility Is (Mostly) Path-Dependent}},
    year = {2022},
    author = {Guyon, Julien and Lekeufack, Jordan},
    url = {https://ssrn.com/abstract=4174589},
    keywords = {4-factor Markovian PDV model, Volatility modeling, empirical PDV model, endogeneity, joint S{\&}P 500/VIX smile calibration, path-dependent volatility, spurious roughness, stochastic volatility}
}

@book{schölkopf2002learning,
  title={Learning with Kernels: Support Vector Machines, Regularization, Optimization, and Beyond},
  author={Sch{\"o}lkopf, B. and Smola, A.J.},
  isbn={9780262194754},
  lccn={2001095750},
  series={Adaptive computation and   machine learning},
  url={https://books.google.co.uk/books?id=y8ORL3DWt4sC},
  year={2002},
  publisher={MIT Press}
}

@InProceedings{pmlr-v139-song21c,
  title = 	 {Fast Sketching of Polynomial Kernels of Polynomial Degree},
  author =       {Song, Zhao and Woodruff, David and Yu, Zheng and Zhang, Lichen},
  booktitle = 	 {Proceedings of the 38th International Conference on Machine Learning},
  pages = 	 {9812--9823},
  year = 	 {2021},
  editor = 	 {Meila, Marina and Zhang, Tong},
  volume = 	 {139},
  series = 	 {Proceedings of Machine Learning Research},
  month = 	 {18--24 Jul},
  publisher =    {PMLR},
  pdf = 	 {http://proceedings.mlr.press/v139/song21c/song21c.pdf},
  url = 	 {https://proceedings.mlr.press/v139/song21c.html},
  abstract = 	 {Kernel methods are fundamental in machine learning, and faster algorithms for kernel approximation provide direct speedups for many core tasks in machine learning. The polynomial kernel is especially important as other kernels can often be approximated by the polynomial kernel via a Taylor series expansion. Recent techniques in oblivious sketching reduce the dependence in the running time on the degree $q$ of the polynomial kernel from exponential to polynomial, which is useful for the Gaussian kernel, for which $q$ can be chosen to be polylogarithmic. However, for more slowly growing kernels, such as the neural tangent and arc cosine kernels, $q$ needs to be polynomial, and previous work incurs a polynomial factor slowdown in the running time. We give a new oblivious sketch which greatly improves upon this running time, by removing the dependence on $q$ in the leading order term. Combined with a novel sampling scheme, we give the fastest algorithms for approximating a large family of slow-growing kernels.}
}

@misc{pannier2024pathdependentpdesolverbased,
      title={A path-dependent PDE solver based on signature kernels}, 
      author={Alexandre Pannier and Cristopher Salvi},
      year={2024},
      eprint={2403.11738},
      archivePrefix={arXiv},
      primaryClass={math.NA},
      url={https://arxiv.org/abs/2403.11738}, 
}

@misc{chen2021accumulationsprojectionsaunifiedframework,
      title={Accumulations of Projections--A Unified Framework for Random Sketches in Kernel Ridge Regression}, 
      author={Yifan Chen and Yun Yang},
      year={2021},
      eprint={2103.04031},
      archivePrefix={arXiv},
      primaryClass={stat.ML},
      url={https://arxiv.org/abs/2103.04031}, 
}

\end{document}